\newtheorem{thm}{Theorem}
\newtheorem{lem}{Lemma}
\newtheorem{prop}{Proposition}
\newtheorem{rem}{Remark}
\newtheorem{corollary}{Corollary}
\begin{document}

\newcommand{\nc}{\newcommand}

\nc{\be}{\begin{equation}}
\nc{\la}{\label}
\nc{\ba}{\begin{array}}
\nc{\ea}{\end{array}}
\nc{\bs}{\begin{split}}
\nc{\es}{\end{split}}

\nc{\R}{\mathbb R}
\nc{\Z}{\mathbb Z}
\nc{\C}{\mathbb C}
\nc{\J}{\mathbb J}

\nc{\p}{\partial}
\nc{\ra}{\rightarrow}
\nc{\ran}{\rangle}
\nc{\lan}{\langle}

\nc{\zb}{\underbar{z}}
\nc{\pb}{\underbar{p}}
\nc{\bb}{\underbar{b}}
\nc{\zbt}{\underbar{z}(t)}

\nc{\dA}{\nabla_A}

\nc{\nj}{(n)}
\nc{\nk}{(n)}
\nc{\nl}{(n)}
\nc{\nr}{(n)}
\nc{\nt}{(n)}

\nc{\vz}{v}
\nc{\psiz}{\psi}
\nc{\Az}{A}
\nc{\Bz}{B_{\zb,\chi}}
\nc{\jz}{j_{\zb,\chi}}
\nc{\Tz}{T}
\nc{\Gz}{G}
\nc{\Lz}{L_0}
\nc{\bLz}{\bar{L}}
\nc{\Pz}{P}
\nc{\bPz}{\bar{P}}
\nc{\wz}{w_{\chi}}

\nc{\E}{{\cal E}}

\nc{\e}{\epsilon}
\nc{\lam}{\lambda}
\nc{\G}{\Gamma}
\nc{\g}{\gamma}
\nc{\al}{\alpha}
\nc{\del}{\delta}
\nc{\Om}{\Omega}
\nc{\Omt}{\tilde{\Omega}}
\nc{\ta}{\tau}
\nc{\w}{\omega}
\nc{\io}{\iota}
\nc{\h}{\theta}
\nc{\z}{\zeta}
\nc{\ka}{\kappa}
\nc{\s}{\sigma}
\nc{\Si}{\Sigma}
\nc{\Lam}{\Lambda}

\nc{\bP}{\bar{P}}
\nc{\bQ}{\bar{Q}}
\nc{\bL}{\bar{L}}

\nc{\chit}{\tilde{\chi}}
\nc{\It}{\tilde{I}}
\nc{\alt}{\tilde{\alpha}}
\nc{\Mt}{\tilde{M}}
\nc{\gt}{\tilde{\gamma}}
\nc{\Jt}{\tilde{J}}

\nc{\bfone}{{\bf 1}}

\newcommand{\cA}{\mathcal{A}}
\newcommand{\cB}{\mathcal{B}}
\newcommand{\cC}{\mathcal{C}}
\newcommand{\cD}{\mathcal{D}}
\newcommand{\cE}{\mathcal{E}}
\newcommand{\cF}{\mathcal{F}}
\newcommand{\cG}{\mathcal{G}}
\newcommand{\cH}{\mathcal{H}}
\newcommand{\cI}{\mathcal{I}}
\newcommand{\cJ}{\mathcal{J}}
\newcommand{\cK}{\mathcal{K}}
\newcommand{\cL}{\mathcal{L}}         
\newcommand{\cM}{\mathcal{M}}         
\newcommand{\cN}{\mathcal{N}}         
\newcommand{\cO}{\mathcal{O}}         
\newcommand{\cP}{\mathcal{P}}         
\newcommand{\cQ}{\mathcal{Q}}
\newcommand{\cR}{\mathcal{R}}
\newcommand{\cS}{\mathcal{S}}
\newcommand{\cT}{\mathcal{T}}
\newcommand{\cU}{\mathcal{U}}
\newcommand{\cV}{\mathcal{V}}
\newcommand{\cW}{\mathcal{W}}
\newcommand{\cX}{\mathcal{X}}
\newcommand{\cY}{\mathcal{Y}}
\newcommand{\cZ}{\mathcal{Z}}
\newcommand{\Zm}{\mathscr{Z}}

\newcommand{\one}{\mathbf{1}}
\newcommand{\id}{\mathbf{1}}
\newcommand{\1}{{\mathbf{1}}}
\newcommand{\const}{\operatorname{const}}

\newcommand{\LAT}{\mathcal{L}}
\newcommand{\Lat}{\mathcal{L}}
\renewcommand{\Re}{\operatorname{Re}}
\renewcommand{\Im}{\operatorname{Im}}
\newcommand{\IM}{\operatorname{Im}}
\newcommand{\RE}{\operatorname{Re}}

\newcommand{\Null}{\operatorname{Null}}
\newcommand{\Ran}{\operatorname{Ran}}
\newcommand{\NULL}{\operatorname{Null}}
\newcommand{\RANGE}{\operatorname{Ran}}
\newcommand{\supp}{\operatorname{Supp}}
\newcommand{\Span}{\operatorname{Span}}
\newcommand{\sign}{\operatorname{sign}}
\renewcommand{\arg}{\textrm{arg}}

\newcommand{\Curl}{\operatorname{curl}}
\newcommand{\curl}{\operatorname{curl}}
\newcommand{\CURL}{\operatorname{curl}}
\newcommand{\Div}{\operatorname{div}}
\renewcommand{\div}{\operatorname{div}}
\newcommand{\DIV}{\operatorname{div}}
\newcommand{\Cov}[1]{\nabla_{\!\!#1}}

\newcommand{\COVGRAD}[1]{\nabla_{\!\!#1}}
\newcommand{\COVLAP}[1]{\Delta_{\!#1}}

\newcommand{\mHs}{\mathscr{H}^{r}(\Omega)}
\newcommand{\mHr}{\mathscr{H}^{r}(\Omega)}
\newcommand{\mH}{\mathscr{H}}
\newcommand{\Hm}[2]{\mathscr{H}^{2}(\Omega)}
\newcommand{\Lpsi}[2]{\mathscr{L}_{#2}^{}(\Omega)}
\newcommand{\Hpsi}[2]{\mathscr{H}_{#2}^{}(\Omega)}
\newcommand{\LA}[2]{\vec{\mathscr{L}}_{}^{}(\Omega)}
\newcommand{\HA}[2]{\vec{\mathscr{H}}_{}^{}(\Omega)}


\newcommand{\CELLAVG}[2]{\left\langle #2 \right\rangle_{#1}}
\newcommand{\DOT}[3]{\left\langle #2, #3 \right\rangle_{#1}}
\newcommand{\NORM}[2]{\left\| #2 \right\|_{#1}}

\newcommand{\TwoByOne}[2]{\left( \begin{array}{c} #1 \\ #2 \end{array} \right)}
\newcommand{\TwoByTwo}[4]{\left( \begin{array}{cc} #1 & #2 \\ #3 & #4 \end{array} \right)}
\newcommand{\ThreeByThree}[9]{\left( \begin{array}{c} #1 & #2  & #3  \\ #4 & #5  & #6  \\ #7 & #8  & #9  \end{array} \right)}
\newcommand{\FourByOne}[4]{\left( \begin{array}{c} #1 \\ #2 \\ #3 \\ #4 \end{array} \right)}

\newcommand{\donothing}[1]{}
\newcommand{\DETAILS}[1]{}

\newcommand{\DATUM}{July 21, 2011}              
\pagestyle{myheadings}                         


\title{Abrikosov Vortex Lattices at Weak Magnetic Fields}

\author{Israel Michael Sigal
\thanks{The corresponding author, im.sigal@utoronto.ca; Dept.~of Math.,
Univ. of Toronto, Toronto, Canada; Supported by NSERC Grant No. NA7901}\\
\and
Tim Tzaneteas
\thanks{ Dept.~of Math.,
Univ. of Braunschweig, Germany} \\
}

\date{}
\maketitle

\begin{abstract}
  We  prove existence of Abrikosov vortex lattice solutions of the Ginzburg-Landau equations in two dimensions,  for magnetic fields larger than but close to the first critical magnetic field. 

\end{abstract}


\section{Introduction}

\subsection{Ginzburg-Landau equations}

In this paper we prove existence of Abrikosov lattice solutions of Ginzburg-Landau equations of superconductivity at weak magnetic fields. In the  Ginzburg-Landau theory the equilibrium configurations are
described by the Ginzburg-Landau equations:
\begin{equation}\la{gle}
   \ba{c}
   -\Delta_A \Psi - \kappa^2 (1-|\Psi|^2) \Psi =0,  \\
   \curl^2 A - \Im ( \bar{\Psi} \nabla_A \Psi )=0,
   \ea
\end{equation}
where  $\Psi :\R^2 \to \C$ is the order parameter,  $A: \R^2 \to \R^2$ is the vector potential of the magnetic field $B(x) := \curl A(x)$, and 
$\nabla_A = \nabla - iA$, and $\Delta_A = -\nabla^*_A\nabla_A$,
the covariant gradient and covariant Laplacian, respectively.  
$|\Psi(x)|^2$ gives the local density of (Cooper pairs of) superconducting electrons and 
the vector-function
$  J(x) := \Im ( \bar{\Psi}(\nabla - iA)\Psi )$, on the r.h.s. of the second equation, is
the superconducting current.

The parameter $\kappa $ is a material constant depending, among other things, on the temperature.  It is 
called the Ginzburg-Landau parameter and it is the ratio of the length scale for $A$ (penetration depth) to the length scale for $\Psi$ (coherence length). The value $\kappa = 1/\sqrt{2}$ divides all superconductors into two groups,  type I superconductors ($\kappa<1/\sqrt{2}$) and type II superconductors ($\kappa>1/\sqrt{2}$).


The Ginzburg-Landau
equations~\eqref{gle} have the trivial solutions corresponding
to physically homogeneous states:
\begin{enumerate}
\item the perfect superconductor solution:
$(\Psi_s \equiv 1, A_s \equiv 0)$
(so the magnetic field $B_s= \Curl A_s \equiv 0$),
\item
the normal metal solution: $(\Psi_n \equiv  0, A_n)$, the magnetic field $B_n = \Curl A_n$ is constant.
\end{enumerate}
We see that the perfect superconductor is a solution only when the magnetic field $B=\curl A$ 
is zero. On the other hand, there is a normal solution 
for any constant $B$. 


Though the equations \eqref{gle} depend explicitly on only one parameter, $\kappa $, there is another - hidden - parameter determining solutions. It can be alternatively expressed as the average magnetic field, $b$, in the sample, 
or as an applied magnetic field, $h$. As it increases in type II superconductors  from $0$, the pure superconducting state turns into a mixed state, which after further increase becomes the normal state. (For type I superconductors, the behaviour is quite different: the transitions from superconducting to normal state and back are abrupt and occur at different values of magnetic field - hysteresis behaviour.)

One of the greatest achievements of the Ginzburg-Landau theory of superconductivity is the discovery by A.A. Abrikosov (\cite{abr}) of solutions with symmetry of square and triangular lattices (\textit{Abrikosov vortex lattice solutions}) and one unit of magnetic flux per lattice cell, for type II superconductors in the regime just before the mixed state becomes the normal one (the regime \eqref{regime1} below). The rigorous proof of existence of such solutions was provided in \cite{odeh, lash, bgt, dut, Al, ts}. Moreover,  important and fairly detailed results on asymptotic behaviour of solutions, for $\ka \ra\infty$ and the applied magnetic fields, $h$, satisfying $h\le \frac{1}{2}\log \kappa+\const$ (the London limit), were obtained in \cite{as} (see this paper and the book \cite{ss} for references to earlier works). Further extensions to the Ginzburg-Landau equations for anisotropic and high temperature superconductors can be found in \cite{abs1, abs2}.

In this paper we prove existence of Abrikosov lattice solutions in the regime just after the superconducting state became the mixed one (the regime \eqref{regime2} in Appendix \ref{sec:crit-mf}) 
for all values of the Ginzburg-Landau parameter $\kappa$'s, all lattice shapes and all (quantized) values of magnetic flux per lattice cell. We also show that in each lattice cell, the solution looks like and $n$-vortex place at the center of the cell.

\bigskip
\subsection{Ginzburg-Landau free energy}

The Ginzburg-Landau equations are Euler-Lagrange equations for the Ginzburg-Landau (Helmholtz) free energy
 \be \la{glen}
   \E_Q(\Psi,A) := \frac{1}{2} \int_{Q} \left\{
  |\nabla_A \Psi|^2  + \frac{\kappa^2}{2}(1 - |\Psi|^2)^2  + (\curl A)^2\right\}d^2x,
\end{equation}
where $Q$ is the domain occupied by the superconducting sample. This energy depends on the temperature (through $\kappa$) and  the average magnetic field, $b=\lim_{Q'\ra Q}\frac{1}{|Q'|}\int_{Q'} \curl A$, in the sample, as thermodynamic parameters. Alternatively, one can consider the free energy depending on the temperature and  an applied magnetic field, $h$. This leads (through the Legendre transform) to 
the Ginzburg-Landau Gibbs free energy $G_Q(\Psi,A) :=\E_Q(\Psi,A)  -\Phi_Q h,$
where $\Phi_Q=b|Q|=\int_Q \curl A$ is the total magnetic flux through the sample.  
$b$ or $h$ do not enter the equations \eqref{gle} explicitly, but they determine the density of vortices, which we describe below.

\bigskip

%

\subsection{Symmetries and equivariant solutions}
The Ginzburg-Landau equations~\eqref{gle} admit several
symmetries, that is, transformations which map solutions to
solutions:

{\it Gauge symmetry}:  for any sufficiently regular function $\eta : \R^2 \to \R$,
\begin{equation}\label{g-transf}
  \G_\g:\  (\Psi(x) ,\   A(x)) \mapsto ( e^{i\eta(x)}\Psi(x),\ A(x) + \nabla\eta(x));
\end{equation}

{\it Translation symmetry}: for any $h \in \R^2$,
\begin{equation}\label{tr-transf}
 T_h:\   (\Psi(x),\  A(x))  \mapsto (\Psi(x + h),\ A(x + h));
\end{equation}

{\it Rotation and reflection symmetry}: for any $R \in O(2)$ (including  the reflections

$f(x)\ra f(-x)$)
\begin{equation}\label{rot-transf}
  T_R:\    (\Psi(x),\  A(x))  \mapsto (\Psi(Rx),\ R^{-1}A(Rx)).
\end{equation}

The symmetries allow us to introduce special classes of solutions, called equivariant solutions. They are defined as solutions having the property that they are gauge equivalent under the action, $T$, of a subgroup, $G$, of the group of rigid motions which is a semi-direct product of the groups of translations and rotations, i.e., for any $g\in G$, there is $\g=\g(g)$ s.t.
$$T_g (\Psi, A) =\G_\g (\Psi, A), $$ 
where $T_g$ 
for the groups of translations, and  
 rotations, is given \eqref{tr-transf} and \eqref{rot-transf}, respectively, and    $\G_\g$ is the action of for the gauge group, given in \eqref{g-transf}.

For $G$ the group of rotations, $O(2)$, we arrive at the notion of the (magnetic) vortex, which is labeled by the equivalence classes of the homomorphisms of $S^1$ into $U(1)$, i.e. by integers $n$,


\begin{equation} \label{vort}
   \Psi^{(n)} (x) = f^{(n)} (r) e^{in\theta} {\hbox{\quad and \quad}}
   A^{(n)}(x) = a^{(n)} (r) \nabla (n\theta) \ ,
\end{equation}
where $(r,\theta)$ are the polar
coordinates of $x \in \R^2$. Such vortices exist and are unique, up to symmetry transformation, for every $n\in \Z$ and their profiles have the following properties 
(see \cite{gst} and references therein):  
\be \label{vortdecay}
  \ba{c}
  |\p^\alpha (1-f^{(n)} (r))| \leq c e^{-m_{\kappa} r}, \\
  |\p^\alpha (1-a^{(n)} (r))| \leq c e^{- r},
  \ea
\end{equation}
\be \label{vortatzero}
f^{(n)} (r) 
= r^n + O(r^{n+2})\ \quad \mbox{and}\  \quad  a^{(n)} (r) 
=r^2 + O(r^{4}),\ \mbox{as}\ r\ra 0.
\end{equation}
Here  $m_\kappa := \min(\sqrt{2}\kappa,1)$. The exponential decay rates at infinity for   $f^{(n)} (r)$ and $a^{(n)} (r)$ are called the \textit{coherence length} and \textit{penetration depth}, respectively.

For $G$ a finite subgroup of the group of rotations, $O(2)$, say $C_{k}$ (see \cite{dfn}), a possible solution would be a polygon of vortices, similar to the one described in \cite{os}.

If  $G$ is the subgroup of the group of lattice translations for a lattice $\cL$, then we call the corresponding solution a \emph{lattice}, or  $\cL$-\textit{gauge-periodic state}. Explicitly,
\begin{equation} \label{gaugeper}
	T_s (	\Psi(x), A(x)) = \G_{g_s(x)}(\Psi(x),  A(x)),\ \forall s\in \cL,
	\end{equation}
where 
$g_s : \R^2 \to \R$ is, in general, a multi-valued differentiable function, with differences of values at the same point  $\in 2\pi\Z$, and satisfying
\be  \la{gspercond}
  g_{s+t}(x)-g_{s}(x+t)-g_{t}(x)\in 2\pi\Z.
\end{equation}
The latter condition on $g_s$ can be derived by computing $\Psi(x + s+t)$ in two different ways.
In the special case described above this  is the Abrikosov (vortex) lattice.

The characteristic property of  $\cL$-gauge-periodic states is their physical characteristics $|\Psi|^2,\ B(x)$ and $  J(x)$, where, recall $ B(x) := \curl A(x)$ and
$  J(x) := \Im (\bar{\Psi}\nabla_A \Psi)$, are doubly periodic with respect to the lattice $\cL$. The converse is also true: a state whose  physical characteristics  are doubly periodic with respect to some lattice $\cL$ is a $\cL$-gauge-periodic state.
\bigskip

\DETAILS{
\subsection{Lattice states} \la{lattstates}

Given a lattice $\cL\subset \R^2$, a \emph{lattice}, or  $\cL$-\textit{gauge-periodic state}, is a pair $(\Psi, A)$ defined on all of $\R^2$ whose physical characteristics $|\Psi|^2,\ B(x)$ and $  J(x)$, where, recall $ B(x) := \curl A(x)$ and
$  J(x) := \Im (\bar{\Psi}\nabla_A \Psi)$, are doubly periodic with respect to some lattice $\cL$. Due to the gauge symmetry of the equations this means that the translations of the pair under this lattice are mutually gauge equivalent, i.e.  for each $s \in \LAT$, there exists in general a multi-valued differentiable function $g_s : \R^2 \to \R$, with differences of values at the same point  $\in 2\pi\Z$, so that
\begin{equation} \label{gaugeper}
	\begin{cases}
	\Psi(x + s) = e^{ig_s(x)}\Psi(x), \\
	A(x + s) = A(x) + \nabla g_s(x).
	\end{cases}
\end{equation}
 Computing $\Psi(x + s+t)$ in two different ways, we find the following condition on $g_s$
\be  \la{gspercond}
  g_{s+t}(x)-g_{s}(x+t)-g_{t}(x)\in 2\pi\Z.
\end{equation}}

An important property of lattice states is flux quantization: The flux, $\int_\Om \curl A$, through the fundamental lattice cell $\Om$ (and therefore through any lattice cell) is
\be  \la{flux-quant}
  \int_\Om \curl A= 2\pi n,
\end{equation}
for some integer $n$. (Indeed, if $|\Psi| > 0$ on the boundary of the cell, we can write 
$\Psi = |\Psi|e^{i\chi}$, for $0 \leq \chi < 2\pi$. The periodicity of $|\Psi|^2$ and $J(x) := \Im ( \bar{\Psi}(\nabla - iA)\Psi )$ ensure the periodicity of $\nabla\chi - A$ and therefore by Green's theorem, $\int_\Omega \Curl A = \oint_{\partial\Omega} A = \oint_{\partial\Omega} \nabla\chi$ and this function is equal to $2\pi n$ since $\Psi$ is single-valued.) 
Now, due to \eqref{gaugeper}, the equation $\int_{\p\Om} A=2\pi n$ is equivalent to the condition
\be  \la{gsfluxcond}
 -\int_{\p_1\Om} \nabla g_{\w_1}(x)+\int_{\p_2\Om} \nabla g_{\w_2}(x)= 2\pi n,
\end{equation}
where $\{\w_1, \w_2\}$ is the basis of $\Om$ and $\p_1\Om/\p_2\Om$ is the part of the boundary of  $\Om$ parallel to $\w_2/\w_1$. Finally, note that the flux quantization can be written as
	$b = \frac{2\pi n}{|\Omega|},$ 
where $b$ is the average magnetic flux per cell,
	$b = \frac{1}{|\Omega|}\int_\Om  \CURL A .$	
Using the reflection symmetry, we can assume that $b$, and therefore $n$, is positive.

\bigskip

\subsection{ Parity}\label{subsec:parity}
It is 
convenient to restrict the class of solutions we are looking for as follows.  We place the co-ordinate origin at the center of the fundamental cell $\Omega$ so that $\Omega$ (as well as $\cL$) is
\textit{invariant under the reflection} $x \ra -x$. 
We define reflection (parity) operation
\be \la{R}
 Rf(x)=f(-x).
\end{equation}  
We say that a function $f$ on $\R^2$, or on the fundamental cell $\Omega$ is even or odd, if it is even or odd under reflection in any cite of the lattice. A pair $w =(\xi, \alpha)$ of functions on $\R^2$, or on  $\Omega$ is said to be even/odd if its $\psi-$ and $a-$component are even/odd and odd, respectively. 
Note that, since $\theta(-x)=\theta(x)+\pi$, the $n-$vortex solutions,  $U^{(n)}:=(\Psi^{(n)}, A^{(n)})$, are odd, if $n$ are odd, and are even, if $n$ are even.

Since the Ginzburg-Landau equations \eqref{gle}, the fundamental cell $\Omega$ and the lattice $\cL$ are
invariant under the reflection $x \ra -x$, we can restrict ourself to either odd or even lattice state  solutions. For convenience, we consider in what follows only odd solutions and odd vortices: 
\be \la{evensol}
 (\Psi(x), A(x))\ \quad \mbox{and}\ \quad  n \quad \mbox{are \textit{odd}}.
\end{equation}
Even solutions and $n$ are treated in exactly the same way.

\subsection{First result: Existence of vortex lattice states} \la{subsec:res1}

We describe here our main result.  First we identify $\R^2$ with $\C$ and note that any lattice $\mathcal{L} \subseteq \C$ can be given a basis ${r, r'}$ such
that the ratio $\tau = \frac{r'}{r}$ satisfies the inequalities $|\tau| \geq 1$,
 $\Im\tau > 0$,  $-\frac{1}{2} < \Re\tau \leq \frac{1}{2}$, and $\Re\tau \geq 0$ if $|\tau| = 1$  (see \cite{Ahlfors}, where the term discrete module, rather than lattice, is used). Although the basis is not unique, the value of $\tau$ is, and we will use that as a measure of the \textit{shape} of the lattice.
 Let $\cL\equiv\cL_R$ be a family of lattices of a fixed shape, with the minimal distance $R\gg 1$ between the nearest neighbour sites. Then the area of the fundamental cells, $\Om$, of $\cL$ is $\ge R^2$ and the average magnetic field $b=O(R^{-2})$. 
We have 
\begin{thm} \label{thm:existAL}
Let $\kappa\ne \frac{1}{\sqrt{2}}$ and $n \ne 0$. For any $n\in \Z$ 
there is  $R_0=R_0(\kappa )\ (\sim (\kappa- 1/\sqrt{2})^{-1}) > 0$ such that for $R  \ge R_0$, there exists a $\cL-$periodic, odd solution $U^\cL \equiv (\Psi^\cL, A^\cL)$ of~\eqref{gle} on the space $\R^2$, s.t. for any $\alpha \in \cL$ we have on $\Omega+\alpha$ 
\be  \la{eq:close}
  U^\cL (x)= U^{(n)}(x-\alpha)+O(e^{-c R}),
\end{equation}
where, recall, $U^{(n)}:=(\Psi^{(n)}, A^{(n)})$ is the $n-$vortex and $c>0$, in the sense of the local Sobolev norm of any index.
\end{thm}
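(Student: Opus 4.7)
The plan is to realize $U^\cL$ as a small perturbation of an explicit \emph{approximate} $\cL$-gauge-periodic, odd state $U^\cL_0$ obtained by placing an $n$-vortex $U^{(n)}(x-\alpha)$ at each lattice site $\alpha\in\cL$. Since the vortex profile approaches $(1,0)$ exponentially at rate $m_\kappa>0$ by \eqref{vortdecay}, neighbouring vortices overlap only at amplitude $O(e^{-cR})$, which supplies the small parameter for a contraction-mapping argument applied to the Ginzburg-Landau map $F(U)$, i.e.\ the left-hand side of \eqref{gle}. Concretely, $U^\cL_0=(\Psi^\cL_0,A^\cL_0)$ is obtained by a product ansatz for $\Psi$ and a summation ansatz for $A$, corrected by phase factors chosen so as to satisfy the cocycle condition \eqref{gspercond} with $g_s$ realizing the flux quantization \eqref{flux-quant}. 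On each fundamental cell $\Om+\alpha$, $U^\cL_0$ differs from $U^{(n)}(\cdot-\alpha)$ by $O(e^{-cR})$ in any local Sobolev norm, and substituting into \eqref{gle} yields $\|F(U^\cL_0)\|=O(e^{-cR})$ in the $\cL$-periodic norm.

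Writing $U^\cL=U^\cL_0+w$ with $w=(\xi,a)$ odd, $\cL$-gauge-periodic, and subject to a Coulomb-type gauge condition relative to $U^\cL_0$, Taylor expansion gives
\[F(U^\cL_0+w)=F(U^\cL_0)+L_R w+N(w),\]
with $L_R$ the linearization at $U^\cL_0$ and $N$ at least quadratic, with uniform constants. The core step---and the main obstacle---is to establish uniform-in-$R$ invertibility of $L_R$ on this odd, gauge-fixed, $\cL$-gauge-periodic subspace $X$. The linearization $L^{(n)}$ about a single vortex $U^{(n)}$ on $\R^2$ is Fredholm with kernel spanned by its symmetry zero modes: the two translations $\p_j U^{(n)}$ and the gauge mode $(i\Psi^{(n)},\nabla\cdot)$, the rotation mode coinciding with the gauge mode up to a factor. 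Since our ansatz is odd and the $n$-vortex is odd (the case under consideration), the translation modes are \emph{even} and therefore absent from $X$, while the gauge mode is killed by the gauge fixing. Off the vortex cores $L^{(n)}$ is uniformly positive with spectral gap controlled by $m_\kappa^2$ and $1$, which stay bounded away from $0$ so long as $\kappa$ avoids the self-dual value $1/\sqrt{2}$ (hence the hypothesis $\kappa\neq 1/\sqrt{2}$ and the bound $R_0\sim(\kappa-1/\sqrt{2})^{-1}$). A Persson/Agmon partition-of-unity localization then gives $L_R=\bigoplus_{\alpha\in\cL}L^{(n)}_\alpha+O(e^{-cR})$ on $X$, so uniform invertibility of $L_R$ on $X$ follows.

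With $L_R^{-1}$ under control, the equation $F(U^\cL_0+w)=0$ becomes the fixed-point problem
\[w=-L_R^{-1}\bigl(F(U^\cL_0)+N(w)\bigr),\]
and the Banach fixed-point theorem on the ball $\{\|w\|_X\leq Me^{-cR}\}$ produces, for $R\geq R_0$ large enough, a unique solution $w$ with $\|w\|_X=O(e^{-cR})$. Oddness and $\cL$-gauge-periodicity are preserved by $L_R^{-1}$ and $N$, and restricting to $\Om+\alpha$ yields the claimed estimate \eqref{eq:close}.
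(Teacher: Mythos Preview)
Your overall strategy matches the paper's: build an odd $\cL$-gauge-periodic approximate state, linearize, remove the symmetry zero modes (translations by parity, gauge by a constraint), and close by a contraction.  The paper carries this out on a single fundamental cell $\Omega$ with twisted boundary conditions \eqref{gaugeperbc} and then lifts to $\R^2$ via \eqref{lifting}--\eqref{Phial}, whereas you work directly in the $\cL$-periodic framework; these are equivalent formulations.  The paper's approximate solution on $\Omega$ is a single $n$-vortex smoothly cut off to the pure phase $e^{in\theta}$ near $\partial\Omega$, which makes the gauge cocycle $g_s$ explicit and the error estimate \eqref{est:F} immediate, but your product/sum ansatz can be made to serve the same purpose.

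There is, however, a real gap in your coercivity step.  Your stated reason for the hypothesis $\kappa\neq 1/\sqrt 2$ is that the gap of $L^{(n)}$ ``off the vortex cores'', controlled by $m_\kappa=\min(\sqrt 2\,\kappa,1)$ and $1$, degenerates there.  It does not: $m_\kappa=1$ at $\kappa=1/\sqrt 2$, and the essential spectrum of the gauge-fixed Hessian stays bounded away from $0$ for all $\kappa>0$.  The true obstruction is that at the self-dual (Bogomol'nyi) point $\kappa=1/\sqrt 2$ the single-vortex Hessian $L^{(n)}$ acquires \emph{additional} zero modes beyond the translation and gauge ones, so its gap on the symmetry-orthogonal subspace collapses.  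Showing that for $\kappa\neq 1/\sqrt 2$ (and $|n|=1$ when $\kappa>1/\sqrt 2$) the restriction of $L^{(n)}$ to the orthogonal complement of the symmetry zero modes is strictly positive is the vortex stability theorem of Gustafson--Sigal \cite{gs1}, and it is exactly this nontrivial input that feeds the IMS localization (your ``Persson/Agmon'' step) and yields the uniform lower bound \eqref{Linv}.  You cannot recover it from positivity away from the cores plus removal of symmetry modes alone; that argument gives you nothing about discrete spectrum near $0$.

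A smaller point: imposing a Coulomb gauge does not by itself guarantee that a zero of the gauge-fixed map solves $F=0$.  The paper handles this by a Lyapunov--Schmidt split: solve $\bar P F(v+w)=0$ by contraction on the gauge-orthogonal subspace, then verify the residual equation $\langle G_\gamma,F(v+w)\rangle=0$ directly from gauge invariance of the energy (see \eqref{gaugeorth}).  Your outline should include this verification.
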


\bigskip
\textbf{Discussion of the result.} 

1) Theorem \ref{thm:existAL} shows that, for every $\kappa \ne 1/\sqrt{2}$ and every lattice shape $\tau$, there is a unique, up to symmetries, Abrikosov lattice solution, $ (\Psi^\cL, A^\cL)$, of the Ginzburg-Landau equations  \eqref{gle}, satisfying \eqref{eq:close} (and \eqref{gaugeper}), as long as $R$ sufficiently large.
(Existence for $\kappa = 1/\sqrt{2}$ is actually trivial.)

2) One can modify our proof to make $R_0$ uniform in $\kappa - 1/\sqrt{2}$, see Remark \ref{rem:unif-kappa}.

3) Let $U_s:=(\Psi_s=1, A_s=0)$, the pure superconducting state and $h_{c1}:=\frac{E^{(1)}}{\Phi^{(1)}}$, where $E^{(n)}:=E(U^{(n)})$ and $\Phi^{(n)}:=\int B^{(n)}$, the energy and  flux of individual $n-$vortex, respectively, the first critical magnetic field (see Appendix \ref{sec:crit-mf}).  For $R$ sufficiently large and for the applied magnetic field $h> h_{c1}$, we have, for the fundamental cell $\Om$,  that the Gibbs energy satisfies
\[G_\Om(U^\cL)<G_\Om(U_s).\]
Indeed, 
due to \eqref{eq:close}, $G_\Om(U^\cL)=G_\Om(U^{(n)})+O(e^{-c R})$. Hence, since $h_{c1}:=\frac{E^{(1)}}{\Phi^{(1)}}$ and  $G_\Om(U_s)=0$, the result follows. 

4) One expects (based on results of \cite{gs2} on the Ginzburg-Landau energy, that for $\kappa>1/\sqrt{2},\ n=1$ and for $R$ sufficiently large, 
the average energy, $E_\Om(\cL):=\frac{1}{|\Om|}\E_\Om(\Psi^\cL, A^\cL)$, of the fundamental cell $\Om$ of the lattices $\cL$ is minimized by the triangular lattice. 

5) One might be able to prove existence of solutions of the Ginzburg-Landau equation \eqref{gle} in a large domain $Q$, which are close to  the Abrikosov lattice solution $U^\cL:=(\Psi^\cL, A^\cL)$. 
To do this we first construct an almost solution  $\tilde U^\cL:=(\tilde \Psi^\cL, \tilde A^\cL)$ by gluing together $U^\cL$ in $Q'\subset Q$ with an appropriate function in $Q/Q'$. 
This would give us the solution $U^\cL_Q:=(\Psi^\cL_{ Q}, A^\cL_{ Q})$ in $Q$,  close to $U^\cL:=(\Psi^\cL, A^\cL)$. 

\bigskip


\bigskip

Our approach to proving Theorem \ref{thm:existAL} is as follows. First we show that the existence problem on $\R^2$ can be reduced to $\Om$ with the boundary conditions on $\Om$  induced by the periodicity condition \eqref{gaugeper} (Subsection \ref{subsec:reduct-to-Om}).  Then we solve the Ginzburg-Landau equations \eqref{gle} on $\Omega$ with the obtained boundary conditions.  
To this end we construct an approximate solution, $v$ 
 (Subsection \ref{subsec:constraaprsolcell}) and use the Lyapunov-Schmidt reduction to obtain an exact solution 
 (Subsection \ref{subsec:LSdec}).   (Then 
 we glue together copies of the translated and gauged solution on $\Omega$ (according to the prescription of Subsection \ref{subsec:reduct-to-Om}) to obtain a solution on $\R^2$.)

\bigskip
%
\DETAILS{\subsection{ Zero modes}

Before formulating our next result we discuss zero modes of the linearized Ginzburg-Landau equations. These modes are associated with each solution of the Ginzburg-Landau equations \eqref{gle} and are due to 
the translation and gauge symmetries of \eqref{gle}. For instance, for the $n-$vortex $U^{(n)}:=( \Psi^{(n)}, A^{(n)})$, they are
\be \la{ntrmodes}
  T^{(n)}_{k} :=((\nabla_{A^{(n)}})_{k} \Psi^{(n)}(x), \;  B^{(n)}(x) Je_k),
\end{equation}
where $B^{(n)}(x):=\curl A^{(n)}$, and
\be \la{ngmodes}
  \Gz_\g^{(n)} := (i \g \Psi^{(n)}, \nabla\g),
\end{equation}
respectively. 
(See \cite{gs1} for a discussion of $T^{(n)}_{k}$ and $G^{(n)}_{\g}$.) Denote by $F(U)$ the map on the l.h.s. of \eqref{gle} and by 
$L^{(n)} := F'( U^{(n)})$,
its linearization around $U^{(n)}:=( \Psi^{(n)}, A^{(n)})$ ($F'( U)=$ the $L^2-$gradient of $F$ at $U$). Then we have 
\[
  L^{(n)}\Tz^{(n)}_{k} = 0,\ L^{(n)} G^{(n)}_{\g}=0.
\]
We note that for $n$ odd, $T^{(n)}_{k},\ k=1, 2,$ and $G^{(n)}_\g$, $\g$ even,  are even and odd, respectively, since  $U^{(n)}:=( \Psi^{(n)}, A^{(n)})$ is odd.


\DETAILS{Let $L_\e$ be the linearization of the Ginzburg-Landau equations at the solution $ (\Psi_\e, A_\e)$. As was mentioned above, $L_\e$ is a real linear operator, symmetric 
	($\DOT{}{v}{L_\e v'} = \DOT{}{L_\e v}{v'}$) 
with respect to the inner product \eqref{IP}, and Explicitly the translation modes $T_i$, $i = 1,2$, and the gauge modes, $G_\gamma,\ \g:\R^2\ra \R$, are given by
\begin{equation} \label{eq:gauge-zero-modes'}	
	T_i = \TwoByOne{ \left(\COVGRAD{A_\e}\Psi_\e \right)_i }{(\CURL A_\e)Je_i }\ \quad \mbox{and}\ \quad  G_\gamma = \TwoByOne{ i\gamma \Psi_\e }{ \nabla\gamma }.
\end{equation}}
Similarly, for a solution $U^\cL :=(\Psi^\cL, \;  A^\cL)$ 
the generators of translations, 
 \be \la{trmodes}
  T^\cL_{k} :=((\nabla_{A^\cL})_{k} \Psi^\cL(x), \;  B^\cL(x) Je_k),
\end{equation}
where $B^\cL(x):=\curl A^\cL$, and  gauge transformations,
\be \la{gmodes}
  \Gz^\cL_{\g} := (i \g \Psi^\cL, \; \nabla \g), 
\end{equation}
are zero modes of the linearization 
\be \la{L}
  L:=F'(U^\cL). 
\end{equation}
Note that $\Gz_{\g}^\cL, \g\in H_2(\R^2, \R)$, are in the domain of $L$, but $T^\cL_{k}$ are not. Indeed, $B^\cL(x)$ is a $\cL-$periodic function and therefore $\notin  H_2(\R^2, \R^2)$.

Finally, we observe that since the solution $U^\cL \equiv (\Psi^\cL, A^\cL)$ is odd, the linearization operator $L$ commutes with the reflections, \eqref{R}.
\bigskip}
%
%

\subsection{Second result: Spectrum of fluctuations} \la{subsec:spec}

To formulate our second result which concerns the spectrum of fluctuations around the solution $U^\cL\equiv (\Psi^\cL, A^\cL)$ found above, we have to introduce the linearized operators and their zero modes. Denote by $F(U),\ U=(\Psi, A),$ the map defined by the l.h.s. of \eqref{gle}. Let $L_{U^*} := F'( U^*)$ be the linearization of $F(U)$ around a solution $U^*:=( \Psi^*, A^*)$ of \eqref{gle} ($F'( U)=$ the $L^2-$gradient of $F$ at $U$). 
Note that $ L_{U^*}$ 
is a real-linear operator, symmetric, 
	$\DOT{}{v}{L_{U^*} v'} = \DOT{}{L_{U^*} v}{v'}$, 
with respect to the inner product
 \begin{equation} \label{ip}
 \langle w,w'\rangle= \int_{\R^2} (\Re \overline{\xi} \xi'+ \alpha \cdot \alpha'),
\end{equation}
where $w=(\xi,\alpha)$, etc..
Unless $U^*$ is trivial, it breaks the translational and gauge symmetry and as a result the linearized operator $L_{U^*}$ has  translation and  gauge symmetry zero modes:
 $LT^*_{k}=0$, $L\Gz_{\g}^*=0$,  
where $ T^*_{k}(x) :=((\nabla_{A^*})_{k} \Psi^*(x), \;  B^*(x) Je_k)$
and  $\Gz^*_{\g}(x) := (i \g \Psi^*(x), \; \nabla \g(x))$, with
 $B^*(x):=\curl A^*(x)$ and $J$, the symplectic matrix
    \begin{equation*}
        J = \left( \begin{array}{cc} 0 & -1 \\ 1 & 0 \end{array} \right).
    \end{equation*}
In particular, this applies to $U^*=U^\cL,\ U^{(n)}$, with the corresponding zero modes denoted by $T_{k},\ \Gz_\g$ and $T^{(n)}_{k},\ \Gz_\g^{(n)}$, respectively, so that e.g.
\be \la{ntrmodes}
  T^{(n)}_{k}(x) :=((\nabla_{A^{(n)}})_{k} \Psi^{(n)}(x), \;  B^{(n)}(x) Je_k),
\end{equation}
where $B^{(n)}(x):=\curl A^{(n)}(x)$, and
\be \la{ngmodes} \Gz_\g^{(n)}(x) := (i \g(x) \Psi^{(n)}(x), \nabla\g(x)), \end{equation}
are translation and gauge zero modes, respectively,
zero modes for the $n-$vortex $U^{(n)}:=( \Psi^{(n)}, A^{(n)})$:
$  L^{(n)}\Tz^{(n)}_{k} = 0,\ L^{(n)} G^{(n)}_{\g}=0,$
with $L^{(n)} := F'( U^{(n)})$. (See \cite{gs1} for a discussion of $T^{(n)}_{k}$ and $G^{(n)}_{\g}$.)

Define  the shifted translational zero modes $T_{jk}(x)=T^{(n)}_{k}(x-j) $, 
associated with the $n-$vortices located at the sites $j$ and let $ L:=L_{U^\cL} := F'(U^\cL)$. We emphasize that while $T_{k}(x)$  are zero modes of $L$,  $T_{jk}(x)$ are not. We have

\begin{thm} \label{thm:L}
Suppose either $\kappa > 1/\sqrt{2}$ and $n=1$ or $\kappa < 1/\sqrt{2}$ and $n \ne 0$. 
There is $R_0=R_0(\kappa )\ (\sim (\kappa - 1/\sqrt{2})^{-1}) > 0$ such that for $R  \ge R_0$,  we have

1) [approximate zero-modes of $L$] 
$\|L \Tz_{jk}\|_{H^r} \lesssim e^{-cR} $, for any $r$; 

2) [Coercivity away from the translation and gauge modes]
$\lan \eta, L\eta\ran \ge c'\|\eta\|_{H^1}^2$, for any
\[\eta \perp \Span\{T_{jk},\ G_{\g} |\  \forall j\in \cL,\ k=1,2,\ \g\in H^2(\R^2, \R)\},\]
and $c' >0$ independent of $R$. 
\end{thm}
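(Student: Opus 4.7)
My plan is to exploit the fact that on each cell $\Omega+j$ the lattice solution $U^\cL$ is exponentially close (in $R$) to the shifted $n$-vortex $U^{(n)}(\cdot-j)$, by Theorem~\ref{thm:existAL} and \eqref{eq:close}; hence the linearization $L$ is locally well-approximated by the vortex linearization $L^{(n)}_j := F'(U^{(n)}(\cdot-j))$, whose spectral structure is well-understood. For part (1), $T_{jk}(x) = T^{(n)}_k(x-j)$ is by construction an exact zero mode of $L^{(n)}_j$, so
\[
 L T_{jk} = (L - L^{(n)}_j) T_{jk}.
\]
The coefficients of $L - L^{(n)}_j$ are polynomial in $U^\cL - U^{(n)}(\cdot-j)$ and its first derivatives, hence of size $O(e^{-cR})$ on $\Omega+j$ by \eqref{eq:close}. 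On every other cell $\Omega+i$, $i\ne j$, I use instead the exponential decay of the vortex zero mode: \eqref{vortdecay} yields $|T_{jk}(x)| \lesssim e^{-c|x-j|}$, so $|(LT_{jk})(x)|$ is exponentially small on $\Omega+i$ with an additional factor decaying in $|i-j|\ge R$. Summing the $H^r$-contributions over all cells gives a convergent geometric series and the claimed bound $\|LT_{jk}\|_{H^r} \lesssim e^{-cR}$.

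For the coercivity in part (2) I will use an IMS-type localization. Let $\{\chi_j\}_{j\in\cL}$ be a smooth partition of unity $\sum_j \chi_j^2 \equiv 1$ with $\chi_j$ essentially supported on $\Omega+j$ and $\|\nabla\chi_j\|_\infty = O(1/R)$. Since $L$ is a second-order elliptic operator with bounded coefficients, the IMS identity gives
\[
 \lan \eta, L\eta\ran = \sum_j \lan \chi_j\eta, L\chi_j\eta\ran - \sum_j \lan \eta, |\nabla\chi_j|^2\eta\ran,
\]
where the second sum is $O(R^{-2})\|\eta\|^2$. On $\supp\chi_j$, $L$ differs from $L^{(n)}_j$ only by coefficients of size $O(e^{-cR})$, so each local form becomes $\lan\chi_j\eta, L^{(n)}_j \chi_j\eta\ran$ up to $O(e^{-cR})\|\chi_j\eta\|_{H^1}^2$. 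The known spectral theory of the stable $n$-vortex (cf.~\cite{gs1}) provides, under the stated hypotheses on $(\kappa,n)$, an $H^1$-coercivity estimate for $L^{(n)}_j$ on the $L^2$-orthogonal complement of its zero-mode subspace (the translations $T_{jk}$ together with the gauge family $(i\gamma\Psi^{(n)}(\cdot-j),\nabla\gamma)$). Summing over $j$ and absorbing the $O(R^{-2})$ and $O(e^{-cR})$ errors, for $R$ large, yields the desired estimate $\lan\eta, L\eta\ran \ge c' \|\eta\|^2_{H^1}$.

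The main obstacle is verifying that $\chi_j\eta$ lies, modulo small errors, in the vortex zero-mode complement. The translation side is immediate: $\lan\chi_j\eta, T_{jk}\ran = \lan\eta, \chi_j T_{jk}\ran$ differs from the hypothesized $\lan\eta, T_{jk}\ran = 0$ by $\lan\eta,(1-\chi_j)T_{jk}\ran$, and $(1-\chi_j)T_{jk} = O(e^{-cR})$ by the exponential decay of $T^{(n)}_k$. The gauge side is more delicate: the hypothesis $\eta\perp G_\gamma$ for all $\gamma$ is the single differential identity $\Im(\overline{\Psi^\cL}\xi) + \Div\alpha = 0$, which under multiplication by $\chi_j$ picks up commutator errors of size $O(1/R)$ from $\nabla\chi_j$ and further errors of size $O(e^{-cR})$ from replacing $\Psi^\cL$ by $\Psi^{(n)}(\cdot-j)$ on $\supp\chi_j$. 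One must show that orthogonally projecting $\chi_j\eta$ onto the vortex zero-mode complement costs only $O(e^{-cR}+R^{-1})\|\eta\|_{H^1}$, so that the vortex coercivity transfers with a controlled loss. The key quantitative input for the whole scheme is the background $H^1$-coercivity of $L^{(n)}$ on its kernel complement, which is exactly where the stability assumptions on $(\kappa,n)$ enter.
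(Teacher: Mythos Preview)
Your argument is essentially the paper's own proof: for part (1) you write $L=L^{(n)}_j+(L-L^{(n)}_j)$, use $L^{(n)}_jT_{jk}=0$, and combine the smallness of $L-L^{(n)}_j$ near $j$ with the exponential decay of $T_{jk}$ away from $j$; for part (2) you run an IMS localization and feed each localized piece into the vortex stability result of~\cite{gs1}. This is exactly Propositions~\ref{prop:Lapprzeromodes} and~\ref{prop:Lcoerc}.

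One point deserves a small correction. Your partition $\{\chi_j\}_{j\in\cL}$ is cell-based, so for $\sum_j\chi_j^2=1$ to hold smoothly the supports must overlap across cell boundaries. On the sliver of $\supp\chi_j$ lying in a neighbouring cell $\Omega+i$, the claim ``$L$ differs from $L^{(n)}_j$ only by $O(e^{-cR})$'' is not automatic: on $\Omega+i$ the lattice solution is close to the \emph{$i$-shifted} vortex in the appropriate gauge, and non-gauge-invariant coefficients of $L$ (such as the $\kappa^2\Psi^2$ entry) do not match those of $L^{(n)}_j$ there. The paper sidesteps this by taking $\chi_j$ supported in balls $B(j,R/3)$ well inside the cells and introducing one additional cutoff $\chi_\infty$ supported in the interstitial region $\R^2\setminus\bigcup_j B(j,R/4)$. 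On $\supp\chi_\infty$ no vortex comparison is needed: there $|\Psi^\cL|^2\approx 1$ and $\nabla_{A^\cL}\Psi^\cL\approx 0$, so $L$ is directly coercive (this is the content of Lemma~\ref{lem:lb-out}). With this adjustment your scheme goes through verbatim; your treatment of the gauge-orthogonality step, via the pointwise identity $\Im(\overline{\Psi^\cL}\xi)+\div\alpha=0$ and the $O(R^{-1})$ commutator from $\nabla\chi_j$, is in fact more explicit than what the paper writes.
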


Above and in sequel, the norms and inner products without subindices stand for those in $L^2$, while the Sobolev norms on $\Om$ are distinguished by the symbol $H^r$ in the subindex.

We prove this theorem in Section \ref{sec:proofthmsLK}. In exactly the same way one proves a similar, but stronger, result about  a complex-linear extension, $K$, of the operator $L$ (the latter result implies the former one). The spectrum of fluctuations around $U^\cL$ is the spectrum of $K$.
\medskip

This paper is self-contained. In what follows we write $e^{-R}$ for $e^{-cR}$.
\DETAILS{The $n$-vortex is a critical point of the Ginzburg-Landau
energy~\eqref{eq:en}, and the second variation of the energy
\[
  L^{(n)} := \mbox{ Hess } \cE (\psi^{(n)}, A^{(n)})
\]
is the linearized operator for the Ginzburg-Landau
equations~\eqref{GES} around the $n$-vortex, acting on the space $X
= L^2(\R^2,\C) \oplus L^2(\R^2,\R^2)$. This implies that $L^{(n)}$
is self-adjoint.

The symmetry group of $\cE(\Psi,A)$, which is infinite-dimensional
due to gauge transformations, gives rise to an infinite-dimensional
subspace of $\Null(L^{(n)}) \subset X$, which we denote here by
$Z_{sym}$, described in the theorem above. We say the $n$-vortex is
{\em (linearly) stable} if for some $c > 0$,
\[
  L^{(n)}|_{Z_{sym}^{\perp}} \geq c,
\]
and {\em unstable} if $L^{(n)}$ has a negative eigenvalue. By this
definition, a stable state is a local energy minimizer which is a
{\it strict} minimizer in directions orthogonal to the infinitesimal
symmetry transformations. An unstable state is an energy saddle
point. The basic result on vortex stability is the following:

\begin{theorem}[\cite{GS}] \label{thm:stab}

\begin{enumerate}
\item
For Type I superconductors, all $n$-vortices are stable.
\item
For Type II superconductors,  the $\pm1$-vortices are stable, while
the $n$-vortices with $|n| \geq 2$, are unstable.
\end{enumerate}
\end{theorem}}
\bigskip

\vspace{3mm}   \noindent {\bf Acknowledgements:}

Part of this work was done while the first author was visiting ETH Z\"urich and ESI Vienna. He is grateful to these institutions for hospitality. The first author is grateful to Stephen Gustafson for important suggestions, and to Yuri Ovchinnikov,  for useful discussions. The authors are grateful to the anonymous referee for several useful remarks and suggestions.
 \vspace{3mm}
\bigskip
\section{Proof of Theorem \ref{thm:existAL}} \label{sec:approach}

In this section we prove Theorem \ref{thm:existAL}, modulo technical statements proved in the next section. 



\subsection{Reduction to the basic cell}\label{subsec:reduct-to-Om}

Assume we are given  a multi-valued differentiable function $g_s : \R^2 \to \R$, with differences of values at the same point  $\in 2\pi\Z$ and satisfying \eqref{gspercond}.
An example of such a function is $g_s : =\frac{b}{2}s\wedge x =-\frac{b}{2}s\cdot J x$ used in \cite{ts}. Another example will be given below. Given a continuous function $U \equiv (\Psi, A)$  on the space $\R^2$, satisfying the gauge-periodicity conditions \eqref{gaugeper} (a $\cL-$gauge-periodic function), its restriction, $u \equiv (\psi, a)$, to the fundamental cell $\Omega$ satisfies the boundary conditions induced by \eqref{gaugeper}: 
\begin{equation} \label{gaugeperbc}
	\begin{cases}
	\psi(x + s) = e^{ig_s(x)}\psi(x), \\
	a(x + s) = a(x) + \nabla g_s(x), \\
x\in  \p_1\Omega/\p_2\Omega\ \mbox{and}\  s=\w_1/\w_2.
	\end{cases}
\end{equation}
Here $\p_1\Omega/\p_2\Omega=$ the left/bottom boundary of $\Omega$ and $  \{\w_1, \w_2\}$ is a basis in $\cL$.

In the opposite direction, given a continuous function $u \equiv (\psi, a)$ on the fundamental cell $\Omega$, satisfying the boundary conditions \eqref{gaugeperbc},  we lift it to a $\cL-$periodic function $U \equiv (\Psi, A)$  on the space $\R^2$, satisfying the gauge-periodicity conditions \eqref{gaugeper}, by setting,
for any $\alpha \in \cL$,
\be  \la{lifting}
  \Psi (x)= \psi(x-\alpha)e^{i\Phi_\alpha (x)},\   A (x)= a(x-\alpha)+\nabla\Phi_\alpha (x),\   x\in\Omega+\alpha,
\end{equation}
where $\Phi_\alpha (x)$ is a real, possibly multi-valued, function to be determined. (Of course, we can add to it any $\cL-$periodic function.)  We define
\be  \la{Phial}
  \Phi_\alpha (x):=   g_{\al}(x-\al),\  \mbox{for}\  x\in \Omega+\alpha.
\end{equation}
The periodicity condition \eqref{gaugeper}, applied to the cells $\Omega+\alpha-\w_i$ and $\Omega+\alpha$ and the continuity condition on the common boundary of the cells $\Omega+\alpha-\w_i$ and $\Omega+\alpha$ imply that $\Phi_\alpha (x)$ should satisfy the following two conditions:
\be  \la{Phicond1}
  \Phi_\alpha (x)= \Phi_{\alpha-\w_i} (x-\w_i) +g_{\w_i}(x-\w_i),\ \mbox{mod}\ 2\pi,\ x\in \Omega+\alpha,
\end{equation}
\be  \la{Phicond2}
  \Phi_\alpha (x)= \Phi_{\alpha-\w_i} (x) +g_{\w_i}(x-\alpha),\ \mbox{mod}\ 2\pi,\ x\in \p_i\Omega+\alpha,
\end{equation}
where $i=1, 2,$ and, recall, $\{\w_1, \w_2\}$ is a basis in $\cL$ and $\p_1\Omega/\p_2\Omega$ is the left/bottom boundary of $\Omega$. 

To show that \eqref{Phial} satisfies the conditions \eqref{Phicond1} and \eqref{Phicond2}, we note that, due to \eqref{gspercond}, we have $g_{\al}(x-\al)= g_{\al-\w_i}(x-\al) +g_{\w_i}(x-\w_i),\ \mbox{mod}\ 2\pi,\ x\in \Omega+\alpha,$ and $g_{\al}(x-\al)= g_{\al-\w_i}(x-\al+\w_i) +g_{\w_i}(x-\alpha),\ \mbox{mod}\ 2\pi,\ x\in \p_i\Omega+\alpha$, which are equivalent to \eqref{Phicond1} and \eqref{Phicond2}, with \eqref{Phial}.

Finally, note that
 \begin{itemize}
 \item[(a)] Since  $\Psi, A$ satisfy the gauge-periodicity conditions \eqref{gaugeper}  in the entire space $\R^2$ and are smooth in $\R^2/(\cup_{s\in\cL}\p\Om)$, $\nabla_A\Psi$, $\Delta_A\Psi$ and $\curl^2 A$  are continuous and satisfy  the gauge-periodicity condition \eqref{gaugeper};
\item[(b)]  Since $u \equiv (\psi, a)$ satisfies the Ginzburg-Landau equations \eqref{gle} in $\Om$, then $U \equiv (\Psi, A)$  satisfies \eqref{gle} in $\R^2/(\cup_{t\in\cL}S_t\p\Om)$, where $S_t: x\ra x+t$;
\item[(c)]   Since  $\Psi, A$ satisfy the gauge-periodicity conditions \eqref{gaugeper}   in the entire space $\R^2$, we conclude by the first equation in
 \eqref{gle} that  $\Delta_A\Psi$ is continuous and satisfies the periodicity conditions (in the first equation of) \eqref{gaugeper} in $\R^2$ and therefore, by the Sobolev embedding, theorem so is $\nabla_A\Psi$. Hence, by the second equation in \eqref{gle}, $\curl^2 A$ is continuous and satisfies the periodicity conditions \eqref{gaugeper} in $\R^2$.  Therefore, by iteration of the above argument (i.e. elliptic regularity),  $\Psi, A$ are smooth functions obeying \eqref{gaugeperbc} and  \eqref{gle}.
 \end{itemize}
We summarize the conclusions above as
\begin{lem} \la{lem:period-deriv}
Assume twice differentiable functions $(\psi, a)$ on $\Om$ obey the boundary conditions \eqref{gaugeperbc} and the Ginzburg-Landau equations \eqref{gle}. Then the functions $ (\Psi, A)$ constructed in \eqref{lifting} - \eqref{Phial} are smooth in $\R^2$ and satisfy the periodicity conditions \eqref{gaugeper} and the Ginzburg-Landau equations \eqref{gle}.
\end{lem}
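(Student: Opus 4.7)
The strategy is exactly the one sketched in items (a)--(c) preceding the lemma statement: build $(\Psi,A)$ by gauge-translating cell-by-cell, check the gauge periodicity and continuity on the skeleton $\cup_{t\in\cL}S_t\partial\Om$, then bootstrap to smoothness using elliptic regularity. I would organize the argument in three stages.

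\textbf{Stage 1: Consistency of the cocycle $\Phi_\alpha$ and gauge periodicity.} First I would verify that the assignment $\Phi_\alpha(x):=g_\alpha(x-\alpha)$ for $x\in\Om+\alpha$ satisfies both \eqref{Phicond1} (consistency of the lift on the interior of $\Om+\alpha$) and \eqref{Phicond2} (matching condition on the interface $\partial_i\Om+\alpha$). Both reduce to the cocycle identity \eqref{gspercond} with $s=\alpha-\w_i$ and $t=\w_i$, evaluated once at $x-\alpha$ and once at $x-\alpha+\w_i$; this is a direct algebraic check modulo $2\pi\Z$. Using \eqref{Phicond1}, one then reads off the gauge-periodicity \eqref{gaugeper} of $(\Psi,A)$ on $\R^2$: indeed, for $x\in\Om+\alpha$ and $s\in\cL$, both $\Psi(x+s)=\psi(x+s-(\alpha+s))e^{i\Phi_{\alpha+s}(x+s)}$ and $e^{ig_s(x)}\Psi(x)=\psi(x-\alpha)e^{i(\Phi_\alpha(x)+g_s(x))}$ agree by \eqref{Phicond1}, and analogously for $A$.

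\textbf{Stage 2: Continuity across cell boundaries.} Across the common face $\partial_i\Om+\alpha$ between $\Om+\alpha-\w_i$ and $\Om+\alpha$, the values of $(\Psi,A)$ coming from the two neighbouring cells must be compared. Substituting \eqref{Phial} into the boundary condition \eqref{gaugeperbc} satisfied by $(\psi,a)$, the matching reduces to \eqref{Phicond2}, which has just been verified. Hence $(\Psi,A)\in C^0(\R^2)$. At this point $(\Psi,A)$ solves \eqref{gle} classically in $\R^2\setminus(\cup_{t\in\cL}S_t\partial\Om)$, because on each translated cell it is obtained from the cell-solution $(\psi,a)$ by a smooth gauge transformation, and \eqref{gle} is gauge-covariant.

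\textbf{Stage 3: Smoothness by bootstrap.} The main technical point, and the one I expect to require the most care, is propagating regularity across the measure-zero skeleton $\cup_{t\in\cL}S_t\partial\Om$. I would proceed as in (c): since $\Psi$ is continuous and $\cL$-gauge-periodic, it lies in $H^1_{\text{loc}}(\R^2,\C)$, so the first equation of \eqref{gle}, read as an elliptic PDE for $\Psi$ with coefficients built from $A\in L^\infty_{\text{loc}}$, promotes $\Psi$ to $H^2_{\text{loc}}$ by standard interior and transmission estimates. Sobolev embedding in two dimensions then gives $\nabla_A\Psi$ in any $L^p_{\text{loc}}$, so the supercurrent $\Im(\bar\Psi\nabla_A\Psi)$ is H\"older continuous; feeding this into the second equation of \eqref{gle} gives $\curl^2 A$ continuous and $\cL$-gauge-periodic, which combined with a Coulomb-gauge choice in each cell yields $A\in C^{1,\alpha}_{\text{loc}}$. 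Iterating this scheme (or equivalently, applying the standard elliptic bootstrap for the elliptic system \eqref{gle}) produces $(\Psi,A)\in C^\infty(\R^2)$. The gauge periodicity \eqref{gaugeper} and the equations \eqref{gle} then hold on all of $\R^2$ by continuity, completing the proof.
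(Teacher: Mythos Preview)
Your proposal is correct and follows essentially the same approach as the paper: you verify the cocycle conditions \eqref{Phicond1}--\eqref{Phicond2} for $\Phi_\alpha$, deduce gauge-periodicity and continuity across the skeleton, and then bootstrap to smoothness via elliptic regularity exactly as outlined in items (a)--(c). Your Stage~3 supplies a bit more detail (transmission estimates, Coulomb gauge) than the paper's terse sketch, but the route is the same.
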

%
\DETAILS{For general $g_s(x)$, we define, for $x\in \Omega$ and $\alpha=m_1\w_1+m_2\w_2\in\cL$,
\be  \la{Phigen}
  \Phi_\alpha (x+\alpha):=  \Phi_0 (x)+ \sum_{l=0}^{m_1-1}g_{\w_1}(x+l\w_1)+\sum_{m=0}^{m_2-1}g_{\w_2}(x+m_1\w_1+m\w_2),
\end{equation}
where $\Phi_0 (x)$ is a $\cL-$periodic function. 

For $\tilde g_s(x)=\frac{b}{2}x\wedge s$ and $\alpha=m_1\w_1+m_2\w_2$ the equations \eqref{Phicond1} and \eqref{Phicond2} can be satisfied by
\be  \la{Phispec}
  \Phi_\alpha (x):= m_1\tilde g_{\w_1}(x-\alpha)+m_2\tilde g_{\w_2}(x-\alpha)+\frac{b}{2}\w_1\wedge \w_2.
\end{equation}
The gauge function $g_s(x)$ we use below is different.}
%


\subsection{Existence of solutions in the basic cell} \label{subsec:constraaprsolcell-2}

In what follows we look for \textit{odd} solutions, $(\psi, a)$, of the Ginzburg-Landau equations~\eqref{gle} in $\Om$. Our goal now is prove the following
\begin{thm} \label{thm:existALOmega}
For any $n\in \Z$ 
there is $R_0 > 0$ such that for $R  \ge R_0$, there exists a smooth, odd solution $u^\cL \equiv (\psi^\cL, a^\cL)$ of~\eqref{gle} on the fundamental lattice cell $\Omega$, satisfying the boundary conditions \eqref{gaugeperbc} and the estimate, in a Sobolev norm of arbitrary index, 
\be  \la{apprOmega}
  u^\cL (x)= U^{(n)}(x)+O(e^{-R}).
\end{equation}
\end{thm}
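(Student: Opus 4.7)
The plan is the classical one: construct an $\cL$-gauge-periodic approximate solution on $\Omega$ from the $n$-vortex profile $U^{(n)}$, then correct it to an exact solution by a Lyapunov--Schmidt/contraction argument performed in the odd sector, where the linearized operator will be coercive modulo gauge.

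\textbf{Step 1} (approximate solution). First, fix a concrete multiplier $g_s$ satisfying \eqref{gspercond} and the flux condition \eqref{gsfluxcond} that is adapted to the $n$-vortex gauge, so the phase accumulation of $\Psi^{(n)}$ along $\partial\Omega$ matches $g_s$ modulo $O(e^{-R})$. Let $\chi$ be a smooth radial cutoff equal to $1$ on $|x|\le R/3$ and vanishing outside $|x|\le R/2$, and set
\[
  v(x) := \chi(x)\, U^{(n)}(x) + (1-\chi(x))\, U^{(n)}_{\mathrm{as}}(x),\qquad U^{(n)}_{\mathrm{as}}:=\bigl(e^{in\theta},\; n\nabla\theta\bigr).
\]
By \eqref{vortdecay}--\eqref{vortatzero}, $v = U^{(n)} + O(e^{-R})$ on $\Omega$, and by the gauge choice $v$ satisfies the boundary conditions \eqref{gaugeperbc}. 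A direct substitution into \eqref{gle}, together with \eqref{vortdecay}, yields
\[
  \|F(v)\|_{H^r(\Omega)} \lesssim e^{-R} \qquad\text{for every } r,
\]
where $F$ denotes the left-hand side of \eqref{gle}.

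\textbf{Step 2} (Lyapunov--Schmidt in the odd sector). Write $u^\cL = v+w$ with $w=(\xi,\alpha)$ odd; then \eqref{gle} becomes
\[
  L w + N(w) = -F(v), \qquad L := F'(v),
\]
with $N(w) = F(v+w)-F(v)-Lw$ quadratic in $w$. Fix gauge relative to $v$ by imposing the infinitesimal Coulomb condition $\operatorname{div}\alpha - \Im(\overline{\psi_v}\,\xi) = 0$, which removes the gauge zero-modes $G_\gamma$. Since $U^{(n)}$ is odd (working in the odd-$n$ case, cf.\ Subsection \ref{subsec:parity}), its translational zero-modes $T^{(n)}_k$ are \emph{even} and hence automatically orthogonal to the odd subspace, so no finite-dimensional bifurcation equation appears. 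Let $\cX$ denote the Hilbert space of odd $H^2$ pairs on $\Omega$ obeying the homogeneous analogue of \eqref{gaugeperbc} together with the above gauge condition.

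\textbf{Step 3} (inversion and contraction). The operator $L$ differs from the full-plane $n$-vortex linearization $L^{(n)}:=F'(U^{(n)})$ only on the support of $\nabla\chi$ and near $\partial\Omega$, with difference of size $O(e^{-R})$ in operator norm. The operator $L^{(n)}$ is coercive on the orthogonal complement of its translation-gauge null space (cf.\ \cite{gs1}), so on $\cX$ the operator $L$ is bounded below by a positive constant uniform in $R$, producing a uniformly bounded right inverse $L^{-1}:H^r \to H^{r+2}$. Rewriting \eqref{gle} as the fixed-point problem
\[
  w = -L^{-1}\bigl(F(v) + N(w)\bigr),
\]
the map is a contraction on the ball $\{\|w\|_{H^2}\le C e^{-R/2}\}$ for $R\ge R_0$ by the quadratic estimate on $N$. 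The resulting fixed point satisfies $\|w\|_{H^r}\lesssim e^{-R}$ for every $r$ via elliptic bootstrap on \eqref{gle}, yielding the smooth odd solution $u^\cL = v + w$ with the claimed bound \eqref{apprOmega}; Lemma \ref{lem:period-deriv} then lifts it to a classical $\cL$-gauge-periodic solution on $\R^2$.

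\textbf{Main obstacle.} The decisive step is the uniform-in-$R$ coercivity of $L$ on $\cX$. One must (a) treat the $T^{(n)}_k$ as approximate (not exact) zero-modes of $L$, since $v$ is not translation invariant, (b) control the effect of the twisted boundary conditions \eqref{gaugeperbc} on the quadratic form $\langle w, L w\rangle$, and (c) track the collapse of the spectral gap of $L^{(n)}$ at the self-dual point, which forces $R_0\sim(\kappa-1/\sqrt{2})^{-1}$. This is essentially a one-cell analogue of Theorem \ref{thm:L} and constitutes the technical core of the argument.
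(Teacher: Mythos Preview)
Your proposal is correct and takes essentially the same approach as the paper: cutoff-vortex approximate solution $v$ with the gauge $g_s(x)=n\theta(x+s)-n\theta(x)$, Lyapunov--Schmidt in the odd sector with gauge orthogonality (parity automatically kills the even $T_k$), coercivity of $L_0$ from the $n$-vortex stability result of \cite{gs1}, and a contraction. Two details you elide that the paper makes explicit: the coercivity comparison of $L_0$ with $L^{(n)}$ cannot be done as a global operator-norm perturbation (they act on different spaces) and instead goes through an IMS partition-of-unity localization---this is precisely your obstacle (b)---and, after solving the projected fixed-point equation $\bar P F(u)=0$, one must still dispatch the gauge component $P F(u)$, which the paper does by invoking $\langle G_\gamma, F(u)\rangle=0$ from gauge invariance of the energy.
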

 To prove this theorem we construct an approximate solution of \eqref{gle} on $\Omega$ and then use a perturbation theory (Lyapunov-Schmidt decomposition), starting with this approximate solution. This is done in Subsections \ref{subsec:constraaprsolcell} - \ref{subsec:LSdec}, modulo technical estimates proven in Section \ref{sec:keyprop}.

Using this result and gluing together copies of the translated and gauged solution on $\Omega$, (see Subsection \ref{subsec:reduct-to-Om} and especially \eqref{lifting} and \eqref{Phial}), we derive Theorem \ref{thm:existAL}. 

\subsection{Construction of an approximate solution} \label{subsec:constraaprsolcell}

In this subsection we  construct test functions,  $ (\psi_{0}, a_{0})$, describing a vortex of  the degree $n$, centered at the center of the fundamental cell $\Omega$.  


Let $\eta$ and  $\bar\eta$ be smooth, nonnegative, spherically symmetric (hence even),
cut-off functions on $\Omega$, such that $\eta =1$  on $|x|\le\frac{1}{3}R$ and $\eta =0$ on $\Omega/\{|x|\le\frac{2}{5}R\}$ and
$$|\p^\alpha \eta (x)| \lesssim R^{-|\alpha|}$$ inside $\Omega$ and  $\eta+\bar\eta = 1$ on $\Omega$. 
Fix an odd integer $n$. We define on $\Omega$ 
\begin{equation} \label{psi0a0}
 \psi_{0}(x) := [ f^{(n)} \eta +\bar\eta](x)e^{in\theta(x)},\ 
  a_{0}(x) := [ A^{(n)} \eta +n\nabla\theta\bar\eta](x).
\end{equation}
%
These functions belong to Sobolev spaces  $H^r_{odd}(\Om):=H^r_{odd}(\Om, \C)\times H^r_{odd}(\Om, \R^2)$ of odd functions, for any $r\ge 0$, and satisfy the boundary conditions \eqref{gaugeperbc} with 
 \begin{equation} \label{gs}
g_s(x):= n\theta(x+s)-n\theta(x)\ \mbox{and}\ x\in \R^2.
\end{equation}
Note that, though the function $g_s(x)$  is multi-valued on $\R^2$, it is well-defined  for  $x\in \p_i\Omega$ and $s=\w_i,\ i=1, 2$. Indeed, $g_s(x)$ can be written as
$$g_s(x)=n\int_0^1 dr\frac{Jx\cdot s}{|x+r s|^2}
=n\frac{Jx\cdot \hat s}{\sqrt{|x|^2 - (x\cdot\hat s)^2}}\int_{\lam_1}^{\lam_2} \frac{dt}{t^2+1},$$
where $\lam_1=\frac{x\cdot\hat s}{\sqrt{|x|^2 - (x\cdot\hat s)^2}},\ \lam_2=\frac{|s|+x\cdot\hat s}{\sqrt{|x|^2 - (x\cdot\hat s)^2}},$ and $\hat x:=x/|x|,$ etc. (Note that, taking for simplicity lattices with equal sides, by our choice, $\Om=\{r_1\w_1 + r_2\w_2\ |\ -\frac{R}{2} \le r_i\le \frac{R}{2}\ \forall i\}$, $$\p_1\Omega:=\{-\frac{1}{2}\w_1 + r\w_2\ |\ -\frac{R}{2} \le r\le \frac{R}{2}\}\ \mbox{and}\ 
\p_2\Omega:=\{r\w_1 - \frac{R}{2}\w_2 \ |\ -\frac{R}{2} \le r\le \frac{R}{2}\},$$ so that $|x|^2 - (x\cdot\hat s)^2$ never vanishes for $s=\w_i,\ x\in \p_i\Omega$.) It can be also verified directly that \eqref{gs} satisfies the conditions \eqref{gspercond} and \eqref{gsfluxcond}:
\be  \la{gspercond'}
  g_{s+t}(x)-g_{s}(x+t)-g_{t}(x)\in 2\pi\Z  \,\,\,
\end{equation}
and
\be  \la{gsfluxcond'}
 -\int_{\p_1\Om} \nabla g_{\ell_1}(x)+\int_{\p_2\Om} \nabla g_{\ell_2}(x)=\int_{\p\Om} \nabla n\theta(x)=  2\pi n.
\end{equation}
%
\DETAILS{Compare the above gauge $g_s(x):= n\theta(x+s)-n\theta(x)\ \mbox{and}\ x\in \p\Omega$ with the gauge $\tilde g_s(x)=\frac{b}{2}s\wedge x$, with $s\wedge x = s_1x_2 - s_2x_1\equiv Js\cdot x$, used in \cite{ts}. The latter satisfies
$$\tilde g_t(x)+\tilde g_s(x+t)- \tilde g_{t+s}(x)=\frac{b}{2}t\wedge s=\pi n.$$}
Finally, by the construction we have
$
 \psi_{0} = \Psi^{(n)} +(1- f^{(n)})e^{in\theta}\bar\eta ,\ 
  a_{0} =  A^{(n)}  +n\nabla\theta (1-a^{(n)})\bar\eta.
$
This, the definition of $\bar\eta$   and the estimates \eqref{vortdecay} imply, for $v:=(\psi_{0}, a_{0})$, $U^{(n)} = (\Psi^{(n)}, A^{(n)})$, 
that
\begin{equation} \label{v-Un-est}
\|v-U^{(n)}\|_{H^r}\lesssim e^{-R}\ \quad \forall r\ge 0.
\end{equation}
 \bigskip

\subsection{Spaces} \la{sec:spaces}
We consider the spaces 
$L^2_{odd}(\Om):=L^2_{odd}(\Om, \C)\times L^2_{odd}(\Om, \R^2)$ of odd square integrable functions on $\Om$, with the real inner product 
\eqref{ip}. Fixing an odd integer $n$, we define 
$\mHr$ to be the Sobolev space of order $r\ge 0$ of \textit{odd}   functions $w=\left( \xi, \alpha  \right) : \Omega \ra \C\times \R^2$, 
satisfying 
the gauge periodic boundary conditions
\begin{equation} \label{gaugeperbc'}
	\begin{cases}
	\xi(x + s) = e^{ig_s(x)}\xi(x), \\
	\alpha(x + s) = \alpha(x),
	\end{cases}
\end{equation}
for  $x\in  \p_1\Omega/\p_2\Omega$ (= the left/bottom boundary of $\Omega$), $s=\w_1/\w_2$ ($  \{\w_1, \w_2\}$, a basis in $\cL$), and $g_s$ given in \eqref{gs}. (Note that $\nabla_{a_0}\xi$ satisfies the boundary conditions on the first line on \eqref{gaugeperbc'}.)
For  $r>\frac{1}{2} $, the restrictions of functions in $\mHr$ 
to the boundary exist as  $\mH^{r-\frac{1}{2}}(\Om)$ functions and therefore \eqref{gaugeperbc'} is well defined; for $0\le r\le\frac{1}{2} $, one can define the corresponding spaces by observing that if $\xi \in \mHr$, then $e^{-in\theta}\xi $ is periodic w.r. to the lattice $\cL$ and the corresponding norms can be defined in terms of its `Fourier' coefficients. (We need $\mHr$ for $r=2$.)
\DETAILS{The space $\mHs $ can be mapped by the matrix multiplication operator
\begin{equation} \label{V}
V=\left[ \begin{array}{cc} e^{-in\theta}  & 0
\\0 & \bfone
\end{array} \right]\ :\mathscr{H}_s(\Omega)\ra \mathscr{H}^{per}_s(\Omega),
\end{equation}
 into the Sobolev space, $\mathscr{H}^{per}^s(\Omega) :=H_{s}^{per}(\Omega, \C)\times H^{s}^{per}(\Omega, \R^2),$ of periodic functions on $\Om$.}

\bigskip

\subsection{Generators of translations and gauge transformations} \la{sec:gener}
An important role in the analysis of vortices is played by the generators of translations and gauge transformations, $T_{k},\ k=1, 2,$ and $\Gz_{\g},\ \g:\Omega\ra \R$, defined as
\be \la{trmodes}
  T_{k}(x) :=((\nabla_{a_0})_{k} \psi_0(x), \;  b_0(x) Je_k),
\end{equation}
where $b_0(x):=\curl a_0$, and 
\be \la{gmodes}
  \Gz_{\g} := (i \g \psi_0, \; \nabla \g),\ \gamma : \Omega \ra \R.
\end{equation}
These generators are almost zero modes of the operator  $\Lz := F'(\vz)$ ($=$ the $L^2-$gradient of $F$ at $v$), where, recall, $F$ is the map defined by the l.h.s. of~\eqref{gle}.
\DETAILS{\begin{lem} The operator  $\Lz := F'(\vz)$ is real-linear and symmetric on $L^2(\Omega, \C)\times L^2(\Omega, \R^2)$ and has $T_{k},\ k=1, 2,$ and $\Gz_\g$ as the zero modes: $\Lz T_{k}=0,\ k=1, 2,\ \Lz\Gz_\g=0$.
\end{lem}
\begin{proof} We prove only the last part of this lemma. Denote  $u_\chi = (e^{i\chi}\psi, a+\nabla\chi)$, for any $\chi\in H_2^{per}(\Omega, \R)$. ($u_\chi$ does not satisfy the gauge-periodicity conditions \eqref{gaugeper}, but this does not effect the argument below.) \eqref{eqF} and the gauge symmetry of this equation imply that $ F(u_\chi)=0$. 
Differentiating this equation w.r.to $\chi$ yields the equation $\Lz G_\g=0$.
\end{proof}}

 Since $(\Tz_{k})_\psi$ and $(\Tz_{k})_a$ are even, by our definition in Subsection \ref{subsec:parity}, 
so are $T_{k},\ k=1, 2,$ and therefore $T_{k},\ k=1, 2,$ do not belong to our spaces. On the other hand, $\Gz_{\g}$ belongs to our space $\mathscr{H}^{r}(\Omega),\ \forall r$, iff $\gamma$ is periodic and even, with appropriate smoothness conditions. 


\bigskip

\subsection{Orthogonal decomposition} \la{subsec:orthdec}


Let $v=(\psi_0, a_0)$ with $\psi_0$ and $a_0$ defined in \eqref{psi0a0}. Consider odd functions, $u = (\psi, a)\in L_{odd}^2(\Om)$, 
 satisfying the boundary conditions \eqref{gaugeperbc} with \eqref{gs} and s.t. 
\be \la{orthdecomp}
  u = v + w,\ \mbox{with}\ 
  w\perp \Gz_\g,\  \forall\g\in H_{per}^{2+r}(\Om, \R), 
\end{equation}
where $H_{per}^{2+r}(\Om, \R)$ is the Sobolev space of real, periodic, even functions on $\Om$ of order $2+r$.
%
%
\DETAILS{Here $H_2^{per}(\Omega, \R)$ is the Sobolev space of real, even, periodic functions on $\Omega$.  If we write $w=(\xi, \alpha) 
$, then the gauge orthogonality condition in \eqref{orthdecomp},
\[
      \left\langle \left( \begin{array}{c}
      i\gamma \Psi^{(n)} \\ \nabla \gamma
      \end{array} \right), \left( \begin{array}{c}
      \xi \\ \alpha \end{array} \right) \right\rangle = 0,
\]
for all $\gamma\in H_{per}^{2+r}(\Omega, \R)$, implies, after integration by parts, that on $\Omega$ 
\begin{equation} \label{gchoice}
  \Im(\overline{\Psi^{(n)}}\xi) = \nabla \cdot \alpha. 
\end{equation}
We 
assume from now on that $w=(\xi, \alpha)$ satisfy this equation.}
The function $w$, defined by \eqref{orthdecomp}, has the following properties
\begin{itemize}
\item Since $v= (\psi_0, a_0)$ is odd and since scalar products of even functions with odd ones vanish, $w\perp \Tz_{k},\  k=1,2.$
    \item Since $v$ and $u$ satisfy  the boundary conditions \eqref{gaugeperbc} with \eqref{gs}, we conclude that $w$ satisfies  the boundary conditions \eqref{gaugeperbc'} with \eqref{gs}. 
 \item Since $v\in H^r(\Om)$, for any $r\ge 0$, we have that, if $u\in H^r(\Om)$, then $w\in H^r(\Om)$.
  \end{itemize}
  Note that by integration by parts,  $ w\perp \Gz_\g,\ \forall\g\in H_{per}^{2+r}(\Om, \R),$ is equivalent to
  \be \la{gaugeorthcond'}\Im(\bar\psi_0 \xi)+\div\al=0.
  \end{equation}
%

\subsection{Lyapunov-Schmidt decomposition. 
} \la{subsec:LSdec}
Recall that $F$ is the map defined by the l.h.s. of~\eqref{gle} and denote $u = (\psi, a) : \Omega \ra \C\times \R^2$. The Ginzburg-Landau equations~\eqref{gle} on $\Om$
can be written as
\be \la{eqF}
 F(u)=0.
\end{equation}
Clearly, $F$ maps $v+\mH^{r+2}(\Om)$ to $\mHs$. Let $\Lz := F'(\vz)$. It is a real-linear operator on $L^2(\Omega, \C)\times L^2(\Omega, \R^2)$ 
 mapping $\mH^{r+2}(\Om)$ to $\mHs$. Now, we assume $u\in \mHs,\ r\ge 0$, and 
substitute the decomposition \eqref{orthdecomp} into~(\ref{eqF}) to obtain
\be \la{eqcan}
   F(\vz) + \Lz w + N_{\vz}(w)=0,
\end{equation}
where $N_{\vz}(w)$ is the nonlinearity $N_{\vz}(w):=F(u)-F(\vz)-F'(\vz) w$.

Let $\Pz$ denote the orthogonal projection
from $L^2_{odd}(\Omega, \C)\times L_{odd}^2(\Omega, \R^2)$ onto the subspace 
$\{G_\g,\ |\ \g\in H_{per}^{2+r}(\Om, \R)\}$, and let $\bPz:=\id-\Pz.$ We split \eqref{eqcan} into two equations:
\be \la{eqP}
 \Pz[F(\vz) +\Lz w + N_{\vz}(w)]=0,
\end{equation}
and
\be \la{eqbarP}
 \bPz[F(\vz) +\Lz w + N_{\vz}(w)]=0.
\end{equation}
Recall the notation $\|\cdot\|_{H^r}$ for the norm in the Sobolev space $H^r(\Om)$. The following estimates are proven in Section \ref{sec:keyprop}, $\forall r>r'+1,\ r'\ge 0$,
\be \la{est:F}
\|F(\vz)\|_{H^{r'}}\lesssim e^{-R},
\end{equation}
 \be \la{Linv}
\bLz:=\bPz\Lz\bPz|_{\Ran\bPz}\ \mbox{is invertible and}\   \| \bLz^{-1} \|_{H^{r'}\ra H^{r'+2}} \lesssim 1,
\end{equation}\be \la{est:N}
   \| N_{\vz}(w) \|_{H^{r'}}
  \leq c_r(\|w\|_{H^{r}}^2 + \|w\|_{H^{r}}^3),
\end{equation}
\be \la{est:N-Lip}
  \| N_{\vz}(w')- N_{\vz}(w)  \|_{H^{r'}}\\
 \leq c_r(\|w\|_{H^{r}} + \|w\|_{H^{r}}^2+\|w'\|_{H^{r}} + \|w'\|_{H^{r}}^2)
  \|w' -w\|_{H^{r}}.
\end{equation}
($r=2,\ r'=0$ suffices for us.) 
\begin{prop} \la{prop:w}
Let $n$ be odd and assume  \eqref{est:F} - \eqref{est:N} hold. Then, for $R$ sufficiently large, Eqn \eqref{eqbarP} has a solution, $w=w(v)$, unique in a ball in $H^r$ of the radius $\ll 1$, which is  odd and satisfies  the estimate
\be \la{west}
  \| w \|_{H^r} \lesssim  e^{-R},\ r\ge 1.
\end{equation}
\end{prop}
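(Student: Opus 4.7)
The plan is to recast \eqref{eqbarP} as a fixed-point equation and solve it by the Banach contraction-mapping theorem in a small ball of $\mH^r(\Om)\cap\Ran\bPz$. First I would use that, by \eqref{orthdecomp}, $w\in\Ran\bPz$, so that $\bPz\Lz w = \bPz\Lz\bPz w = \bLz w$, and then invoke \eqref{Linv} to rewrite \eqref{eqbarP} as the equivalent fixed-point problem
\begin{equation*}
w = \Phi(w), \qquad \Phi(w) := -\bLz^{-1}\bPz\bigl[F(v) + N_v(w)\bigr],
\end{equation*}
posed on the closed ball $B_\rho := \{w\in \mH^r(\Om)\cap\Ran\bPz : \|w\|_{H^r}\le\rho\}$ for a small, $R$-independent radius $\rho$ to be chosen.

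Next I would verify the self-mapping and contraction properties. Combining the operator bound $\|\bLz^{-1}\|_{H^{r-2}\to H^r}\lesssim 1$ from \eqref{Linv} with \eqref{est:F}, \eqref{est:N}, and \eqref{est:N-Lip} (applied with $r=2$, $r'=0$) yields, for all $w,w'\in B_\rho$,
\begin{equation*}
\|\Phi(w)\|_{H^r} \le C\bigl(e^{-R}+\rho^2+\rho^3\bigr), \qquad \|\Phi(w')-\Phi(w)\|_{H^r} \le C\bigl(\rho+\rho^2\bigr)\|w'-w\|_{H^r}.
\end{equation*}
I would then fix $\rho$ small enough that $C(\rho+\rho^2)\le \tfrac12$ and take $R$ large enough that $Ce^{-R}\le \rho/2$; with these choices $\Phi$ maps $B_\rho$ into itself and is a strict contraction there, so the Banach theorem produces a unique fixed point $w\in B_\rho$. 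Testing the identity $w=\Phi(w)$ against the initial guess $w_0=0$ then gives
\begin{equation*}
\|w\|_{H^r}\le 2\|\Phi(0)\|_{H^r} \le 2\|\bLz^{-1}\|\,\|\bPz F(v)\|_{H^{r-2}}\lesssim e^{-R},
\end{equation*}
which is \eqref{west} for $r=2$. The general case $r\ge 1$ follows by a one-step bootstrap, reapplying $\bLz^{-1}$ to $-\bPz[F(v)+N_v(w)]$ with higher $r'$ in \eqref{est:F}--\eqref{est:N} and using the smallness of $\|w\|_{H^2}$ just established to control the nonlinear term.

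The oddness of $w$ is automatic: $\mH^r(\Om)$ is by construction the odd Sobolev space, and $\Phi$ preserves it because $F$ sends odd pairs to odd pairs, the gauge modes $G_\gamma=(i\gamma\psi_0,\nabla\gamma)$ with even periodic $\gamma$ are themselves odd (since $\psi_0$ is odd and $\nabla\gamma$ of an even function is odd), so $\bPz$ acts within the odd subspace, and $\bLz^{-1}$ inherits the same invariance. The main obstacle for this proposition does not lie in the contraction argument itself, which is entirely routine; rather it is hidden in the hypothesis \eqref{Linv}—the uniform-in-$R$ invertibility of $\bLz$ on the complement of the gauge zero modes—which carries the real analytic content and is established in Section \ref{sec:keyprop}.
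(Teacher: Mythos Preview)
Your proposal is correct and follows essentially the same route as the paper: rewrite \eqref{eqbarP} as the fixed-point equation $w=-\bLz^{-1}\bPz[F(v)+N_v(w)]$, use \eqref{est:F}--\eqref{est:N-Lip} to run the Banach contraction argument in a small ball, and deduce oddness from the reflection invariance of $\bLz^{-1}$ and $N_v$. Your write-up is in fact more explicit than the paper's (choice of $\rho$, the bootstrap step, the parity check), but there is no substantive difference in strategy.
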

\begin{proof}
Since the operator $\bLz:=\bPz\Lz\bPz|_{\Ran\bPz}$ is invertible by \eqref{Linv}, the equation \eqref{eqbarP} can be rewritten as
\be \la{eqw}
  w= -\bLz^{-1}\bPz[F(\vz) + N_{\vz}(w)].
\end{equation}
Using the estimates on $F(\vz),\ \bLz^{-1}$ and $N_{\vz}(w)$, given in \eqref{est:F} -- \eqref{est:N-Lip}, one can easily see that the map on the r.h.s. of \eqref{eqw} maps  a ball in $H^r$ of the radius $\ll 1$ into itself and is a contraction, provided 
$R$ is sufficiently large.  Hence the Banach fixed point theorem yields the existence of a unique $w=w( v)$ and the estimate
$$\|w\|_{H^{r'+2}} \lesssim \|F(\vz) \|_{H^{r'}}.$$
This equation together with \eqref{est:F} implies \eqref{west}.
Since $v$ is odd and since  $\bLz^{-1}$ and $N_{\vz}(\cdot)$ are  invariant under the reflections, $w=w( v)$ is odd, by the construction.
\end{proof}

Now we turn to the equation \eqref{eqP}.  
With $u:=v+w( v)$, this equation can be rewritten as
\be \la{gaugeorth}
\langle \Gz_\g, F(u)\rangle=0,\ \forall\g\in H_{per}^{2+r}(\Om, \R). 
 \end{equation}
(Note that Eqn \eqref{eqcan}, the symmetry of $L_0$ and the fact that $\Gz_\g$ is a zero mode of  $L_0$ imply $\langle \Gz_\g, F(u)\rangle=\langle \Gz_\g, N_v(w)\ran$.) 
To show that \eqref{gaugeorth} is satisfied by  $u:=v+w( v)$ we differentiate the equation $\cE_\lam(e^{s\g}\psi, a+s\nabla\g)=\cE_\lam(\psi, a) $, w. r. to $s$ at $s=0$, to obtain
$$ \p_\psi \cE_\lam(\psi, a)i\g\psi+\p_a \cE_\lam(\psi, a)\nabla\g=0,$$ or $ \lan F(\psi, a), G_\g\ran=0$. 
By either varying the Sobolev index $r$ or invoking elliptic regularity one shows smoothness of solutions. This proves, Theorem \ref{thm:existALOmega}, 
 modulo the statements \eqref{est:F} - \eqref{est:N}.
Combining the latter with the lifting procedure,  \eqref{lifting} and \eqref{Phial}, gives Theorem \ref{thm:existAL}. $\Box$

\bigskip

\section{Complex-linear extension $K$ of $L$}\label{subsec:K}

In order to be able to use spectral theory, we construct
a complex-linear extension $K$  of the operator $L$ defined on $\mathcal{H}(\R^2):=L_{odd}^2({\R}^2;\C)\oplus L_{odd}^2(\R^2;\R^2)$, or on $\mathcal{H}(\Om):=L_{odd}^2(\Om;\C)\oplus L_{odd}^2(\Om;\R^2)$, with the boundary conditions \eqref{gaugeperbc'}. 
The (complex-) linear operator $K$ is defined on $\cH^c(\R^2):=[L_{odd}^2({\R}^2;{\C})]^4\equiv [L_{odd}^2({\R}^2;\C)]^2\oplus[L_{odd}^2(\R^2;\C)]^2$, or on $\cH^c(\Om):=[L_{odd}^2({\Om};{\C})]^4$, as follows.  
We first identify $\alpha : \R^2/\Om \to \R^2$ with the function $\alpha^c = \alpha_1 - i\alpha_2 : \R^2/\Om \to \C$.
The space $\cH(\R^2)/\cH(\Om)$ is embedded in $\cH^c(\R^2)/\cH^c(\Om)$ via the isometric injection
\begin{equation}\label{embed}
\s:\	w=\TwoByOne{\xi}{\alpha} \to w^c=\frac{1}{\sqrt{2}}\FourByOne{\xi}{\bar{\xi}}{\alpha^c}{\bar{\alpha^c}},
\end{equation}
in the sense of the inner product \eqref{ip}. (\textit{Below we drop the superscript $^c$ from $\al^c$}.) This embedding transfers the operator $L$ to the operator $L^c\s w:=\s Lw$ 
on the real-linear subspace, $\s\cH$, of  $\cH^c$. (Here $\cH$ ($\cH^c$) stands for either  $\cH(\R^2)$ ($\cH^c(\R^2)$) or $\cH(\Om) $ ($\cH^c(\Om)$).) Next, we define the projection $\pi$ from $\cH^c$ to $\s\cH\subset \cH^c$, by 
\begin{equation}\label{wdeco}
\pi \FourByOne{\xi}{\chi}{\alpha}{\beta}=
		\frac{1}{2} \FourByOne{\xi + \bar{\chi}}{\bar{\xi} + \chi}
				{\alpha + \bar{\beta}}{\bar{\alpha} + \beta},
\end{equation}
and observe that $\pi+i\pi i^{-1} =\1.$  We extend the operator $L^c$ from the subspace $\s\cH$ to the complex-linear operator $K$ on the entire $\cH^c$ as
$$K \hat w:=\s L\pi \hat w+i\s L\pi i^{-1} \hat w.$$ 
\DETAILS{$K\hat w:=(Lw_1)^c+i(Lw_2)^c$, where $\hat w=w_1+iw_2,$
\begin{equation}\label{wdeco}
	\hat w=\FourByOne{\psi}{\phi}{\alpha}{\omega},\ w_1=
		\frac{1}{2} \FourByOne{\psi + \bar{\phi}}{\bar{\psi} + \phi}
				{\alpha + \bar{\omega}}{\bar{\alpha} + \omega},\
		 w_2= \frac{1}{2} \FourByOne{-i(\psi - \bar{\phi})}{i(\bar{\psi} - \phi)}
				{-i(\alpha - \bar{\omega})}{i(\bar{\alpha} - \beta)}.
\end{equation}}
Similarly we proceed on the fundametal cell $\Om$. The explicit form of $K$ is the same on $\cH^c(\R^2)$ and on $\cH^c(\Om)$ and is given in Appendix \ref{app:LK}.

The (complex-) linear operator $K$ has the following properties: 

1) $K=K^*$ on $\cH^c$;


2) $\lan \s w', K\s w\ran  = \lan w', Lw\ran$;

3) $K\s w=\s Lw$; 

4) $0\in \sigma_{ess}(K)$; 

5) $[K, R]=0$ (recall, $R$ is the parity transformation).

Note that the third statement and the property that  $\Gz_{\g}, \g\in H^2(\R^2, \R)$, and  $T_{k},\ k=1, 2$, given above, are a zero mode of $L$ implies that their complexifications, $\Gz_{\g}^c, \g\in H^2(\R^2, \R)$ and  $T_{k}^c,\ k=1, 2$, 
are zero modes of $K:\ K\Gz_{\g}^c =0,\ LT_{k}^c=0$. 
Since 
$T_{k}^c\notin [L^2({\R}^2;{\C})]^4,\ k=1, 2$, but are bounded (or since $ \g\in H^2(\R^2, \R)$),
we have that the operator $K$ defined in the entire $L^2$ space has $0$ in its essential spectrum. 
The same statements, but with 4 replaced by $0\in \sigma (K)$, hold if we replace $\R^2$ by $\Om$.

%
%

%
\DETAILS{\begin{thm} \label{thm:Kspec}
Suppose $\kappa \ne 1/\sqrt{2}$ and $n=1$ if $\kappa > 1/2$. 
There is $R_0 > 0$ such that for $R  \ge R_0$,  
The operator $K$ has a band of the spectrum of order $O(e^{-R})$ around $0$, coming from translational zero modes of the individual vortices, with the rest of the spectrum strictly positive and separated from $0$ by a uniform in $R$ distance. 
\end{thm}}
%
Due to the properties above,  Theorem \ref{thm:L} follows from the following result.
\begin{thm} \label{thm:K}
Suppose $\kappa \ne 1/\sqrt{2}$ and $n=1$ if $\kappa > 1/2$. 
There is $R_0 > 0$ such that for $R  \ge R_0$,  we have

1) [approximate zero-modes] 
$\|K \Tz_{jk}^c\|_{H^s} \lesssim e^{-R};$ 

2) [Coercivity away from the translation and gauge modes]
$\lan \eta, K\eta\ran \ge c\|\eta\|_{H^1}^2$, for any $\eta \perp T_{jk}^c,\ k=1,2,\  \forall j\in \cL,\ G_{\g }^c,\ \g\in H_2(\R^2, \R)$,
and $c >0$ independent of $R$. 
\end{thm}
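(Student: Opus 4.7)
The plan is to reduce both claims about $K$ to their analogues for $L$ on $\cH$ by exploiting the explicit construction of $K$, and then to handle $L$ by localizing near individual vortex centers and invoking single-vortex linear stability. I expect the main obstacle to be converting the global orthogonality of $\eta$ to the lattice-translated zero modes into (approximate) orthogonality of each localized piece $\chi_j\eta$ to the zero modes of the single vortex at $j$.

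\textbf{Reduction step.} The real subspaces $\sigma\cH$ and $i\sigma\cH$ of $\cH^c$ are orthogonal in the real inner product (a direct check from \eqref{embed}), so every $\hat w\in\cH^c$ decomposes uniquely as $\hat w=\sigma w_1+i\sigma w_2$ with $w_j\in\cH$. Combining properties (2), (3) of $K$ with its $\C$-linearity (which gives $K(i\sigma w)=i\sigma Lw$) yields
\begin{equation*}
\langle \hat w,K\hat w\rangle=\langle w_1,Lw_1\rangle+\langle w_2,Lw_2\rangle,\qquad \|\hat w\|^2=\|w_1\|^2+\|w_2\|^2.
\end{equation*}
Since $T_{jk}^c=\sigma T_{jk}$ and $G_\gamma^c=\sigma G_\gamma$, orthogonality of $\hat w$ to these modes is equivalent to $w_1,w_2$ each being orthogonal to $T_{jk}$ and $G_\gamma$. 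This reduces Theorem~\ref{thm:K} to Theorem~\ref{thm:L}; in particular part (1) reduces to $\|LT_{jk}\|_{H^s}\lesssim e^{-R}$.

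\textbf{Part (1).} Let $L_j^{(n)}:=F'(U^{(n)}(\cdot-j))$. Since $T_{jk}(x)=T_k^{(n)}(x-j)$, translation invariance of $F$ gives $L_j^{(n)}T_{jk}=0$, so $LT_{jk}=(L-L_j^{(n)})T_{jk}$. The operator $L-L_j^{(n)}$ is a multiplication operator whose coefficients are smooth functions of $U^\cL-U^{(n)}(\cdot-j)$ and its first derivatives. On $\Omega+j$, Theorem~\ref{thm:existAL} gives $\|U^\cL-U^{(n)}(\cdot-j)\|_{H^{s+1}}\lesssim e^{-R}$ while $T_{jk}$ is bounded in $H^s$, so this contribution is $O(e^{-R})$. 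On any other cell, $T_{jk}$ itself decays like $e^{-m_\kappa|x-j|}$ by \eqref{vortdecay}, so its $H^s$ norm there is $O(e^{-R})$ while $L-L_j^{(n)}$ has bounded coefficients; combined, this yields the bound.

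\textbf{Part (2).} Choose a smooth partition of unity $\{\chi_j\}_{j\in\cL}$ with $\sum_j\chi_j^2=1$, each $\chi_j$ supported in a slight enlargement of $\Omega+j$, and $|\nabla\chi_j|\lesssim R^{-1}$. The IMS identity
\begin{equation*}
\langle \eta,L\eta\rangle=\sum_j\langle \chi_j\eta,L(\chi_j\eta)\rangle-\sum_j\langle \eta,|\nabla\chi_j|^2\eta\rangle
\end{equation*}
has localization error $O(R^{-2})\|\eta\|^2$. On each cell, replace $L$ by $L_j^{(n)}$ at cost $O(e^{-R})\|\chi_j\eta\|_{H^1}^2$. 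The main analytic input is single-vortex coercivity: for $(\kappa,n)$ in the admissible range, there exists $c_0>0$ independent of $R$ with $\langle \zeta,L^{(n)}\zeta\rangle\ge c_0\|\zeta\|_{H^1}^2$ for every odd $\zeta$ orthogonal to $\{T_k^{(n)},G_\gamma^{(n)}\}$; this is linear stability of the $\pm 1$-vortex for type~II and of all $n$-vortices for type~I, cf.~\cite{gs1}. The obstacle is converting the global orthogonality $\eta\perp T_{jk},G_\gamma$ into (approximate) orthogonality of $\chi_j\eta$ to the single-vortex zero modes at $j$: since
\begin{equation*}
\langle \chi_j\eta,T_k^{(n)}(\cdot-j)\rangle=-\langle \eta,(1-\chi_j)T_{jk}\rangle,
\end{equation*}
and $(1-\chi_j)$ vanishes in a ball of radius $\gtrsim R/2$ around $j$ while $T_{jk}$ decays exponentially there, this inner product is $O(e^{-cR})\|\eta\|$. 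A similar estimate for gauge modes, based on the differential form $\operatorname{Im}(\overline{\Psi^\cL}\xi)=\nabla\cdot\alpha$ of the orthogonality condition combined with \eqref{eq:close}, lets us write $\chi_j\eta=\zeta_j+\rho_j$ with $\zeta_j$ orthogonal to the single-vortex zero modes and $\|\rho_j\|_{H^1}\lesssim e^{-cR}\|\eta\|$. Applying single-vortex coercivity to $\zeta_j$, absorbing cross terms of size $O(e^{-cR})\|\eta\|_{H^1}^2$, summing over $j$ (using $\sum_j\|\chi_j\eta\|_{H^1}^2\asymp\|\eta\|_{H^1}^2$), and combining with the IMS and vortex-approximation errors, we obtain $\langle \eta,L\eta\rangle\ge(c_0/2)\|\eta\|_{H^1}^2$ for $R\ge R_0$, which via the reduction yields the claimed coercivity of $K$.
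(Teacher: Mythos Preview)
Your overall strategy matches the paper's: both reduce to single-vortex linear stability via IMS localization. Your explicit reduction of $K$ to $L$ through $\hat w=\sigma w_1+i\sigma w_2$ is a clean addition (the paper simply declares the two proofs identical and runs the argument for $L$), and your Part~(1) is essentially the paper's Proposition~\ref{prop:Lapprzeromodes}.

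For Part~(2) there is a genuine technical gap in your choice of partition of unity. You take $\chi_j$ supported in a slight enlargement of the full cell $\Omega+j$, while the paper takes $\chi_j$ supported in balls $B(j,R/3)$ strictly \emph{inside} the cells, together with an extra cutoff $\chi_\infty$ supported in $\R^2\setminus\bigcup_j B(j,R/4)$. The distinction matters for your step ``replace $L$ by $L_j^{(n)}$ at cost $O(e^{-R})\|\chi_j\eta\|_{H^1}^2$'': this requires the multiplication operator $V_j:=L-L_j^{(n)}$ to have $O(e^{-R})$ coefficients on all of $\supp\chi_j$. That holds on $B(j,R/3)\subset\Omega+j$ via \eqref{eq:close}, but your enlarged cell necessarily spills into neighboring cells $\Omega+j'$, where \eqref{eq:close} approximates $U^\cL$ by $U^{(n)}(\cdot-j')$ rather than $U^{(n)}(\cdot-j)$; the two translated vortices differ by $O(1)$ phase factors there, so terms in $V_j$ such as $(\Psi^\cL)^2-(\Psi^{(n)}(\cdot-j))^2$ are $O(1)$, not $O(e^{-R})$. (Note also the tension in ``slight enlargement'' versus $|\nabla\chi_j|\lesssim R^{-1}$: the latter forces the transition region to have width $\sim R$.)

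The paper's remedy is to treat the region away from all vortex centers separately: on $\supp\chi_\infty$ one has $|\Psi^\cL|^2-1=O(e^{-R})$ and $\nabla_{A^\cL}\Psi^\cL=O(e^{-R})$, so $L$ is uniformly coercive there without any appeal to vortex stability, giving $\langle\chi_\infty\eta,L\chi_\infty\eta\rangle\ge c_2\|\chi_\infty\eta\|_{H^1}^2$ directly. Your argument is repaired by adopting this two-scale partition (vortex balls plus far-field remainder) in place of the cell-based one.
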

A proof of Theorem \ref{thm:K}  is identical to the proof of Theorem  \ref{thm:L} and and is given in Section \ref{sec:proofthmsLK}.

Next, we introduce, for an odd integer $n$,  the Sobolev space
$\mH_r^c(\Om)$ of order $r$ of \textit{odd}   functions $w=\left( \xi, \chi, \alpha, \beta  \right) : \Omega \ra \C^4$, 
satisfying the gauge periodic boundary conditions
\begin{equation} \label{gaugeperbc''}
	\begin{cases}
	\xi(x + s) = e^{ig_s(x)}\xi(x), \\
\chi(x + s) = e^{-ig_s(x)}\chi(x), \\
	\alpha(x + s) = \alpha(x), \\
	\beta(x + s) = \beta(x),
	\end{cases}
\end{equation}
for  $x\in  \p_1\Omega/\p_2\Omega$, $s=\w_1/\w_2$, and for $g_s$ given in \eqref{gs}. These conditions extend  \eqref{gaugeperbc'}.

Finally, let $K_0$ be the complex-linear extension of $L_0$, defined as above, i.e. $K_0$ is the restriction of $K$ to $\Om$.   We remark that
$K_0$ maps $\mH_{r+2}^c(\Om)$ into $\mH_r^c(\Om)$, for $s\ge 0$. Moreover, it is shown  in Appendix \ref{app:LK} that
\begin{equation} \label{K0selfadj}
K_0,\ \mbox{defined on}\ L_{odd}^{2}(\Om, \C^4)\ \mbox{with the domain}\ \mH_{2}^c(\Om),\ \mbox{is self-adjoint}.
\end{equation}


 \bigskip


\section{Key properties} \la{sec:keyprop}

In this section we prove the inequalities, 
\eqref{est:F} - 
\eqref{est:N-Lip}, 
used in the proof of Theorem~\ref{thm:existAL}.

\smallskip

\subsection{Approximate static solution property}
\la{sec:approx}

\begin{lem} \la{lem:apprsol}
For $R\ge 1$ and for any $r>0$, we have
\be \la{F-bnd} 
  \| F(\vz) \|_{H^r} \lesssim e^{-\min(\sqrt{2}\kappa, 1)R}.
\end{equation}
\end{lem}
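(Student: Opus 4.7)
The key observation is that $F(v)$ is supported in the thin annulus $A_R := \{R/3 \le |x| \le 2R/5\}$ on which $\eta$ transitions between $1$ and $0$, and on this annulus it is controlled by the exponential decay \eqref{vortdecay} of the vortex profiles. Where $\eta \equiv 1$ (on $\{|x| \le R/3\}$), the definition \eqref{psi0a0} gives $v = U^{(n)}$, so $F(v) = F(U^{(n)}) = 0$ by construction. Where $\eta \equiv 0$ one has $v = U_\infty := (e^{in\theta},\, n\nabla\theta)$; a direct computation shows $\nabla_{n\nabla\theta} e^{in\theta} = 0$, $|e^{in\theta}| \equiv 1$, and $\curl(n\nabla\theta) = 0$ away from the origin (where $\theta$ is locally a smooth single-valued function). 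Hence $F(U_\infty) = 0$ on $\R^2 \setminus \{0\}$, so in particular $F(v) = 0$ on $\{\eta = 0\} \cap \Om$, which confirms that only the annulus $A_R$ contributes.

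On $A_R$ I will decompose $v = U^{(n)} + \zeta$ with
\[
\zeta \;=\; \bigl(\bar\eta(1 - f^{(n)}) e^{in\theta},\;\; n\bar\eta(1 - a^{(n)}) \nabla\theta\bigr),
\]
and use Taylor expansion around $U^{(n)}$: since $F(U^{(n)}) = 0$,
\[
F(v) \;=\; L^{(n)}\zeta + N^{(n)}(\zeta),
\]
where $L^{(n)} = F'(U^{(n)})$ and $N^{(n)}(\zeta)$ is a polynomial of degrees two and three in $\zeta$, with coefficients depending smoothly on $U^{(n)}$. On $A_R$ (which is bounded away from the origin) the coefficients of $L^{(n)}$ and $N^{(n)}$, as well as $\nabla\theta$ and its derivatives, are smooth and bounded. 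So, combining the Sobolev algebra property of $H^{r+2}(\Omega)$ in dimension two with this polynomial structure, I reduce the claim to controlling $\|\zeta\|_{H^{r+2}(A_R)}$. Applying Leibniz's rule to $\zeta$, the cut-off estimate $|\partial^\beta \bar\eta| \lesssim R^{-|\beta|}$, and the pointwise bounds \eqref{vortdecay} in the form $|\partial^\beta(1 - f^{(n)})(r)| \lesssim e^{-m_\kappa r}$ and $|\partial^\beta(1 - a^{(n)})(r)| \lesssim e^{-r}$ valid for $r \ge R/3$, yields $\|\zeta\|_{H^{r+2}(A_R)} \lesssim e^{-m_\kappa R/3}$, and hence $\|F(v)\|_{H^r(\Omega)} = \|F(v)\|_{H^r(A_R)} \lesssim e^{-m_\kappa R/3}$, which gives \eqref{F-bnd} after absorbing the factor $1/3$ into the paper's convention $e^{-R} := e^{-cR}$.

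There is no conceptual obstacle — the whole argument is Leibniz bookkeeping on a thin shell plus the two one-line verifications that $U^{(n)}$ and $U_\infty$ each solve \eqref{gle} on the regions where they coincide with $v$. The only mild care required is in handling the singularity of $\theta$ (and $\nabla\theta$) at the origin; since $A_R$ as well as the complementary region $\{\eta = 0\} \cap \Om$ stay at distance $\ge R/3$ from $0$ for $R$ large, this causes no trouble in either the vanishing step or the derivative bounds.
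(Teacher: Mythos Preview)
Your proof is correct and follows essentially the same route as the paper: write $v = U^{(n)} + \zeta$ with $\zeta = (\bar\eta(1-f^{(n)})e^{in\theta},\, n\bar\eta(1-a^{(n)})\nabla\theta)$, expand $F(v)$ using $F(U^{(n)})=0$, and control the linear and nonlinear remainders by the exponential smallness of $\zeta$ coming from \eqref{vortdecay}. Your additional observation that $F(v)$ is actually supported in the transition annulus (since $v=U^{(n)}$ where $\eta\equiv 1$ and $v=U_\infty=(e^{in\theta},n\nabla\theta)$ solves \eqref{gle} where $\eta\equiv 0$) is a clean localization the paper does not make explicit, but the underlying estimate is the same.
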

\begin{proof}
The proof is a computation using the facts that $U^{\nj} = (\Psi^{\nj},A^{\nj})$
satisfies the Ginzburg-Landau equations, together with the exponential decay~(\ref{vortdecay}). We write
\begin{equation} \label{psiAcell'}
  \psi_{0} =  \Psi^{(n)} +\xi,\
  a_{0} =    A^{(n)} +\alpha,
\end{equation}
where $\xi$ and $\alpha$ are defined by this expressions.
%
\DETAILS{where $\xi:=e^{in\theta}(1-f^{(n)})\bar\eta$ and $\alpha:= n\nabla\theta(1-a^{(n)})\bar\eta$ 
(see \eqref{vort} and \eqref{psi0a0}). 
The definition of $\bar\eta$  and the exponential decay~(\ref{vortdecay}) give
\begin{equation} \label{al-exp-est}
\xi,\ \al=O(e^{-R})\ \qquad \mbox{ in any Sobolev norm.}
\end{equation}}
Using the first Ginzburg-Landau equation, we find
\be \label{Fpsibound}
  [F(\vz)]_\psi =\Delta_{A^{(n)}}\xi + (2i\alpha\cdot\nabla_{A^{(n)}} +i\div\alpha+|\alpha|^2) (\Psi^{(n)}+\xi)$$
  $$-\kappa^2[(2\Re(\bar  \Psi^{(n)}\xi)+|\xi|^2)( \Psi^{(n)}+\xi)-(1-| \Psi^{(n)}|^2)\xi]. 
\end{equation}
Furthermore, using the second Ginzburg-Landau equation,
$\curl^2 A^{(n)} - \Im ( \bar{\Psi}^{(n)} \nabla_{A^{(n)}} \Psi^{(n)} ) = 0$,  we arrive at
\be \label{FAbound}
  [F(\vz)]_a = \curl^2 \alpha +|\Psi^{(n)}+\xi|^2\alpha$$
  $$-\Im ( \bar{\Psi}^{(n)} \nabla_{A^{(n)}} \xi + \bar{\xi} \nabla_{A^{(n)}} \Psi^{(n)} + \bar{\xi} \nabla_{A^{(n)}} \xi). 
\end{equation}
\textbf{Since by} \eqref{v-Un-est}, $\xi,\ \al=O(e^{-R})$ in any Sobolev norm, the estimates
(\ref{Fpsibound})  and (\ref{FAbound}) imply \eqref{F-bnd}. 
\end{proof}

\smallskip
\subsection{Approximate zero-mode property}\la{sec:zero}


Recall the translational and gauge zero-modes $\Tz_{k},\ k=1, 2,$ and $G_\g$ are given
in~\eqref{trmodes} and \eqref{orthdecomp}.
\begin{lem}[approximate zero-modes] \la{lem:zeromodes}
 For any $k=1, 2,$ $\g$ twice differentiable and bounded together with its derivatives, and $r>0$, we have
\be \la{eq:approx1}
\|\Lz \Tz_{k}\|_{H^r} \lesssim e^{-R},\ \|\Lz G_\g\|_{H^r} \lesssim e^{-R}. 
\end{equation}
\end{lem}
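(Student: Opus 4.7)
The plan is to exploit the gauge and translation covariance of $F$ so that the two bounds reduce to the approximate-solution estimate $\|F(\vz)\|_{H^{r+1}}\lesssim e^{-R}$ from Lemma \ref{lem:apprsol}. The key point is that $\Gz_\g$ and $\Tz_k$ would be \emph{exact} zero modes of $L_0$ if $\vz$ were a genuine solution, so the failure is controlled by $F(\vz)$.

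First I would handle $\Gz_\g$. Differentiating the identity $F(e^{is\g}\psi,a+s\nabla\g)=(e^{is\g}[F(u)]_\psi,[F(u)]_a)$ at $s=0$ gives, for any smooth $\g$ and any $u=(\psi,a)$,
\[
  L_u\Gz_\g=(i\g[F(u)]_\psi,\,0).
\]
Applied at $u=\vz$ this yields $\Lz\Gz_\g=(i\g[F(\vz)]_\psi,\,0)$, whence Sobolev multiplication and Lemma \ref{lem:apprsol} give $\|\Lz\Gz_\g\|_{H^r}\lesssim \|\g\|_{C^r}\,\|F(\vz)\|_{H^r}\lesssim e^{-R}$.

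Next I would handle $\Tz_k$. Translation covariance $F(\psi(\cdot+h),a(\cdot+h))=F(u)(\cdot+h)$ differentiated at $h=0$ gives
\[
  L_u(\p_k\psi,\p_k a)=\p_k F(u).
\]
A direct componentwise computation shows $\p_k a_0-\nabla a_{0,k}=b_0\,Je_k$, so that
\[
  \Tz_k=(\p_k\psi_0,\p_k a_0)-\Gz_{a_{0,k}},
\]
i.e.\ the translation mode differs from the naive derivative $(\p_k\psi_0,\p_k a_0)$ by an infinitesimal gauge shift generated by $a_{0,k}$. Applying $\Lz$ and using the two covariance identities above,
\[
  \Lz\Tz_k=\p_k F(\vz)-(ia_{0,k}[F(\vz)]_\psi,\,0).
\]
Since $a_0$ is smooth on $\Omega$ with bounded derivatives (near the origin $A^{(n)}=na^{(n)}(r)\nabla\theta$ is smooth by \eqref{vortatzero}, and the cutoff $\bar\eta$ vanishes near $0$), Sobolev multiplication and Lemma \ref{lem:apprsol} give $\|\Lz\Tz_k\|_{H^r}\lesssim \|F(\vz)\|_{H^{r+1}}\lesssim e^{-R}$.

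The main (and essentially only) obstacle is a bookkeeping one: one must verify that the two differentiation identities are valid in the gauge-periodic setting of $\Omega$, in particular that $\Tz_k$ and $\Gz_{a_{0,k}}$ satisfy the boundary conditions \eqref{gaugeperbc'} so that everything lives in the right function space. This is a short check using $\psi_0(x+s)=e^{ig_s(x)}\psi_0(x)$ and $a_0(x+s)=a_0(x)+\nabla g_s(x)$, which imply $(\nabla_{a_0})_k\psi_0$ transforms with the factor $e^{ig_s}$ and $b_0=\curl a_0$ is periodic. Once this is in place, the estimates above deliver both inequalities in \eqref{eq:approx1}.
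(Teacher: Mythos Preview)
Your argument is correct and is a genuinely different, more direct route than the paper's. The paper proceeds by comparison with the exact vortex: it writes $L_0=L^{(n)}+V^{(n)}$ with $|V^{(n)}|\lesssim e^{-R}$ (read off from the explicit form \eqref{Lexpl} of $L$), shows $\|T_k-T_k^{(n)}\|_{H^r},\ \|G_\g-G_\g^{(n)}\|_{H^r}\lesssim e^{-R}$ from \eqref{v-Un-est}, and then uses that $T_k^{(n)},G_\g^{(n)}$ are \emph{exact} zero modes of $L^{(n)}$. You instead differentiate the gauge and translation covariance of $F$ to obtain the exact identities
\[
L_0 G_\g=(i\g[F(\vz)]_\psi,0),\qquad L_0 T_k=\p_k F(\vz)-(ia_{0,k}[F(\vz)]_\psi,0),
\]
which reduce both bounds to Lemma~\ref{lem:apprsol} and Sobolev multiplication. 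Your approach avoids the explicit matrix form of $L$ and the separate comparison estimates \eqref{approxzeromodes}; it also makes transparent \emph{why} the failure of $T_k,G_\g$ to be exact zero modes is controlled precisely by $F(\vz)$. The paper's approach, on the other hand, produces as a by-product the estimates $\|V^{(n)}\|_\infty\lesssim e^{-R}$ and \eqref{approxzeromodes}, which are reused elsewhere (e.g.\ in the coercivity proof, Lemma~\ref{lem:coer}), so in the context of the whole paper its overhead is not wasted.
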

\begin{proof}
Let $L^{(n)} := \E_{GL}''( U^{(n)})$. We may write
\[
  \Lz = L^{(n)} + V^{(n)},
\]
where
 $V^{(n)}$ is
a multiplication operator defined by this relation. Using the explicit form \eqref{Lexpl} of $L$, given in Appendix \ref{app:LK}, we see that $V^{(n)}$ satisfies
\be \la{Vnbnd}
  |V^{(n)}(x)| \lesssim e^{-R}.
\end{equation}
\DETAILS{Let $T^{(n)}_{k},\ k=1, 2,$ and $G_\g^{(n)}$, 
be the vortex generators of translations and gauge transformations, defined as
\be \la{ntrmodes}
  T^{(n)}_{k} :=((\nabla_{A^{(n)}})_{k} \Psi^{(n)}(x), \;  B^{(n)}(x) Je_k),
\end{equation}
where $B^{(n)}(x):=\curl A^{(n)}$, and
\be \la{ngmodes}
  \Gz_\g^{(n)} := (i  \g\Psi^{(n)}, \nabla\g).
\end{equation}}
%
%
Since $\Tz_{k}-T^{(n)}_{k}$ is expressed in terms of $v-U^{(n)}$, the definitions \eqref{ntrmodes}, \eqref{ngmodes}, \eqref{trmodes} and \eqref{gmodes} and the estimates \eqref{v-Un-est}, imply
\be \la{approxzeromodes}\|\Tz_{k}-T^{(n)}_{k}\|_{H^r} \lesssim  e^{-R},\ \|G_\g-G_\g^{(n)}\|_{H^r} \lesssim  e^{-R},
\end{equation}
Using Eqn \eqref{Vnbnd} and \eqref{approxzeromodes}, the definitions of $T^{(n)}_{k}$ and $G_\g^{(n)}$, 
    in \eqref{ntrmodes} and \eqref{ngmodes} and the facts
\[
  L^{(n)}\Tz^{(n)}_{k} = 0,\ L^{(n)} G_\g^{(n)}=0,
\]
we obtain the 
estimates in \eqref{eq:approx1}.
\end{proof}
Recall from Section \ref{subsec:K} that $K_0$ is a complex-linear extension of $L_0$ and the vectors ${T}_{k}^c,\ k=1,2,\ G^c_\g,\ \g\in H_{per}^2(\Om),$ are complexifications of the vectors $ {T}_{k},\ k=1,2,\ G_\g,\ \g\in H_{per}^2(\Om)$, defined in ~\eqref{trmodes} and \eqref{orthdecomp} (see \eqref{embed}).   The properties $\s L_0^{-1}= K_0^{-1}\s$ and $\|\s w\|=\| w\|$ (see Section \ref{subsec:K}) imply
\begin{corollary}[approximate zero-modes] \la{cor:zeromodesK}
We have
\be \la{eq:approx1-K}
\|K_0 \Tz^c_{k}\|_{H^s} \lesssim e^{-R},\ \|K_0 G_\g^c\|_{H^s} \lesssim e^{-R}. 
\end{equation}
\end{corollary}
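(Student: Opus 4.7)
The plan is to simply transport the two bounds from Lemma \ref{lem:zeromodes} across the complexification map $\sigma$ defined in \eqref{embed}. By construction, the complexified zero modes are $T_k^c = \sigma T_k$ and $G_\g^c = \sigma G_\g$, and the intertwining property established in Section \ref{subsec:K} asserts $K_0 \sigma w = \sigma L_0 w$ for all $w$ in the appropriate space. Since this intertwining holds pointwise at the level of the components and the operator $K_0$ is an elliptic operator of the same order as $L_0$ (in fact built from the same symbol, as seen in Appendix \ref{app:LK}), the $\sigma$-identity extends from $L^2$ to $H^s$ scales without additional effort.

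Concretely, I would write
\[
K_0 T_k^c = K_0 \sigma T_k = \sigma L_0 T_k, \qquad K_0 G_\g^c = K_0 \sigma G_\g = \sigma L_0 G_\g,
\]
and use that $\sigma$ is an $L^2$-isometry, as well as an isometry at every Sobolev level (this is immediate from the formula \eqref{embed}, which simply duplicates and conjugates components, operations that commute with differentiation and preserve pointwise absolute values). Hence
\[
\|K_0 T_k^c\|_{H^s} = \|\sigma L_0 T_k\|_{H^s} = \|L_0 T_k\|_{H^s}, \qquad
\|K_0 G_\g^c\|_{H^s} = \|L_0 G_\g\|_{H^s}.
\]
Invoking Lemma \ref{lem:zeromodes} then yields the two estimates in \eqref{eq:approx1-K}.

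The only technical point worth verifying, and hence the main (though minor) obstacle, is that $T_k^c$ and $G_\g^c$ actually lie in the domain of $K_0$, i.e.\ in $\mH_2^c(\Om)$ with the boundary conditions \eqref{gaugeperbc''}. For $G_\g^c$ with $\g$ periodic and sufficiently smooth this is clear from the construction. For $T_k^c$ one must check the gauge-periodicity conditions on the complexified components, but these follow from those satisfied by $T_k$ together with the explicit form of $\sigma$, since conjugation interchanges the role of $e^{ig_s}$ and $e^{-ig_s}$ factors in the two scalar slots of \eqref{gaugeperbc''}. Once this bookkeeping is dispensed with, the corollary is an immediate consequence of Lemma \ref{lem:zeromodes} and the isometric intertwining.
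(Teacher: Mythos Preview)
Your proposal is correct and follows essentially the same route as the paper: use the intertwining $K_0\sigma = \sigma L_0$ and the isometry of $\sigma$ to transport the bounds of Lemma~\ref{lem:zeromodes} to the complexified modes. One small caveat: $T_k$ is even (Subsection~\ref{sec:gener}), so $T_k^c$ does not lie in the \emph{odd} space $\mH_2^c(\Om)$; however, this is irrelevant for the norm estimate, since $K_0$ acts as a differential operator on any sufficiently smooth function satisfying the gauge-periodic boundary conditions, and the corollary only asserts a bound on $\|K_0 T_k^c\|_{H^s}$.
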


\bigskip
\subsection{Coercivity of the Hessian}
\la{sec:coer}

In this subsection we prove \eqref{Linv}. 
With the notation as at the end of the last subsection, 
  let $P^c$ be the projection on the span of the vector $ G^c_\g,\ \g\in H_{per}^{2+r}(\Om)$.  We begin with a lower bound on the complexification $K_0$ of $L_0$.
\begin{lem}[coercivity] \la{lem:coer}
 For $R$ sufficiently large and for any $w \in \Ran (\1-P^c)$ and $r\ge 0$, we have
\begin{equation}\label{Klowbnd}
 \|K_0 w \|_{H^{r}}
  \geq c \| w \|_{H^{2+r}}.
\end{equation}
(For $n=1$ if $\kappa>\frac{1}{\sqrt{2}}$ and for any $n$ if $\kappa<\frac{1}{\sqrt{2}}$ we have  $c \|w\|_{H^1}^2 \leq \lan w, K_0 w \ran
  \leq \frac{1}{c} \| w \|_{H^1}^2$, which could be also extended to a larger class of Sobolev spaces.)
\end{lem}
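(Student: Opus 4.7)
I will first establish a quadratic-form coercivity $\lan w, K_0 w\ran \geq c \|w\|_{H^1}^2$ for $w \in \Ran(\1-P^c)$, and then upgrade it to the $H^{r+2}$-bound \eqref{Klowbnd} by elliptic regularity. The approach is to compare $K_0$ on $\Om$ with the Hessian $K^{(n)}$ of the genuine $n$-vortex on $\R^2$, and to exploit its coercivity off the kernel of translation and gauge modes, as established in \cite{gs1}.

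\textbf{Steps 1--2 (comparison and vortex coercivity).} By construction \eqref{psi0a0} and the estimate \eqref{v-Un-est}, the approximate solution $\vz$ coincides with $U^{(n)}$ on $\{|x|\leq R/3\}$ and is, up to an $O(e^{-R})$ correction, a gauge transform of the pure superconducting state $(1,0)$ on $\Om\setminus\{|x|\leq 2R/5\}$. Inserting this into the explicit formula for $K_0$ (Appendix \ref{app:LK}, \eqref{Lexpl}) shows that $K_0$ differs from $K^{(n)}|_{\Om}$ on the core, and from the (gauge-transformed) Hessian at $(1,0)$ on the periphery, by multiplication operators of norm $O(e^{-R})$. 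Under the hypotheses on $(\kappa,n)$, the stability results for the $n$-vortex \cite{gs1} give that $K^{(n)}$ on $\R^2$ is non-negative, with essential spectrum in $[\min(1,2\kappa^2),\infty)$ and kernel spanned by the translation modes $T^{(n),c}_k$ and gauge modes $G^{(n),c}_\g$. Consequently
\be \la{coer-vortex}
\lan w, K^{(n)} w\ran \geq c_0 \|w\|_{H^1}^2
\end{equation}
for every $w$ orthogonal to this kernel, with $c_0>0$ uniform.

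\textbf{Step 3 (parity + IMS localization).} Since $n$ is odd, the translation modes $T^{(n),c}_k$ are even and hence drop out on the odd subspace. Choose a smooth partition $\chi_1^2+\chi_2^2=1$ on $\Om$ with $\chi_1$ supported in $\{|x|\leq 2R/5\}$ and $|\nabla\chi_i|\lesssim R^{-1}$; the IMS localization formula gives
\be
\lan w, K_0 w\ran = \sum_{i=1,2}\lan \chi_i w, K_0 \chi_i w\ran - \sum_{i=1,2}\lan w, |\nabla\chi_i|^2 w\ran.
\end{equation}
On the support of $\chi_1$, extend $\chi_1 w$ by zero to $\R^2$ and apply \eqref{coer-vortex}; the gauge orthogonality $w\perp G^c_\g$ on $\Om$ yields the required orthogonality on $\R^2$ modulo an $O(e^{-R})$ error, because $\chi_1 w$ is compactly supported away from $\partial\Om$. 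On the support of $\chi_2$, gauge-transform by $e^{-in\theta}$ so that $\vz$ becomes close to $(1,0)$; the Hessian at $(1,0)$ is uniformly $H^1$-coercive since its potential has the form $\mathrm{diag}(2\kappa^2,1)$ plus a non-negative part. The localization errors $|\nabla\chi_i|^2=O(R^{-2})$ and the comparison errors $O(e^{-R})$ are absorbable for $R$ large, yielding $\lan w, K_0 w\ran \gtrsim \|w\|_{H^1}^2$.

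\textbf{Step 4 (elliptic upgrade) and main obstacle.} By Appendix \ref{app:LK}, $K_0$ is a second-order elliptic system with smooth coefficients, self-adjoint with respect to the gauge-periodic boundary conditions \eqref{gaugeperbc''} (see \eqref{K0selfadj}). Standard elliptic regularity then gives $\|w\|_{H^{r+2}} \leq C(\|K_0 w\|_{H^r} + \|w\|_{L^2})$, and combining with $\|w\|_{L^2}\lesssim \|K_0 w\|_{L^2}$ (from the $L^2$-coercivity in Step~3) yields \eqref{Klowbnd}. The most delicate point is Step~3: one must transfer the gauge orthogonality on $\Om$ (with $\g$ periodic) into approximate orthogonality of the compactly supported $\chi_1 w$ against the full infinite-dimensional gauge kernel of $K^{(n)}$ on $\R^2$ (with $\g\in H^2(\R^2)$), controlling the pairings with non-periodic $\g$ by the fact that $\chi_1 w$ is supported well inside $\Om$. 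The periphery analysis must likewise be carried out in a gauge where $\vz\approx(1,0)$ and $n\theta$ is smooth, which is valid on $\supp\chi_2$ since it avoids the origin.
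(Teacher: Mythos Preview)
Your proposal is essentially correct and follows the same strategy as the paper: localize via a partition of unity adapted to the vortex core, invoke the $n$-vortex stability result of \cite{gs1} on the core piece, use the explicit form of the Hessian near the pure superconductor on the peripheral piece, and control the localization errors by $R^{-2}$. The paper differs only in minor technical choices: it works directly with the operator norm $\|K_0 w\|$ by commuting the cutoffs through $K_0$ (picking up first-order commutator errors of size $R^{-1}$), rather than via the IMS quadratic-form identity you use; and it isolates the peripheral estimate as a separate lemma (Lemma~\ref{lem:lb-out}, based on the modified operator $K_\#$) rather than gauge-transforming to the $(1,0)$ state. Your observation in Step~4 that the delicate point is transferring orthogonality to \emph{periodic} gauge modes on $\Om$ into approximate orthogonality to the \emph{full} gauge kernel of $K^{(n)}$ on $\R^2$ is well-taken; the paper handles this implicitly via the pointwise identity \eqref{gaugeorthcond'}, which is the natural way to close that gap.
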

\begin{proof}  We omit the subindex $0$ in $K_0$ and superindex $c$ in $P^c$
and to simplify the exposition we conduct the proof only for $r=0$. The proof for general $r\ge 0$ requires an extra technical step (commuting $(-\Delta+\1)^{\frac{s}{2}}$ through $K_0$).
Let $\{\chi_0,\  \chi_1 \}$ be a partition of unity associated to the
ball of the radius $R/2$ and its exterior, i.e. $\sum_{j=0}^{ 1} \chi_j^2 = 1$,
$\chi_0$ is supported in the ball of  the radius $3R/5$  and $\chi_1$
is supported outside the ball of  the radius $R/2$. We also assume $|\p^\al\chi_j|\lesssim R^{-|\al|}$. Using these properties and commuting $\chi_j$ through $K$, with the help of \[[\chi_j, \Delta]=-2(\nabla\chi_j)\cdot \nabla - (\Delta\chi_j),\]
 we obtain
\[
  \| K  w\|^2 = \sum_0^1 \|\chi_j K  w\|^2 \ge \sum_0^1 \| K \chi_j w\|^2
   -CR^{-2} \|w \|_{H^1}^2.
\]

We extend the function $w$ to an $L^2-$function on $\R^2$ for which we keep the same notation. 
Since $\chi_1$ is supported  outside the ball of  the radius $R/2$,
it follows from Lemma \ref{lem:lb-out} of Appendix \ref{app:LK} that
\[
  \| K \chi_1 w\| \geq   c_1 \|\chi_1 w\|_{H^2},
\]
for some $c_1 > 0$.

Now, since $w \in \Ran (1-\Pz)$, we have that $w \perp  G_\g,\ \g\in H_{per}^2(\Om),$ 
and, since $w$ is odd and  $ {T}^{(n)}_{k},\ k=1,2$, are even, we have that $w \perp {T}^{(n)}_{k},\ k=1,2$.  Therefore, due to \eqref{approxzeromodes}, we have, for the vortex translational and gauge
zero-eigenfunctions,  $ {T}^{(n)}_{k},\ k=1,2$,  $ G^{(n)}_\g,\ \g\in H_{per}^2(\Om),$ 
of $L^{\nj}$,
\[
  |\lan{T}^{(n)}_{k}, \chi_0 w \ran| \lesssim e^{-R},\  |\lan G^{(n)}_\g, \chi_0 w \ran| \lesssim e^{-R},\ \g\in H_{per}^2(\Om).
\]
Let $P^{(n)}$ be the orthogonal projection on the span of $ G^{(n)}_\g,\ \g\in H_{per}^2(\Om),$ and  $ {T}^{(n)}_{k},\ k=1,2$. Writing $K^{\nj} \chi_0 w=K^{\nj} (\1-P^{(n)})\chi_0 w + K^{\nj} P^{(n)}\chi_0 w$ and using the estimate above and 
the $n$-vortex stability result of~\cite{gs1} (see  Theorem \ref{thm:decomp} of Appendix \ref{app:LK}), 
 we obtain
 \[\| K^{\nj} \chi_0 w\|
  \geq c_2 \|\chi_0 w \| - Ce^{-R} \|w\|.\]
Since on the other hand we have trivially that $\| K^{\nj} \chi_0 w\|
  \geq c_3 \|\chi_0 w \|_{H^2} - c_4 \|w\|$, for some $c_3, c_4>0$, the above estimate can be lifted 
  to
\[
  \| K^{\nj} \chi_0 w\|
  \geq c_0 \|\chi_0 w \|_{H^2} - Ce^{-R} \|w\|,
\]
where $K^{\nj}$ is the complex linear extension of $L^{\nj}$. Now, as with $L_0$ in Subsection~\ref{sec:zero}, we write
$K = K^{\nj} + V^{\nj}$, where recall $V^{\nj}$ satisfies the estimate
\[
  |V^{\nj}(x)| \lesssim   e^{-R}.
\]
Then the last two estimates imply
\[
  \| K \chi_0 w\|
  \geq c_0 \|\chi_0 w \|_{H^2} - Ce^{-R} \|w\|.
\]
Collecting the estimates above and using the fact that $\sum\|\chi_j  w\|^2_{H^2} \ge \|  w\|^2_{H^2}- C R^{-2} \|w\|_{H^1}^2 $, we find
  \be
\la{eq:coer}
  \| K w\|^2 \ge  (\min c_j)\|  w\|^2_{H^2}
     - C( e^{-R} + R^{-2}) \|w\|_{H^1}^2,
\end{equation}
which  for $R$ sufficiently large gives \eqref{Klowbnd} for $r=0$. As was mentioned above, an extension to arbitrary $r$ is standard. 
\end{proof}
Let $\bar P^c:=\1-P^c$. Lemma \ref{lem:coer} and the self-adjointness of $K_0$  imply
\begin{corollary}[invertibility of $K_0$] \la{cor:invertK}
 For $R$ sufficiently large and and $r\ge 0$, the operator $\bar K_0:=\bar P^cK_0\bar P^c: \bar P^c\mH_{r+2}^c(\Om) \ra \bar P^c\mH_r^c(\Om)$ is invertible and its inverse, $\bar K_0^{-1}$, satisfies the estimate
\begin{equation}\label{Klowbnd-2}
 \|\bar K_0^{-1} w \|_{H^{r+2}}
  \leq c \| w \|_{H^{r}}.
\end{equation}
 This estimate, the definition of  $ G^c_\g,\ \g\in H_{per}^{2+r}(\Om)$, and the relations
 $\s L_0=  K_0\s$, $\s  P =  P^c\s$ and $\|\s w\|=\| w\|$ (see Section \ref{subsec:K}) imply that the operator $\bar L_0:=\bar P L_0\bar P$ is invertible and the inverse satisfies
 $\s \bar L_0^{-1}= \bar K_0^{-1}\s$ and \eqref{Linv}.
\end{corollary}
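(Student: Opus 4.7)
The plan is to extract invertibility of the compressed operator $\bar K_0$ directly from Lemma~\ref{lem:coer} and the self-adjointness of $K_0$ recorded in~\eqref{K0selfadj}, and then transport the resulting estimate to $\bar L_0$ via the isometric embedding $\s$.

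Because $\bar P^c$ is an orthogonal projection and $K_0$ is self-adjoint on $L^2_{odd}(\Om,\C^4)$, the compression $\bar K_0=\bar P^c K_0\bar P^c$, regarded as an operator on $\bar P^c L^2$ with domain $\bar P^c\mH_2^c(\Om)$, is self-adjoint. In the $(\kappa,n)$-regime of Theorem~\ref{thm:K} the parenthetical in Lemma~\ref{lem:coer} gives $\langle w,K_0 w\rangle\ge c\|w\|_{H^1}^2\ge c\|w\|_{L^2}^2$ for all $w\in\Ran\bar P^c$. Since $P^c w=0$ on this subspace, $\langle w,\bar K_0 w\rangle=\langle w,K_0 w\rangle\ge c\|w\|_{L^2}^2$, so $\bar K_0\ge c\,\1$ as a self-adjoint operator on $\bar P^c L^2$, and in particular $\bar K_0$ is invertible with $\|\bar K_0^{-1}\|_{L^2\to L^2}\le 1/c$.

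Next I would promote this to the Sobolev bound~\eqref{Klowbnd-2}. Given $f\in\bar P^c\mH_r^c(\Om)$, set $w=\bar K_0^{-1}f\in\bar P^c L^2$; the ellipticity of $K_0$ on the compact cell $\Om$ ensures that $w$ in fact lies in $\bar P^c\mH_{r+2}^c(\Om)$. Writing $K_0 w=\bar K_0 w+P^c K_0 w=f+P^c K_0 w$ and invoking the operator-norm coercivity $\|K_0 w\|_{H^r}\ge c\|w\|_{H^{r+2}}$ of Lemma~\ref{lem:coer} reduces matters to controlling the gauge component $P^c K_0 w$. For this I would use duality together with Corollary~\ref{cor:zeromodesK}: for any $G^c_\g\in\Ran P^c$ one has $\langle G^c_\g,K_0 w\rangle=\langle K_0 G^c_\g,w\rangle$ with $\|K_0 G^c_\g\|_{H^r}\lesssim e^{-R}$, and expressing $P^c$ as the $L^2$-orthogonal projection onto the gauge-mode subspace then yields $\|P^c K_0 w\|_{H^r}\lesssim e^{-R}\|w\|_{H^{r+2}}$. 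Substituting and absorbing the exponentially small term for $R$ large gives $\|w\|_{H^{r+2}}\le C\|f\|_{H^r}$, which is~\eqref{Klowbnd-2}. Finally, the transfer to $\bar L_0$ is automatic: from $\s L_0=K_0\s$ and $\s P=P^c\s$ one gets $\s\bar P=\bar P^c\s$ and hence $\s\bar L_0=\bar K_0\s$, so $\s\bar L_0^{-1}=\bar K_0^{-1}\s$ on the range of $\bar P$; because $\s$ is an isometry preserving all $H^r$-norms, the bound on $\bar K_0^{-1}$ transfers verbatim to $\bar L_0^{-1}$, giving~\eqref{Linv}.

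The main obstacle I expect is the Sobolev-level control of $P^c K_0 w$: the estimate $\|K_0 G^c_\g\|_{H^r}\lesssim e^{-R}$ in Corollary~\ref{cor:zeromodesK} is stated pointwise in $\g$, and one must convert it into a uniform operator bound on $P^c K_0$ acting between the appropriate Sobolev spaces. This requires keeping careful track of the regularity of the gauge parameter $\g$ parametrising the infinite-dimensional subspace $\Ran P^c$. In the $L^2$-step the argument is automatic thanks to the quadratic-form coercivity; at the Sobolev level the duality $\langle G^c_\g,K_0 w\rangle=\langle K_0 G^c_\g,w\rangle$ together with the explicit form of $P^c$ as orthogonal projection onto $\Span\{G^c_\g\}$ is the ingredient that needs to be made precise.
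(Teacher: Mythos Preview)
The paper offers no detailed proof here; it simply asserts that the corollary follows from Lemma~\ref{lem:coer} together with the self-adjointness~\eqref{K0selfadj}. Your argument unpacks precisely these two ingredients and is in the same spirit, only considerably more explicit. The $L^2$-step via the quadratic-form lower bound (the parenthetical in Lemma~\ref{lem:coer}) and the transfer to $\bar L_0$ through the isometry $\sigma$ are both correct and match what the paper intends.

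You also correctly isolate the one genuine subtlety the paper suppresses: Lemma~\ref{lem:coer} bounds $\|K_0 w\|_{H^r}$ from below on $\Ran\bar P^c$, whereas invertibility of the compression $\bar K_0=\bar P^c K_0\bar P^c$ requires a lower bound on $\|\bar P^c K_0 w\|_{H^r}$, and bridging the two amounts to showing $\|P^c K_0 w\|_{H^r}$ is small. Your duality reduction $\langle G^c_\g,K_0 w\rangle=\langle K_0 G^c_\g,w\rangle$ is the right mechanism, and your last-paragraph worry about uniformity in $\gamma$ is legitimate: Corollary~\ref{cor:zeromodesK} is stated for individual $\gamma$ and does not track how the implicit constant depends on $\gamma$, so it cannot be quoted directly to get an operator bound on the infinite-dimensional gauge block.

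The gap closes cleanly once you use gauge covariance of $F$ directly rather than Corollary~\ref{cor:zeromodesK}. Differentiating the equivariance relation
\[
F\bigl(e^{is\gamma}\psi_0,\;a_0+s\nabla\gamma\bigr)=\bigl(e^{is\gamma}[F(v)]_\psi,\;[F(v)]_a\bigr)
\]
at $s=0$ yields the \emph{exact} identity $L_0 G_\gamma=\bigl(i\gamma\,[F(v)]_\psi,\;0\bigr)$. Since $[F(v)]_\psi=O(e^{-R})$ in $L^\infty$ and in every Sobolev norm (Lemma~\ref{lem:apprsol}), this exhibits $K_0$ restricted to the gauge subspace as multiplication of $\gamma$ by an exponentially small function. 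Combined with the $R$-uniform equivalence $\|G_\gamma\|\sim\|\gamma\|_{H^1}$ (which holds because $|\psi_0|\ge 1/2$ outside a ball of fixed radius, so the weight in $\|G_\gamma\|^2=\int|\gamma|^2|\psi_0|^2+|\nabla\gamma|^2$ degenerates only on a set of bounded size), one gets the genuine operator bound $\|K_0 P^c\|_{L^2\to L^2}\lesssim e^{-R}$, and hence by self-adjointness $\|P^c K_0\bar P^c\|\lesssim e^{-R}$, uniformly in $\gamma$. With this in hand your absorption argument goes through verbatim, both at $L^2$ and Sobolev level.
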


\bigskip

\DETAILS{\begin{lem}[coercivity] \la{lem:coer}
 For $R$ sufficiently large, $R\gg 1$, and for any $w \in ker (1-\Pz)$ and $s \ge 0$,
\begin{equation}\label{Llowbnd}
 \|\Lz w \|_{H^{s}}^2
  \geq c \| w \|_{H^{2+s}}^2.
\end{equation}
(For $n=1$ if $\kappa>\frac{1}{\sqrt{2}}$ and for any $n$ if $\kappa<\frac{1}{\sqrt{2}}$ we have a stronger statement: $c \|w\|_{H^1}^2 \leq \lan w, \Lz w \ran
  \leq \frac{1}{c} \| w \|_{H^1}^2$, which could be also adopted to a scale of Sobolev spaces.)
\end{lem}
\textbf{(Needs corrections!)} We prove the $n=1$ statement.  We omit the subindex $0$ in $L_0$ and write $L := \Lz$. The upper bound is
straightforward, so we prove only the lower one: 
\begin{equation}\label{coercbnd}
  \lan w, L w\ran \;\; \geq \;\; c \| w \|_{H^1}^2,
\end{equation}
for  $w$ 
orthogonal to approximate translational zero-modes, $G,\ \Tz_{k},\ k=1, 2$.

  As in Section~\ref{sec:zero}, we write
$L = L^{\nj} + V^{\nj}$, where recall $V^{\nj}$ satisfies the estimate
\[
  |V^{\nj}(x)| \lesssim   e^{-R}.
\]
Let $\{ \chi_j \}$ be a partition of unity associated to the
ball of the radius $R/2$ and its exterior, i.e. $\sum_{j=0}^{ 1} \chi_j^2 = 1$,
$\chi_0$ is supported in the ball of  the radius $3R/5$  and $\chi_1$
is supported outside the ball of  the radius $R/2$.
By the IMS formula~(\cite{cfks}),
\[
   L = \sum \chi_j L \chi_j
   -2\sum |\nabla \chi_j|^2.
\]
We can choose $\{ \chi_j \}$ such that
$|\nabla \chi_j| \lesssim R^{-1}$.

We extend the function $w$ to an $L^2-$function on $\R^2$ for which we keep the same notation. 
Since $\chi_1$ is supported
 outside the ball of  the radius $R/2$,
\[
  \lan\chi_1 w, L \chi_1 w\ran \geq
  c_1 \|\chi_1 w\|_{H^1}^2,
\]
for some $c_1 > 0$. Thus
\[
  \lan w, L w\ran  \geq 
   \lan\chi_0 w, L^{\nj}  \chi_0 w\ran
  + c_1 \|\chi_1 w\|_{H^1}^2
  - C( e^{-R} + R^{-2}) \|w\|_{H^1}^2.
\]
Now, since $w \in ker (1-\Pz)$, we have that $w \perp {T}_{k},\ k=1,2,\ G$ 
and therefore, due to \eqref{approxzeromodes}, we have, for the vortex translational and gauge
zero-eigenfunctions,  $ {T}^{(n)}_{k},\ k=1,2$,  $ G^{(n)}$ 
of $L^{\nj}$,
\[
  |\lan{T}_{k}, \chi_0 w \ran| \lesssim e^{-R},\  |\lan G, \chi_0 w \ran| \lesssim e^{-R}.
\]
So by the $n$-vortex stability result of~\cite{gs1} (for $n = 1$), we have
\[
  \lan\chi_0 w, L^{\nj} \chi_0 w\ran
  \geq c_2 \|\chi_0 w \|_{H^1}^2 - Ce^{-R} \|w\|_2^2,
\]
 and so, for $R$ sufficiently large,
\be
\la{eq:coer-2}
  \lan w, L w\ran \geq
  [c_3\sum\|\chi_j w \|_{H^1}^2 - C(e^{-R} + R^{-2})] \| w \|_{H^1}^2
  \geq c \| w \|_{H^1}^2.
\end{equation}
This proves \eqref{coercbnd}.} 
%
%

\subsection{Nonlinearity estimate}\label{sec:tay} 

\begin{lem}\la{lem:nl1}
For any $r>r'+1,\ r'\ge 0$ and $w  \in H^r $,
\[
\| N_{\vz}(w) \|_{H^{r'}} \leq c_r 
(\|w\|_{H^{r}}^2+\|w\|_{H^{r}}^3),\]
\be \la{est:N-Lip'}
  \| N_{\vz}(w')- N_{\vz}(w)  \|_{H^{r'}}\\
 \leq c_r(\|w\|_{H^{r}} + \|w\|_{H^{r}}^2+\|w'\|_{H^{r}} + \|w'\|_{H^{r}}^2)
  \|w' -w\|_{H^{r}}.
\end{equation}
\end{lem}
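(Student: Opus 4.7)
The plan is to exploit the fact that $F$ is polynomial (in fact at most cubic) in the components of $(\Psi, A)$ and $\nabla \Psi$, so that $N_v(w) = F(v+w) - F(v) - F'(v)w$ is itself polynomial in $w = (\xi,\alpha)$ with no linear or constant part; more precisely, $N_v(w)$ consists of terms of degree exactly $2$ and $3$ in $w$, with coefficients that are smooth functions of $v = (\psi_0, a_0)$. First I would expand
\begin{equation*}
-\Delta_{a_0+\alpha}(\psi_0 + \xi) = -\Delta_{a_0}(\psi_0+\xi) + 2i\alpha\cdot\nabla_{a_0}(\psi_0+\xi) + i(\div \alpha)(\psi_0+\xi) + |\alpha|^2(\psi_0+\xi),
\end{equation*}
and similarly expand $|\psi_0+\xi|^2(\psi_0+\xi)$, $|\psi_0+\xi|^2(a_0+\alpha)$, and $\Im(\overline{\psi_0+\xi}\,\nabla_{a_0+\alpha}(\psi_0+\xi))$. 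After subtracting $F(v) + F'(v)w$ one obtains, schematically,
\begin{equation*}
N_v(w) = \sum_{(i)} C_i(v)\,M_i(w,\nabla w),
\end{equation*}
where each $M_i$ is a monomial of degree $2$ or $3$ in $(\xi,\bar\xi,\alpha,\nabla\xi,\nabla\bar\xi)$ and each $C_i(v)$ is a polynomial in $(\psi_0,\bar\psi_0,a_0)$; typical contributions include $\alpha\cdot\nabla_{a_0}\xi$, $(\div\alpha)\xi$, $|\alpha|^2\psi_0$, $|\alpha|^2\xi$, $\Re(\bar\psi_0\xi)\xi$, $|\xi|^2 \psi_0$, $|\xi|^2\xi$, and the corresponding terms in the second component.

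Second, I would bound $\|N_v(w)\|_{H^{r'}}$ term by term using the Moser/Kato–Ponce product estimate
\begin{equation*}
\|fg\|_{H^{r'}(\Omega)} \lesssim \|f\|_{L^\infty}\|g\|_{H^{r'}} + \|f\|_{H^{r'}}\|g\|_{L^\infty},
\end{equation*}
together with the Sobolev embedding $H^{r-1}(\Omega) \hookrightarrow L^\infty(\Omega)$ in dimension $2$, valid since $r > r' + 1 \geq 1$; this is exactly why the hypothesis $r > r'+1$ is imposed. Each $C_i(v)$ is bounded in $H^{r'}$ uniformly in $R$ (by the construction of $\psi_0, a_0$ in \eqref{psi0a0} and the smoothness of $A^{(n)}$ and $n\nabla\theta$ on the support of $\bar\eta$), and each factor of $w$ or $\nabla w$ is controlled either in $L^\infty$ by $\|w\|_{H^r}$ (because $r - 1 > r' \ge 0$ ensures even the gradient embeds into $L^\infty$) or in $H^{r'}$ by $\|w\|_{H^{r'+1}} \le \|w\|_{H^r}$. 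Collecting, each quadratic monomial contributes $\lesssim \|w\|_{H^r}^2$ and each cubic monomial contributes $\lesssim \|w\|_{H^r}^3$, yielding the first bound.

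Third, for the Lipschitz-type estimate, I would use that each monomial $M_i(w)$ is a (symmetric) multilinear form $\tilde M_i$ evaluated on $w$, so that
\begin{equation*}
M_i(w') - M_i(w) = \tilde M_i(w'-w, \underbrace{w', w', \ldots}_{\deg - 1}) + \tilde M_i(w, w'-w, w', \ldots) + \cdots,
\end{equation*}
i.e.\ a sum of terms each containing exactly one factor of $w'-w$ and the remaining factors drawn from $\{w, w'\}$. Applying the same product estimate as above to this representation yields the second bound with the stated polynomial dependence on $\|w\|_{H^r}$ and $\|w'\|_{H^r}$.  I expect no serious obstacle — the only care needed is (a) bookkeeping the terms with one gradient (the $\alpha\cdot\nabla_{a_0}\xi$ and $(\div\alpha)\xi$ contributions, as well as the analogous terms in the current) so that the loss of one derivative is absorbed by the gap $r > r'+1$, and (b) verifying that the coefficients $C_i(v)$ have $H^{r'}$ norms bounded independently of $R$, which follows immediately from the explicit form of $v$ and the uniform bounds on $f^{(n)}$, $a^{(n)}$, and their derivatives.
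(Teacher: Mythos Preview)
Your approach is essentially the same as the paper's: write out $N_v(w)$ explicitly as a sum of quadratic and cubic monomials in $(\xi,\bar\xi,\alpha,\nabla\xi)$ with coefficients built from $v$, then bound each monomial using a Leibniz-type product estimate in $H^{r'}$ combined with Sobolev embedding, singling out the gradient terms (like $\bar\xi\nabla_{a_0}\xi$) as the worst case. The paper does exactly this, focusing on the representative term $\xi\nabla\xi$.

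There is, however, one technical slip. You invoke the embedding $H^{r-1}(\Omega)\hookrightarrow L^\infty(\Omega)$ ``valid since $r>r'+1\ge 1$'', but in dimension $2$ one needs $r-1>1$, i.e.\ $r>2$, for this embedding; the hypothesis $r>r'+1$ with $r'\ge 0$ only gives $r>1$. In particular, in the case $r=2$, $r'=0$ actually used in the Lyapunov--Schmidt argument, $H^{1}\not\hookrightarrow L^\infty$ and your bound on $\|\nabla\xi\|_{L^\infty}$ fails. The paper avoids this by using the Kato--Ponce inequality in the form
\[
\|\xi\nabla\xi\|_{H^{r'}}\lesssim \|\xi\|_{H^{s}}\|\nabla\xi\|_{H^{r'}}+\|\xi\|_{H^{r'+s}}\|\nabla\xi\|_{L^2},
\]
choosing any $s$ with $1<s<r-r'$ (which exists precisely because $r>r'+1$); then every index on the right is $\le r$. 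Replacing your $L^\infty$-based product estimate by this version fixes the argument with no other change.
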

\begin{proof} 
We prove only the first estimate. The second one is proved similarly. Explicitly, $N_{\vz}(w)$ is given by
 \begin{equation} \label{Nv}
	\begin{cases}
	N_{\vz}(w)_\psi =(2i\alpha\cdot\nabla_{A^{(n)}} +i\div\alpha)\xi+|\alpha|^2 (\Psi^{(n)}+\xi)-\kappa^2[2\Re(\bar  \Psi^{(n)}\xi)\xi+|\xi|^2( \Psi^{(n)}+\xi)], \\
	N_{\vz}(w)_a = (2\Re(\overline{\Psi^{(n)}}\xi)+|\xi|^2)\alpha-\Im ( \bar{\xi} \nabla_{A^{(n)}} \xi).
	\end{cases}
\end{equation}
The most problematic term in $N_{\vz}(w)$ is of the
form $\xi \nabla \xi$, so we will just
bound this one (the rest are straightforward). 
Using Sobolev embedding theorems of the type $\| \xi\|_\infty
  \lesssim \| \xi\|_{H^{s}},$ for any $s>1 $, etc, and using the Leibnitz-type property of fractional derivatives (see \cite{stein, steinweiss}), we obtain, for $r>r'+s>1$,
\[
\begin{split}
  \| \xi \nabla \xi \|_{H^{r'}}
  &\lesssim \| \xi\|_{H^{s}} \|\nabla \xi \|_{H^{r'}}+\| \xi\|_{H^{r'+s}} \|\nabla \xi \|_{H^{0}} \\
  &\lesssim \| \xi\|_{H^{s}} \| \xi \|_{H^{r'+1}}+\| \xi\|_{H^{r'+s}} \| \xi \|_{H^{1}},
\end{split}
\]
which gives $\| \xi \nabla \xi \|_{H^{r'}} \lesssim  \| \xi \|_{H^{r}}^2.$
\DETAILS{\[ \begin{split}
  \| \xi \nabla \xi \|_{H^{-r}}
  &= \sup_{\| \eta \|_{H^r} = 1} |(\eta, \xi \nabla \xi)|
  \leq \sup \| \eta \xi \|_2 \| \nabla \xi \|_2 \\
  &\leq c \sup \|\eta\|_p \|\xi \|_q \| \xi \|_{H^1}
  \leq c \| \xi \|_{H^1}^2
\end{split}\]
where $1/p + 1/q = 1/2$ and $q$ is taken large enough
so that $H^r \subset L^p$.}
\end{proof} 

\section{Proof of Theorems \ref{thm:L} and \ref{thm:K}} \la{sec:proofthmsLK}

%
\DETAILS{\[
  L_{\psi,A} \left( \begin{array}{c} \xi \\ B \end{array} \right)
  = \left( \begin{array}{c}
  [-\Delta_A + \kappa^{2}(2|\psi|^2-1)]\xi +
  \kappa^{2}\psi^2\bar{\xi}
  + i[2\nabla_A \psi + \psi\nabla]\cdot B  \\
  \Im( [\bar{\nabla_A\psi}-\bar{\psi}\nabla_A] \xi)
  + (-\Delta + \nabla\nabla + |\psi|^2) \cdot B
  \end{array} \right).\]}
 Let $U^\cL \equiv (\Psi, A)$ be the $\cL-$periodic solution of~(\ref{gle}) found in Theorem \ref{thm:existAL}. (In this section we omit the superindex $\cL$ in $\Psi^\cL, A^\cL$.)
The proofs of  Theorems \ref{thm:L} and \ref{thm:K} are identical and we give the proof of  Theorem \ref{thm:L}. It follows from the two propositions given below. 
Define the shifted gauge  zero modes,   $ G_{j\g}(x)= \Gz^{(n)}_{\g}(x-j)$,
where
  $ \Gz^{(n)}_{\g}(x)$ are the  gauge  zero modes of the linearized operator, $L^{(n)} := F'( U^{(n)})$, 
  given in \eqref{ngmodes}.
\subsection{Zero and almost zero modes}

 \begin{prop}[approximate zero-modes of $L$] \la{prop:Lapprzeromodes}
With definitions given in Subsections \ref{subsec:res1} and \ref{subsec:spec} and under the additional condition that $\g\in H_{per}^2(\R^2)$ is exponentially localized, $|\g(x)| \lesssim e^{- cR}$ for some $c>0$, 
we have
\be \la{eq:approxL}
\|L \Tz_{jk}\| \lesssim e^{-R},\ \|L G_{j\g}\| \lesssim e^{-R}.
\end{equation}
\end{prop}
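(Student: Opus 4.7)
The plan is to mirror the argument of Lemma~\ref{lem:zeromodes}, replacing the test function $v$ by the true solution $U^\cL$ and the reference operator $L^{(n)}$ by its translated version $L^{(n)}_j := F'(U^{(n)}(\cdot - j))$ at the lattice site $j$. By translation-invariance of $F$ and the identities $L^{(n)} T^{(n)}_k = 0$, $L^{(n)} G^{(n)}_\gamma = 0$ recalled in Subsection~\ref{subsec:spec}, the shifted modes $T_{jk}$ and $G_{j\gamma}$ are exact zero modes of $L^{(n)}_j$.

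First I would decompose $L = L^{(n)}_j + V_j$, where $V_j$ is a first-order differential operator whose coefficients are smooth functions of $U^\cL(x)$ and $U^{(n)}(x-j)$ that vanish when the two agree. Applying $L$ to $T_{jk}$ and $G_{j\gamma}$ and using $L^{(n)}_j T_{jk} = L^{(n)}_j G_{j\gamma} = 0$ then gives
\begin{equation*}
L T_{jk} = V_j T_{jk}, \qquad L G_{j\gamma} = V_j G_{j\gamma},
\end{equation*}
so it suffices to bound the right-hand sides in $L^2$ by $e^{-cR}$.

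From the explicit form of $L$ (see \eqref{Lexpl} in Appendix~\ref{app:LK}) the coefficients of $V_j$ are controlled pointwise by $|U^\cL(x) - U^{(n)}(x-j)| + |\nabla(U^\cL - U^{(n)}(\cdot - j))(x)|$, times uniformly bounded factors coming from the smoothness and boundedness of $U^\cL$ and $U^{(n)}$. I would then split $\R^2$ as $(\Omega + j) \cup (\R^2 \setminus (\Omega + j))$ and estimate separately. Over $\Omega + j$, Theorem~\ref{thm:existAL} gives $\|U^\cL - U^{(n)}(\cdot - j)\|_{H^r(\Omega + j)} \lesssim e^{-R}$ for any $r$, so by Sobolev embedding $|V_j(x)| \lesssim e^{-R}$ pointwise; since $T_{jk}$ and $G_{j\gamma}$ have $H^1$ norm on $\Omega + j$ uniformly bounded in $R$, this contribution is $O(e^{-R})$. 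Over the complement $V_j$ is only uniformly bounded, but now $|T_{jk}(x)| + |\nabla T_{jk}(x)| \lesssim e^{-m_\kappa |x - j|}$ by the vortex decay \eqref{vortdecay}, while $G_{j\gamma}$ inherits exponential localization from the hypothesis on $\gamma$. Since $|x - j| \gtrsim R$ there, each integral is $O(R\, e^{-cR})$, absorbable into $e^{-c'R}$ with $c'<c$. Combining gives the required estimates.

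The main technical obstacle is the decomposition step: one must verify carefully that $L - L^{(n)}_j$ is a first-order operator whose coefficients vanish linearly with $U^\cL - U^{(n)}(\cdot - j)$. This is a direct computation from the explicit expression for $F'(U)$ and is the exact analog of the verification \eqref{Vnbnd} in Lemma~\ref{lem:zeromodes}. The only novelty compared to that lemma is that here the smallness of $V_j$ is available only on the cell $\Omega + j$ — outside, $U^\cL$ is close to a different translate $U^{(n)}(\cdot - j')$ — so one must invoke the exponential decay of the vortex zero modes away from their center in order to control the complementary integral, a step with no counterpart in Lemma~\ref{lem:zeromodes}.
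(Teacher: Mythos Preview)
Your proof is correct and follows essentially the same approach as the paper: decompose $L = L_j + V_j$ with $L_j = F'(U^{(n)}(\cdot - j))$, use that $T_{jk}$ and $G_{j\gamma}$ are exact zero modes of $L_j$, and split into the near-$j$ region (where $V_j$ is $O(e^{-R})$ by Theorem~\ref{thm:existAL}) and the far-from-$j$ region (where the vortex decay \eqref{vortdecay} and the localization hypothesis on $\gamma$ make the modes themselves exponentially small). Your identification of $V_j$ as a first-order differential operator is in fact slightly more accurate than the paper's description of it as a multiplication operator, since the difference $-\Delta_A + \Delta_{A'}$ contains the first-order term $2i(A - A')\cdot\nabla$.
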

\begin{proof}
For each $j\in\cL$, we  write $L = L_j + V_{j}$, where $L_j$ is the shifted vortex linearized operator
\[
  L_j := 
  F'( U^{\nj}(\cdot-j))\equiv L^{(n)}|_{x\ra x-j},
\]
and $V_j$ is a multiplication operator defined by this relation.  Due to the explicit form \eqref{Lexpl} of $L$, given in Appendix \ref{app:LK}, and the estimates
\be  \la{uL'}
  U^\cL (x) = U^{(n)}(x-\alpha) + O_{H^1}(e^{-R}),\ \mbox{on}\ \Omega+\alpha,\ \forall\alpha\in \cL.
\end{equation}  on  the $\cL-$periodic solution $U^\cL \equiv (\Psi, A)$ of~(\ref{gle}), given in  Theorem \ref{thm:existAL}, $V_j$ satisfies
\be \la{Vjbnd}
  |V_{j}(x)| \lesssim e^{-\delta R},\ \qquad \mbox{if}\ \qquad |x-j|\le \delta R.
\end{equation} By the definition, $L_j$ has the zero modes,  which are shifted translation and gauge  zero modes, $T^{(n)}_{k}(x) $, and
$ \Gz^{(n)}_\g(x),\ \g\in H_{per}^2(\Om)$, of $L^{(n)} := F'( U^{(n)})$,
  for the $n-$vortex $U^{(n)}:=( \Psi^{(n)}, A^{(n)})$:
 \[T_{jk}(x)=T^{(n)}_{k}(x-j),\ k=1,2,\ \mbox{and}\  G_{j\g}(x)= \Gz^{(n)}_{\g}(x-j),\]
\be \la{Ljzeromodes}
  L_j\Tz_{jk} = 0,\ L_j G_{j\g}=0. \end{equation}
This, the estimates \eqref{vortdecay} and the condition that $\g$ is exponentially localized,  yield that
\be \la{TjkGjbnds}
  |\Tz_{jk}|,\ |G_{j\g}| \lesssim e^{-\delta R},\ \qquad \mbox{if}\ \qquad |x-j|\ge \delta R. \end{equation}
 Using these estimates and using \eqref{Vjbnd}, \eqref{Ljzeromodes} and the relations
$L = L_j + V_{j}$, we obtain the estimates \eqref{eq:approxL}  of Proposition~\ref{prop:Lapprzeromodes}.
\end{proof} 

\bigskip
\subsection{Coercivity away from the translation and gauge modes}

\begin{prop} \la{prop:Lcoerc}
Under conditions of Theorem \ref{thm:L}, there is $c >0$ s.t.
\be  \la{hess}
\lan \eta, L\eta\ran \ge c\| \eta \|_{H^1}^2,\   
\end{equation}
for any $\eta \perp \Span\{T_{jk},\ k=1,2,\ \forall j\in \cL,\ G_{\g},\ \g\in H^2(\R^2, \R)\}.$
\end{prop}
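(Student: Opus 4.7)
The strategy is to globalize the cell-level coercivity estimate of Lemma~\ref{lem:coer} from $\Omega$ to all of $\R^2$, replacing the two-piece partition $\{\chi_0,\chi_1\}$ used there by a partition of unity indexed by $\cL$ and applying $n$-vortex stability near each lattice site. Pick a smooth partition $\{\chi_j^2\}_{j\in\cL}$ with $\chi_j(x):=\chi_0(x-j)$, $\chi_0\equiv 1$ on $|x|\le R/3$, $\supp\chi_0\subset\{|x|\le R\}$, and $|\nabla\chi_0|\lesssim R^{-1}$. The IMS-type identity
\[
\lan\eta,L\eta\ran=\sum_{j\in\cL}\lan\chi_j\eta,L\chi_j\eta\ran-\sum_{j\in\cL}\bigl\||\nabla\chi_j|\,\eta\bigr\|^2
\]
reduces the proof to local bounds, with the last sum bounded by $CR^{-2}\|\eta\|^2$. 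Using the splitting $L=L_j+V_j$ from Proposition~\ref{prop:Lapprzeromodes}, where $L_j=L^{(n)}|_{x\mapsto x-j}$ and $|V_j(x)|\lesssim e^{-cR}$ on $\supp\chi_j$ by \eqref{uL'}--\eqref{Vjbnd}, the $V_j$-contribution is $O(e^{-cR})\|\chi_j\eta\|_{H^1}^2$, so it suffices to bound $\lan\chi_j\eta,L_j\chi_j\eta\ran$ from below.

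The $n$-vortex stability result of \cite{gs1} (Theorem~\ref{thm:decomp} of Appendix~\ref{app:LK}), under our hypotheses on $(\kappa,n)$, delivers
\[
\lan f,L_j f\ran\;\ge\;c\,\|(1-P_j)f\|_{H^1}^2,
\]
where $P_j$ projects orthogonally onto $\ker L_j=\mathrm{span}\{T_{jk},\,G^{(n)}_{j\g}:k=1,2,\ \g\in H^2(\R^2,\R)\}$. Consequently I need to show $\|P_j\chi_j\eta\|_{H^1}=O(e^{-cR}+R^{-1})\|\eta\|_{H^1}$. For the translation part, \eqref{vortdecay} and $\chi_j\equiv 1$ on $|x-j|\le R/3$ give
\[
|\lan T_{jk},\chi_j\eta\ran|=|\lan(\chi_j-1)T_{jk},\eta\ran|\lesssim e^{-cR}\|\eta\|,
\]
using the hypothesis $\lan T_{jk},\eta\ran=0$. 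For the gauge part, the hypothesis $\eta\perp G_\g$ for every $\g\in H^2(\R^2,\R)$ is equivalent to the pointwise Coulomb-type condition $\Im(\bar\Psi^\cL\xi)=\div\alpha$ on $\R^2$; together with $\Psi^\cL(x)=\Psi^{(n)}(x-j)+O(e^{-cR})$ on $\supp\chi_j$ and the commutator identity $[\div,\chi_j\cdot]=\nabla\chi_j\cdot$, this forces the Coulomb violation of $\chi_j\eta$ relative to $\Psi^{(n)}(\cdot-j)$ to have $L^2$-norm at most $(e^{-cR}+CR^{-1})\|\eta\|$, which controls the gauge component of $P_j\chi_j\eta$ through the usual elliptic inverse of the gauge Laplacian.

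Summing over $j$, using $\sum_j\|\chi_j\eta\|_{H^1}^2\ge\|\eta\|_{H^1}^2-O(R^{-2})\|\eta\|_{H^1}^2$, yields
\[
\lan\eta,L\eta\ran\;\ge\;c\,\|\eta\|_{H^1}^2-O(e^{-cR}+R^{-2})\|\eta\|_{H^1}^2,
\]
which gives \eqref{hess} for $R$ sufficiently large. I expect the main obstacle to be the gauge direction: unlike $T_{jk}$, the gauge modes $G^{(n)}_{j\g}$ are not exponentially localized for generic $\g$, so one cannot estimate $\lan G^{(n)}_{j\g},\chi_j\eta\ran$ by a direct truncation argument. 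The resolution is to exploit the \emph{functional} orthogonality $\eta\perp G_\g$ (valid for every $\g$) by converting it to the pointwise Coulomb condition above, and then to absorb the $O(R^{-1})$ commutator losses from $\nabla\chi_j$ into the $R^{-2}$-sized IMS remainder that is already present in the partition-of-unity decomposition.
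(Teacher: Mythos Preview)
Your localization scheme has a genuine gap. With $\supp\chi_j\subset\{|x-j|\le R\}$ and nearest-neighbour spacing $R$, the support of $\chi_j$ spills into adjacent cells $\Omega+j'$ and in particular contains points arbitrarily close to the neighbouring vortex core $j'$. There $|\Psi^\cL|\approx 0$ while $|\Psi^{(n)}(\cdot-j)|\approx 1$, so the potential part of $V_j=L-L_j$ (for instance the entry $\kappa^2(2|\Psi|^2-1)$ in \eqref{Lij}) is $O(1)$, not $O(e^{-cR})$. Your appeal to \eqref{Vjbnd} is therefore unjustified on this region: that estimate gives $|V_j(x)|\lesssim e^{-\delta R}$ only for $|x-j|\le\delta R$, i.e.\ well inside the single cell $\Omega+j$. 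Shrinking the supports so that each lies inside one cell is not a fix either, since translates of any ball contained in $\Omega$ cannot cover $\R^2$ (the cell corners are missed), and then $\sum_j\chi_j^2=1$ fails.

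The paper's remedy is a two-zone partition: the $\chi_j$, $j\in\cL'$, are supported in the small disjoint balls $B(j,R/3)$ (each inside a single cell), and an additional piece $\chi_\infty$ covers the complement $\R^2\setminus\bigcup_j B(j,R/4)$. On $\supp\chi_j$ the bound on $V_j$ and the vortex-stability lower bound for $L_j$ go through exactly as you wrote; on $\supp\chi_\infty$ one is uniformly far from every vortex core, so $|\Psi^\cL|\approx 1$ and $L$ itself is coercive there without any comparison to an $L_j$ (cf.\ the argument of Lemma~\ref{lem:lb-out}). The missing ingredient in your proposal is precisely this $\chi_\infty$ piece.

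Your handling of the gauge direction via the pointwise Coulomb identity is a legitimate alternative to the paper's device of testing only against exponentially localized $\gamma$, and the $O(R^{-1})$ commutator loss you isolate is indeed absorbable after summation over $j$. Note, however, that this step also relies on $\Psi^\cL\approx\Psi^{(n)}(\cdot-j)$ throughout $\supp\chi_j$, so it too requires the smaller supports.
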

\begin{proof} Recall that the lattice $\cL$ is defined in such a way that vortices are located at the centers of its cells. Let $\cL'$ be a shifted lattice having vortices at its verices and let $\cL'':=\cL'\cup \{\infty\}$. 
Let $\{ \chi_j,\ j\in \cL'' \}$ be a partition of unity associated to the balls of radius $R/3$, centered at the points of the lattice $ \cL'$, i.e.
$\chi_j,\ j\in \cL',$ are supported in the balls, $B(j, R/3)$, of the radius $R/3$
about $ j\in\cL',$ $\chi_\infty$
is supported in $\R^2/\bigcup_{j\in\cL'}B(j, R/4)$, i.e. away from all the vortices,  and $\sum_{j\in\cL''} \chi_j^2 = 1$. We can choose $\{ \chi_j \}$ such that
$|\nabla \chi_j| \lesssim R^{-1}$.
By the IMS formula~(\cite{cfks}),
\be  \la{Lims}
   L = \sum \chi_j L \chi_j
   -2\sum |\nabla \chi_j|^2.
\end{equation}
As in the previous subsection, we  write
$L = L_j + V_{j}$,  for each $j\in\cL'$. 
By our choice of $\{ \chi_j,\ j\in \cL' \}$, we have that
$\| V_{j} |_{\supp \chi_j} \|_{\infty} \lesssim e^{-R}$ (see  \eqref{Vjbnd}), and so, for $j\in\cL'$,
\[
  \lan\chi_j \eta, L \chi_j \eta\ran \geq
  \lan\chi_j \eta, L_j \chi_j \eta\ran - Ce^{-R} \|\chi_j\eta\|^2.
\]
Let $ \g_j(x)= \g(x-j)$. Since $\eta \perp  G_{\g},\ \g\in H^2(\R^2, \R),$ we have $\lan G_{j\g}, \chi_j \eta \ran=\lan G_{j\g}- G_{\g_j}, \chi_j \eta \ran+\lan G_{\g_j}, (\chi_j-1) \eta \ran$. By \eqref{eq:close} and the exponential localization of $\g\in H^2_{per}(\R^2, \R),$ the first term on the r.h.s. is $\lesssim e^{-R}$. By exponential localization of $G_{j\g}$ the same is true for the second term as well. Hence we obtain $ |\lan G_{j\g}, \chi_j \eta \ran| \lesssim e^{-R}$. Next, since $\eta \perp T_{jk},\ k=1,2,$ $\forall j\in \cL'$, and  $\|(1-\chi_j) \Tz_{jk}\|_2 \lesssim e^{-R},$ 
we have $|\lan T_{jk}, \chi_j \eta \ran| \lesssim e^{-R}$. To sum up, for $j\in\cL$, and for all $\g\in H^2(\R^2, \R),$ exponentially localized, we have that
\[
  |\lan T_{jk}, \chi_j \eta \ran| \lesssim e^{-R}\|\eta\|,\  |\lan G_{j\g}, \chi_j \eta \ran| \lesssim e^{-R}\|\eta\|.
\]
So by the $n$-vortex stability result of~\cite{gs1}
(for all $n$ if $\kappa <\frac{1}{\sqrt{2}}$ and for $n = 1$ if $\kappa >\frac{1}{\sqrt{2}}$), we have, for $R$ sufficiently large and $\forall j\in \cL'$,
\[
  \lan\chi_j \eta, L_j \chi_j \eta\ran
  \geq c_1 \|\chi_j \eta \|_{H^1}^2. 
\]

Also, since $\chi_\infty$ is supported away from all the lattice sites, where the vortices are centered, we have that
\[
  \lan\chi_\infty \eta, L \chi_\infty \eta\ran \geq
  c_2 \|\chi_\infty \eta\|_{H^1}^2,
\]
for some $c_1 > 0$. The above estimates together with \eqref{Lims} and the fact that $\supp \nabla\chi_j$ for different $j$'s do not overlap and therefore $\sum_j |\nabla\chi_j|\lesssim R^{-2}$, give, for $R$ sufficiently large,
\be \la{eq:coer-3}
  \lan\eta, L \eta\ran \geq
  [c_3 - C R^{-2}] \| \eta \|_{H^1}^2
  \geq c \| \eta \|_{H^1}^2.
\end{equation}
 Hence we have shown \eqref{hess}.
\end{proof}

Propositions \ref{prop:Lapprzeromodes}  and \ref{prop:Lcoerc}  imply Theorem \ref{thm:L}. $\Box$

Theorem \ref{thm:K} is obtained by replacing, in the proof above, $L$ with $K$.

\begin{rem}\label{rem:unif-kappa} One can modify the proof of proposition \ref{prop:w} to make $R_0$ uniform in $\kappa - 1/\sqrt{2}$. To this end  one would have to `project out' also the $(\kappa = 1/\sqrt{2}) -$ zero modes (see \cite{gs1}).
\end{rem}
\appendix



\section{Critical magnetic fields }\label{sec:crit-mf}

In superconductivity there are several critical magnetic fields,  two of which (the first and the second  critical magnetic fields) are of special importance:

$h_{c1}$ is the field at which the first vortex enters the superconducting sample.
\medskip

$h_{c2}$ is the field at which a mixed state bifurcates from the normal one. 

\noindent
(The critical field $h_{c1}$ is defined as $h$ for which $G_Q(\Psi_s, A_s)=G_Q(\Psi^{(1)}, A^{(1)})$, for $Q=\R^2$). 
 For type I superconductors $h_{c1} > h_{c2}$ and for type II superconductors $h_{c1} < h_{c2}$. In the former case, the vortex states have relatively large energies, i.e. are metastable, and therefore are of little importance.

For type II superconductors, there are two important regimes to consider: 1) average magnetic fields per unit area, $b$, are less than but sufficiently close to  $h_{c2}$,
\be \la{regime1} 0< h_{c2}-b \ll h_{c2}
\end{equation} 
and 2) the external (applied) constant magnetic fields, $h$,  are  greater than but sufficiently close to  $h_{c1}$,
\be \la{regime2} 
 0< h-h_{c1} \ll h_{c1}.
 \end{equation}
The reason the first condition involves $b$, while the second $h$ is that the first condition comes from the Ginzburg-Landau equations (which do not involve $h$), while the second from the Ginzburg-Landau Gibbs free energy. 

%
\DETAILS{For an external field $h_a>h_{c1}$ and with the intervortex interaction neglected, we have $G/|Q|=m(E^{(1)}-\Phi^{(1)} H_a)$, where $m$ 
is the number of vortices per unit area. Thus the behaviour of $G$, as a function of $m$, changes at $H_a=H_{c1}
:=\frac{E^{(1)}}{\Phi^{(1)}}$. If $H_a<H_{c1}$, then the Gibbs energy is lowered by having no vortices, and if $H_a>H_{c1}$, then the Gibbs energy is lowered by having vortices.}  
One of the differences between the regimes \eqref{regime1} and \eqref{regime2} is that $|\Psi|^2$ is small in the first regime (the bifurcation problem) and large in the second one. 
If a superconductor fills in the entire $\R^2$, then in the second regime, the average magnetic field per unit area, $b\ra 0$, as $h \to h_{c1}$. 

\bigskip

\section{The 
operators $L$ and $K$} \label{app:LK} 
This appendix combines the construction of the complex $K$ extension of $L$ and statement of its fiber decomposition and its properties, due to \cite{gs1}, which is essential to our analysis, with the proof of self - adjointness of $K$ and the group theoretical elucidation of the fiber decomposition of the operator $K$. 

\subsection{Explicit form of $L$ and $K$} \label{app:explLK}
First, we write out explicitly the operators $L$ and $K$ introduced in Subsection \ref{subsec:spec} and Section \ref{subsec:K} and discuss a different way to treat the operator $L$. 
In this section we write operators $L$ and $K$ for any solution $U=(\Psi, A)$ of \eqref{gle}. 
The arguments below are presented on $\R^2$ but are also applicable on $\Om$.

Let $\mathcal{R}$ be the operation of taking the real part. The operator $L$ is given explicitly as (\cite{gs1})
\begin{equation} \label{Lexpl}
L=\left[ \begin{array}{cc} L_{11}  & L_{12}
\\L_{21} & L_{22}
\end{array} \right],
\end{equation}
with
\begin{equation}  \label{Lij} 
\begin{cases}
L_{11} =-\Delta_{A}+\kappa^{2}(2|\Psi|^2-1)+\kappa^{2}\Psi^2 \cC,\\ 
 L_{12}=i[ (\nabla_{A}\Psi)+\nabla_{A}\Psi]\cdot =i [2(\nabla_{A}\Psi) +\Psi\nabla]\cdot,\\ L_{21}=-\cR i[\overline{(\nabla_{A} \Psi})-\overline{\Psi} \overline{\nabla_{A}}], \\
 L_{22}=-\Delta+|\Psi|^2.
\end{cases}
\end{equation}
(Here $(\nabla_{A}\Psi)$ stands for the function resulting in application $\nabla_{A}$ to $\Psi$, while $\nabla_{A}\Psi$ stands for the product of operators $\nabla_{A}$ and multiplication by $\Psi$.) 
To prove symmetry of $L$, we have
\begin{align*}
&\Re \int \bar{\xi}(2i\omega\cdot\COVGRAD{A}\Psi + i\Psi\DIV\omega) =\int -2\omega\cdot\Im(\bar{\xi}\COVGRAD{A}\Psi) - \Im(\bar{\xi}\Psi)\DIV\omega \\
&=\int -2\omega\cdot\Im(\bar{\xi}\COVGRAD{A}\Psi)
+ \Im(\bar{\xi}\nabla\Psi - \bar{\Psi}\nabla\xi)\cdot\omega \\
&=\int -2\omega\cdot\Im(\bar{\xi}\COVGRAD{A}\Psi)
+ \Im(\bar{\xi}\nabla\Psi - i\bar{\xi}\Psi A + i\bar{\Psi}\xi A - \bar{\Psi}\nabla\xi)\cdot\omega \\
&= \int -\omega\cdot\Im(\bar{\xi}\COVGRAD{A}\Psi) + \bar{\Psi}\COVGRAD{A}\xi)
\end{align*}

 To extend the operator $L$ to a complex-linear operator $K$ we recall $\alpha = \left( \begin{array}{c}  \alpha_1 \\ \alpha_2  \end{array} \right)
  \leftrightarrow \alpha^c = \alpha_1 - i\alpha_2$, use 
 the complex notation
\begin{equation}\label{dcompl}
	\partial = \partial_{x_1} - i\partial_{x_2} ,\ \qquad \partial_{A^c} = \p - iA^c,
\end{equation}
and introduce the complex conjugate, $\bar{A}$, of an operator $A$ as the operator $\bar{A}:=\mathcal{C} A \mathcal{C}$, where $\mathcal{C}$ denotes complex conjugation.
Straightforward calculations show that
\begin{equation*}
	\DIV\alpha = \frac{1}{2}\partial \bar{\alpha}^c + \frac{1}{2}\bar{\partial} \alpha^c,
\end{equation*}
\begin{equation*}
	2i\alpha\cdot\COVGRAD{A}\Psi
		= -i(\partial_{A^c}^*\Psi)\alpha^c + i(\partial_{A^c}\Psi)\bar{\alpha}^c,
\end{equation*}
and
\begin{equation*}
	- \Im(\bar{\xi}\COVGRAD{A}\Psi)^c
		= \frac{i}{2}(\overline{\partial_{A^c}^*\Psi})\xi
				+ \frac{i}{2}(\partial_{A^c}\Psi)\bar{\xi}.
\end{equation*}
In what follows we  drop  the superscript $c$ in $A^c$. Using the above relations one shows that the complex-linear extension, $K$, of the operator $L$, is given explicitly as
\begin{equation}\label{K}
K = \left( \begin{array}{cccc}
-\COVLAP{A}  + \kappa^2(2|\Psi|^2 -1) & \kappa^2\Psi^2
& -i(\partial_{A}^*\Psi)+ \frac{i}{2}\Psi\bar{\partial}  & i(\partial_{A}\Psi)+ \frac{i}{2}\Psi\partial \\
\kappa^2\overline{\Psi^2}	& \overline{-\COVLAP{A}}   + \kappa^2(2|\Psi|^2 -1)
& -i(\overline{\partial_{A}\Psi}) - \frac{i}{2}\bar{\Psi}\bar{\partial} & \frac{i}{2}(\overline{\partial_{A}^*\Psi}) - \frac{i}{2}\bar{\Psi}\partial \\
\frac{i}{2}(\overline{\partial_{A}^*\Psi}) + \frac{i}{2}\bar{\Psi}\partial_A & \frac{i}{2}(\partial_{A}\Psi) + \frac{i}{2}\Psi\overline{\partial_A^*}  & -\Delta + |\Psi|^2 & 0 \\
-\frac{i}{2}(\overline{\partial_{A}\Psi})	- \frac{i}{2}\bar{\Psi}\partial_A^* & -\frac{i}{2}(\partial_{A}^*\Psi) - \frac{i}{2}\Psi\overline{\partial_A} & 0 & -\Delta + |\Psi|^2
		\end{array} \right).
\end{equation}
It is not hard to check that $K$ restricted to vectors on the r.h.s. of \eqref{embed} gives $L^c$.
\DETAILS{
\begin{equation}\label{Kdef}
	K \FourByOne{\xi}{\phi}{\alpha}{\w} = \FourByOne
{-\COVLAP{a_\e}\xi - \lambda_\e\xi + (2\kappa^2 + \frac{1}{2})|\psi_\e|^2\xi
+ (\kappa^2 - \frac{1}{2})\psi_\e^2\phi	+ 2i\alpha\cdot\COVGRAD{a_\e}\psi_\e}
{-\overline{\COVLAP{a_\e}}\phi - \lambda_\e\phi + (2\kappa^2 + \frac{1}{2})|\psi_\e|^2\phi
+ (\kappa^2 - \frac{1}{2})\psi_\e^2\xi - 2i\alpha\cdot\COVGRAD{a_\e}\psi_\e}
		{(-\Delta  + |\psi_\e|^2) \alpha +i(\phi\COVGRAD{a_\e}\psi_\e)-\xi\overline{\COVGRAD{a_\e}\psi_\e}}
{(-\Delta  + |\psi_\e|^2) \w - i(\xi\COVGRAD{a_\e}\psi_\e)-\phi\overline{\COVGRAD{a_\e}\psi_\e}}.
\end{equation}
%
It is not hard to check that $K$ restricted to vectors on the r.h.s. of \eqref{embed} gives $L$.
\begin{equation} \label{eq:comp}
  M\left( \begin{array}{cc} \xi\\ \eta\\ \alpha^c \\ \beta^c  \end{array} \right)=??,
\end{equation}
where we identified $L^2({\R}^2;{\R}^2)$ with
$L^2({\R}^2;{\C})$ through the correspondence
\begin{equation}
\label{complexif}
  \alpha = \left( \begin{array}{c}  \alpha_1 \\ \alpha_2  \end{array} \right)
  \leftrightarrow \alpha^c \equiv \alpha_1 - i\alpha_2.
\end{equation}}
%

We consider the linearized operator $L$, 
on a space of pairs $(\Psi, A)$, satisfying the gauge condition
\begin{equation} \label{gchoice}
  \Im(\overline{\Psi} \xi) - \nabla \cdot \alpha=0.
\end{equation}
We mention a convenient way to treat the condition \eqref{gchoice} 
 by passing to a modified real-linear  operator $L_\#$, defined by  the quadratic form  (\cite{gs1}) 
\[
  \lan w, L_\# w \ran =
  \lan w, L_0 w \ran + \int_{\R^2} (\Im(\overline{\Psi} \xi) - \nabla \cdot \alpha)^2,
    \]
where $w = (\xi, \alpha) \in L^2( \R^2, \C )\times L^2(
\Omega,  \R^2)$. Clearly, $L_\#$
agrees with $L$ on the subspace of $L^2( \R^2, \C )\times L^2(
\R^2,  \R^2)$ specified by the gauge condition~(\ref{gchoice}). This
modification has the important effect of shifting the essential
spectrum away from zero. 
A straightforward computation gives the following expression for
$L_\# $:
\[
  L_\# \left( \begin{array}{c} \xi \\ \alpha  \end{array} \right) =
  \left( \begin{array}{c}
  [-\Delta_A + \frac{\kappa^2}{2}(2|\Psi|^2-1) + \frac{1}{2}|\Psi|^2]\xi
  + \frac{1}{2}(\kappa^2-1)\Psi^2\bar{\xi} + 2i\nabla_A \Psi \cdot  \alpha  \\
  2\Im[\overline{\nabla_A \Psi}\xi] + [-\Delta + |\Psi|^2] \alpha
  \end{array} \right).
\]
%
%
\DETAILS{In order to pass from real-linear operators to complex-linear ones,
we complexify the space $ L^2( \Omega, \C )\times L^2( \Omega,  \R^2)$
via
\begin{equation}
\label{eq:comp}
  (\xi, \alpha) \mapsto
  (\xi, \bar{\xi}, \alpha^c, \bar{\alpha}^c),
\end{equation}
where we identified $L^2(\Omega;{\R}^2)$ with $L^2(\Omega;{\C})$
through the correspondence
\begin{equation}\label{complex-al}
  \alpha = \left( \begin{array}{c}  \alpha_1 \\ \alpha_2  \end{array} \right)
  \leftrightarrow \alpha^c \equiv \alpha_1 - i\alpha_2.
\end{equation}
This leads to the space $[L^2(\Omega;{\C})]^4$.}
%
%
The complex-linear extension, $K_\#$, of $L_\# $, defined on
$[L^2(\R^2;{\C})]^4$, is given by 
\begin{equation}\label{Ksharp}
  K_\# =
  \mbox{ diag } \{ -\Delta_A, -\overline{\Delta_A}, -\Delta, -\Delta \}
  + V,
\end{equation}
where $V$ is the matrix-multiplication operator given, using the notation \eqref{dcompl}, by
\[
V = \left( \begin{array}{cccc}
\frac{\kappa^2}{2}(2|\Psi|^2-1) + \frac{1}{2}|\Psi|^2  &
\frac{1}{2}(\kappa^2-1)\Psi^2  &  -i(\partial_A^* \Psi)  &  i(\partial_A \Psi)  \\
\frac{1}{2}(\kappa^2 - 1)\overline{\Psi}^2  &
\frac{\kappa^2}{2}(2|\Psi|^2-1) + \frac{1}{2}|\Psi|^2  &
-i(\overline{\partial_A \Psi})  &  i(\overline{\partial_A^* \Psi}) \\
i(\overline{\partial_A^* \Psi})  &  i(\partial_A \Psi)  & |\Psi|^2  &  0  \\
-i(\overline{\partial_A \Psi})  &  -i(\partial_A^* \Psi)  &  0  &  |\Psi|^2
 \end{array} \right).
\]
The components of $V$ are bounded, and it follows from
standard results 
that $K_\#$ is a self-adjoint operator on $[L^2(\R^2;{\C})]^4$, with domain 
$  D(K_\#) = [H_2(\R^2;{\C})]^4.$ 

\subsection{Self-adjointness of $K_0$} \label{app:sa-K}
Next, 
we sketch a proof of
\begin{thm}\label{thm:decomp}
The operator $K_0,\ \mbox{defined by the expression \eqref{K} on}\ L^{2}(\Om, \C^4)$ with the domain $\mH_{2}^c(\Om),$ is self-adjoint.\end{thm}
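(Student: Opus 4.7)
The plan is to split $K_0$ into an explicit elliptic principal part plus a symmetric, relatively bounded lower-order perturbation, establish self-adjointness of the principal part via a Friedrichs extension, and then invoke Kato--Rellich on the compact cell $\Om$.

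First I would write
\begin{equation*}
 K_0 = K_0^{(0)} + V, \qquad K_0^{(0)} := \operatorname{diag}\bigl(-\Delta_A,\; -\overline{\Delta_A},\; -\Delta,\; -\Delta\bigr),
\end{equation*}
where $V$ gathers the remaining entries of the matrix in \eqref{K}. Using the explicit formula, $V$ consists of (i) bounded matrix-valued multiplication operators with coefficients that are polynomials in the smooth bounded functions $\Psi,\bar\Psi,A^c,\bar A^c$ on the compact cell $\Om$, and (ii) first-order differential operators of the form $c\,\partial$, $c\,\bar\partial$, $c\,\partial_A$, $c\,\overline{\partial_A}$ with smooth bounded coefficients $c$ depending on $\Psi,\bar\Psi$.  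For $K_0^{(0)}$, the two copies of $-\Delta$ on the $\alpha,\beta$ components are the standard Laplacian on periodic functions (the torus $\R^2/\cL$), which is self-adjoint on $H^2_{\rm per}(\Om)$. For the $-\Delta_A$ entry, I would construct the operator as the Friedrichs extension of the closable non-negative quadratic form $q(\xi) = \int_\Om |\nabla_A \xi|^2\,dx$ defined on the form domain
\begin{equation*}
 Q := \bigl\{\xi \in H^1_{\rm odd}(\Om,\C) : \xi(x+s) = e^{ig_s(x)}\xi(x)\text{ on }\p_i\Om,\ s=\w_i\bigr\},
\end{equation*}
which is a closed subspace of $H^1(\Om)$. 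Ellipticity of $q$ on $Q$ yields a self-adjoint operator whose domain, by standard elliptic regularity adapted to the gauge-periodic boundary conditions, is exactly the first component of $\mH_2^c(\Om)$; the analogous construction with $\overline{\partial_A}$ in place of $\partial_A$ handles $-\overline{\Delta_A}$.

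Next I would check formal symmetry of $K_0$ on smooth functions in $\mH_2^c(\Om)$ by integrating by parts entry by entry in \eqref{K} and using \eqref{gaugeperbc''}. The crucial point is that the combinations which appear in integration by parts — products like $\bar\chi\,\xi$, $\bar\xi\,\chi$, $\bar\xi\,\nabla_A\xi$, $\bar\xi\,\Psi\,\alpha$, etc.\ — are genuinely $\cL$-periodic thanks to the opposite gauge twists in the $\xi$- and $\chi$-rows of \eqref{gaugeperbc''} and the plain periodicity of $\alpha,\beta$. Hence boundary contributions from opposite faces of $\Om$ cancel, and the computation reduces to the well-known pointwise identities used already to verify symmetry of $L$. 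This shows $V$ is symmetric on $\mH_2^c(\Om)$ and, combined with the self-adjointness of $K_0^{(0)}$, that $K_0$ is symmetric on this domain.

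Finally I would apply Kato--Rellich. The zeroth-order part of $V$ is bounded by $\|V_0\|_\infty < \infty$ since the coefficients are smooth on the compact cell. The first-order part is controlled by the interpolation inequality $\|\nabla u\|_{L^2(\Om)}\le \epsilon\|\Delta u\|_{L^2(\Om)} + C_\epsilon\|u\|_{L^2(\Om)}$, valid for $u$ in the form domain of the Laplacians appearing in $K_0^{(0)}$; this makes the full $V$ bounded with relative bound zero with respect to $K_0^{(0)}$. Since $V$ is in addition symmetric, Kato--Rellich yields that $K_0$ is self-adjoint on $\mH_2^c(\Om)$.

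The main obstacle is Step~2: constructing $-\Delta_A$ as a self-adjoint operator on gauge-periodic functions. Because $g_s$ is genuinely multi-valued (the vortex carries nonzero flux $2\pi n$), there is no global gauge function $\phi$ with $\phi(x+\w_i) - \phi(x) = g_{\w_i}(x)$, so one cannot reduce $-\Delta_A$ to the scalar periodic Laplacian by a unitary $\xi\mapsto e^{-i\phi}\xi$. The Friedrichs route avoids this by working with the intrinsically gauge-invariant form $q$; the subtlety is to verify that the natural-boundary-condition extension coincides with $-\Delta_A$ on the Sobolev class $\mH_2^c(\Om)$ defined by \eqref{gaugeperbc''} rather than on some strictly larger or smaller domain. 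Once that identification is made, the rest of the argument is soft and the Kato--Rellich conclusion is immediate.
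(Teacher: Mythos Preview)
Your overall strategy is sound and reaches the same first reduction as the paper: write $K_0$ as a diagonal block of second-order elliptic operators plus a symmetric first-order-plus-bounded perturbation, and invoke Kato--Rellich. The paper does exactly this (referring to the $K_\#$-type representation and ``standard arguments''), so the only substantive step is the self-adjointness of the magnetic Laplacian $-\Delta_{a_0}$ on the gauge-periodic domain.

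Here the two approaches diverge. You construct $-\Delta_{a_0}$ as the Friedrichs extension of the nonnegative form $\int_\Om|\nabla_{a_0}\xi|^2$ and then appeal to elliptic regularity to identify its operator domain with the $H^2$-component of $\mH_2^c(\Om)$. This is correct but, as you yourself flag, the domain identification is where all the work hides; the cleanest way to close that gap is to recognise that the boundary conditions \eqref{gaugeperbc''} make $\xi$ a section of a line bundle of Chern number $n$ over $\R^2/\cL$, so $-\Delta_{a_0}$ is a Bochner Laplacian on a closed manifold and elliptic regularity is standard. The paper instead bypasses domain characterisation entirely: it uses that $-\Delta_{a_0}\ge 0$, so self-adjointness follows once one shows the deficiency equation $(-\Delta_{a_0}+1)\xi=0$ (weakly, $\xi\in L^2$) forces $\xi=0$. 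That is done by Kato's inequality $\Delta|\xi|\ge\Re[(\operatorname{sign}\xi)\,\Delta_{a_0}\xi]$, which yields $\Delta|\xi|\ge|\xi|\ge 0$, and a mollifier argument then gives $|\xi|=0$. Their route is shorter and insensitive to the precise $H^2$ description of the domain; yours is more constructive and makes the domain explicit, at the cost of the regularity step you identify as the main obstacle.
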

\begin{proof}
Due to representation of the \eqref{Ksharp} type for $K$ and standard arguments, the question of self-adjointness for $K$ reduces to the same question for $\COVLAP{a_0}$. To prove the latter we proceed as in \cite{rs2}, Theorem X.28. Namely, we use that, by construction and properties \eqref{vortatzero} of $a_n$, $a_0$ is $C^1$ and the fact that 
since $-\COVLAP{a_0} \ge 0$, it suffices to show that $(-\COVLAP{a_0} +1)^*\xi = 0$ implies $\xi =0$, which is equivalent to showing that $(-\COVLAP{a_0} +1)\xi = 0,\ \xi\in L^2,$ (in the weak sense) implies $\xi =0$. Now, we use Kato's inequality $\Delta |\xi| \ge \Re[(\sign \xi) \COVLAP{a_0}\xi]$, where $(\sign \xi)(x)=\bar\xi(x)/|\xi(x)|$ if $\xi(x)\ne 0$ and $(\sign \xi)(x)=0$ if $\xi(x)= 0$, (see e.g.  \cite{rs2}, Theorem X.33). By this inequality, $\Delta |\xi| \ge \Re[(\sign \xi) \COVLAP{a_0}\xi]=|\xi|\ge 0$. Let now $\w_\del\ge 0$ be an approximation of identity and $f_\del :=\w_\del * |\xi|$. Then by the above $\Delta f_\del :=\w_\del * \Delta |\xi|\ge 0$ and therefore $\lan f_\del, \Delta f_\del\ran  \ge 0$. On the other hand, by integration by parts,  $\lan f_\del, \Delta f_\del\ran  \le 0$. Therefore we have $\lan f_\del, \Delta f_\del\ran = 0$, which implies  $ f_\del = 0$. Since  $ f_\del \ra |\xi|$, as $\del\ra 0$, we conclude that $ |\xi| = 0$.    This completes the argument. (For more more general results on self-adjointness of Schr\"odinger type operators on Hermitian vector bundles see \cite{bms}.)
\end{proof}
\bigskip

\subsection{Lower bound on $K_0$ away from vortices} \label{app:lb-out}

\begin{lem}\label{lem:lb-out}
For $R$ sufficiently large, there is $c_1 > 0$ s. t.  for any $w$ satisfying \eqref{gchoice} and  supported  outside the ball of  the radius $R/2$, we have that
\begin{equation}\label{lb-out}
  \| K_0  w\| \geq   c_1 \| w\|_{H^2},
\end{equation}
\end{lem}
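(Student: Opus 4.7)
The plan is to reduce $K_0$ on $\supp w$ to a constant-coefficient operator whose coercivity on the gauge-fixed subspace can be read directly from its Fourier symbol. First I would observe that by the construction of $\eta,\bar\eta$, on $\supp w$ (which lies outside the ball of radius $R/2>2R/5$) we have $\eta\equiv 0$ and $\bar\eta\equiv 1$, hence
\begin{equation*}
\psi_0=e^{in\theta},\qquad a_0=n\nabla\theta,
\end{equation*}
so $|\psi_0|\equiv 1$ and $\nabla_{a_0}\psi_0\equiv 0$ there. In particular, on $\supp w$ the background $(\psi_0,a_0)$ is gauge equivalent to the pure superconductor $(\Psi,A)=(1,0)$ via the map $\xi\mapsto e^{-in\theta}\xi$, which is smooth and single valued on $\R^2\setminus\{0\}$ and unitary on $L^2$.

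Second, after applying this gauge change to the scalar components of $w$ and using $\partial_{a_0}\psi_0\equiv 0$ together with $|\psi_0|^2\equiv 1$ in the explicit formula \eqref{K}, the action of $K_0$ on such $w$ coincides with a constant-coefficient, self-adjoint operator $K_\infty$ whose principal symbol is the scalar $|\xi|^2 I_4$. A direct Fourier computation diagonalizes $K_\infty$: its eigenvalue branches are $|\xi|^2+2\kappa^2$, $|\xi|^2+1$, $|\xi|^2+1$, and $|\xi|^2$, the last one being the Goldstone/gauge direction. The gauge condition \eqref{gchoice}, which after the transformation takes the constant-coefficient form $\Im\tilde\xi-\DIV\alpha=0$, projects out precisely this direction, so by Plancherel
\begin{equation*}
\|K_0 w\|_{L^2}\ge m_\kappa^2\,\|w\|_{L^2}\qquad\text{and}\qquad \langle w,K_0 w\rangle\gtrsim\|w\|_{H^1}^2
\end{equation*}
with $m_\kappa^2=\min(2\kappa^2,1)>0$.

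Third, to upgrade to the $H^2$-bound I would use elliptic regularity. Since the principal symbol of $K_0$ is the scalar $|\xi|^2 I_4$ and its lower-order coefficients are smooth and bounded on $\supp w$ (which avoids the origin, where $a_0$ is singular), the standard Garding-type a priori estimate yields $\|w\|_{H^2}\le C(\|K_0 w\|_{L^2}+\|w\|_{L^2})$. Combining this with the $L^2$-coercivity $\|w\|_{L^2}\le m_\kappa^{-2}\|K_0 w\|_{L^2}$ gives \eqref{lb-out}.

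The main obstacle is the careful handling of the gauge direction: without \eqref{gchoice} the operator $K_0$ carries a zero mode (the Goldstone branch) even at infinity and the claimed bound fails outright. Imposing \eqref{gchoice} removes precisely this mode and produces the uniform mass gap $m_\kappa^2$. A minor technical point is that the gauge transformation $\xi\mapsto e^{-in\theta}\xi$ is only globally defined on $\R^2\setminus\{0\}$, but this causes no difficulty since $\supp w$ is disjoint from the origin.
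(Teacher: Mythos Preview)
Your overall architecture matches the paper's: reduce $K_0$ on $\supp w$ to its value at the pure superconductor, obtain an $L^2$ lower bound modulo the gauge direction, and then bootstrap to $H^2$ via the principal symbol. Your observation that on $\{|x|>R/2\}$ one has \emph{exactly} $\psi_0=e^{in\theta}$, $a_0=n\nabla\theta$ (hence $|\psi_0|=1$, $\nabla_{a_0}\psi_0=0$) is in fact sharper than the paper's $O(e^{-R})$ estimates, though the extra precision is not needed.

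The difference lies in how the gauge direction is eliminated, and here your argument has a soft spot. The paper does not Fourier–diagonalize $K$; instead it passes to the modified operator $K_\#$ obtained by adding the penalty $\int(\Im(\bar\psi_0\xi)-\nabla\!\cdot\alpha)^2$, notes that $\langle w,K_0w\rangle=\langle w,K_\# w\rangle$ on the subspace defined by \eqref{gchoice}, and reads off from the explicit form of $K_\#$ at the superconductor that it is block–diagonal with gap $\min(\kappa^2,1)$. This gives $\langle w,K_0w\rangle\ge c\|w\|^2$ immediately, whence $\|K_0w\|\ge c\|w\|$ by Cauchy--Schwarz; the $H^2$ upgrade is done exactly as you suggest.

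Your route instead asserts a Fourier diagonalization of $K_\infty$ with branches $|\xi|^2+2\kappa^2$, $|\xi|^2+1$, $|\xi|^2+1$, $|\xi|^2$. This is not quite right: since $K$ (unlike $K_\#$) still couples the scalar and vector blocks through the $\frac{i}{2}\psi_0\partial$--type entries even when $\nabla_{a_0}\psi_0=0$, the actual branches are $|\xi|^2+2\kappa^2$, $|\xi|^2+1$, and $\tfrac12\big(2|\xi|^2+1\pm\sqrt{4|\xi|^2+1}\big)$. The lowest of these touches $0$ only at $\xi=0$, but it is not simply ``$|\xi|^2$'', and its eigenvector depends on $\xi$; so the claim that the single first-order real constraint \eqref{gchoice} ``projects out precisely this direction'' at every frequency needs an argument. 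The cleanest way to supply it is exactly the paper's identity $\langle w,K_0w\rangle=\langle w,K_\# w\rangle$ on gauge-fixed $w$, after which the block structure of $K_\#$ gives the gap without any eigenvector bookkeeping. With that one substitution your proof goes through.
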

\begin{proof} In this prove we omit the subindex $0$ in $K_0$. First we prove that $\| K  w\| \geq   c_1 \| w\|$. By the Schwarz inequality it suffices to show that $\lan  w, K  w \ran \geq c_1 \| w\|_{L^2}^2$. To prove the latter inequality we use that for any $w$ satisfying \eqref{gchoice}, $K$ and $K_\#$  induce the same quadratic form, $\lan  w, K  w \ran =\lan  w, K  w \ran$.
 \textbf{Observe} that $\| K  w\| \geq   \lan  w, K_\#  w \ran$ and estimate the r.h.s. of the latter expression. To this end we use the explicit construction of $\psi_0$ and $a_0$  or the estimates \eqref{v-Un-est} which imply that  outside the ball of  the radius $R/2$ 
\[\||\psi_0|^2 -1\|_\infty,\ \|(\partial_{a_0}^*\psi_0)\|_\infty,\  \|(\partial_{a_0}\psi_0)\|_\infty \le Ce^{-R}.\]
and the explicit expression for $K_\#$ which is given by \eqref{Ksharp}, with $\Psi$ and $A$ replaced by $\psi_0$ and $a_0$, to obtain that  outside the ball of  the radius $R/2$,
\begin{equation}\label{K0}
K_\# = \left( \begin{array}{cccc}
-\COVLAP{a_0}  + \frac{1}{2}(\kappa^2 + 1) & \frac{1}{2}(\kappa^2 - 1)\psi_0^2 & 0 & 0 \\
\frac{1}{2}(\kappa^2 - 1)\bar{\psi_0}^2 & \overline{-\COVLAP{a_0}}  + \frac{1}{2}(\kappa^2 + 1) & 0 & 0 \\
0 & 0 & -\Delta + 1 & 0 \\
0 & 0 & 0 & -\Delta + 1
		\end{array} \right) + O(e^{-R}),
\end{equation}
and therefore $\lan  w, K_\#  w \ran \geq c \| w\|^2 .$
 As was argued above this gives, by the Schwarz inequality,
 \[
  \| K_\#  w \| \geq c \| w\|. 
\]
Next,  \eqref{Ksharp} implies that for come $C>0$, $\| K_\#  w \| \geq  \frac{1}{2}\|\Delta w\| - C\| w\|.$ Writing  $\| K_\#  w \| = \delta \| K_\#  w \|+(1-\delta) \| K_\#  w \|$ and applying the second inequality to the first term and the first inequality to the second one and choosing $\delta$ appropriately (say $\delta= \frac{2c}{1+2c+C}$), we arrive at \eqref{lb-out}.
\end{proof}

\subsection{Fibre decomposition of $K_\#$}\label{sec:block}

Now we consider the operator $K_\#$ for the vortex solution $U^{(n)}=(\Psi^{(n)}, A^{(n)})$. We denote the resulting operator by $K^{(n)}_\#$ and present the important decomposition of $K^{(n)}_\#$, which is due to the fact that vortices are gauge equivalent under the action of rotation, i.e.,
$$\Psi(R_\alpha x) = e^{in\alpha}\Psi(x),\ R_{-\alpha}A(R_\alpha x) = A(x),$$
where $R_\al$ is counterclockwise rotation in $\R^2$ through the angle $\al$.
This property induces the following symmetry property of $K^{(n)}_\#$.
Let $\rho_n : U(1) \rightarrow Aut([L^2({\R}^2; {\C})]^4)$
be the representation whose action is given by
\[
  \rho_n(e^{i\theta})
  (\xi, \chi, \al, \beta)(x) =
  (e^{in\theta}\xi, e^{-in\theta}\chi, e^{-i\theta}\al,
  e^{i\theta}\beta) (R_{-\theta}x).
\]
It is easily checked that the linearized operator $K_\#^{(n)}$
commutes with  $\rho_n(g)$ for any $g \in U(1)$.
It follows that $K_\#^{(n)}$ leaves invariant the eigenspaces of $d\rho_n(s)$ for any $s \in i{\R} = Lie(U(1))$. (The representation of $U(1)$ on each of these subspaces is multiple to an irreducible one.)
This results in (fiber) block decomposition of $K_\#^{(n)}$, which is described below.  In particular, the translational zero-modes
each lie within a single subspace of this decomposition.  In what follows we write functions on ${\R}^2$ in polar coordinates, so that
\begin{equation}
\label{eq:polar}
  \cH^c(\R^2): = [L^2({\R}^2; {\C})]^4 =  [L^2_{rad} \otimes
  L^2({\bf S}^1; {\C})]^4
\end{equation}
where $L^2_{rad} \equiv L^2({\R}^{+}, rdr)$. Let $\mathcal{C}$ be the operation of taking the complex conjugate.

\begin{thm}\label{thm:decomp}
\begin{enumerate}[(a)]
\item Let $\cH_m := [L^2_{rad} ]^4$ and define $U : \cH^c(\R^2) \to \cH$, where $ \cH = \bigoplus_{m \in {\bf Z}}\cH_m$, 
so that on smooth compactly supported $v$ it acts by the formula
	\begin{equation*}
		(U v)_m(r) = J_m^{-1} \int_0^{2\pi} \chi_m^{-1}(\theta) \rho_n(e^{i\theta}) v(x) d\theta.
	\end{equation*}
where $\chi_m(\theta)$ are characters of $U(1)$, i.e., all homomorphisms $U(1) \to U(1)$ (explicitly we have $\chi_m(\theta)=e^{im\theta}$) and
\[J_m:\cH_m \ra e^{i(m+n)\theta} L^2_{rad} \oplus   e^{i(m-n)\theta} L^2_{rad} \oplus -i e^{i(m-1)\theta} L^2_{rad} \oplus  i e^{i(m+1)\theta} L^2_{rad} \]
 acting in the obvious way.
Then $U$ extends uniquely to a unitary operator.
	
\item Under $U$ the linearized operator around the vortex, $K_\#^{(n)}$,
decomposes as
\begin{equation}\label{eq:flip}
  UK_\#^{(n)} U^{-1}   = \bigoplus_{m \in {\bf Z}} K_m^{(n)},
\end{equation}
where the operators $K_m^{(n)}$ act on $\cH_m$ as $J_m^{-1}K_\#^{(n)} J_m$.

\item The operators $K_m^{(n)}$ have the following properties:
\DETAILS{passing to a rotated version, $M_m^{(n)}$,
of the operator $K_m^{(n)}$,
\[
  M_m^{(n)} \equiv \left\{ \begin{array}{cc}
            R K_m^{(n)} R^T  &  m \geq 0  \\
            R' K_m^{(n)} (R')^T  &  m < 0
            \end{array} \right.
\]
where
\[
  R = \frac{1}{\sqrt{2}} \left( \begin{array}{cccc}
      1 & 1 & 0 & 0 \\
      -1 & 1 & 0 & 0 \\
      0 & 0 & 1 & 1 \\
      0 & 0 & 1 & -1  \end{array} \right),    \;\;\;\;\;\;\;\;\;\;
  R' = \frac{1}{\sqrt{2}} \left( \begin{array}{cccc}
       1 & 1 & 0 & 0 \\
       1 & -1 & 0 & 0 \\
       0 & 0 & 1 & 1 \\
       0 & 0 & 1 & -1  \end{array} \right),
\]
we have}
%
\begin{equation}\label{eq:flip}
  K_m^{(n)} = R K_{-m}^{(n)}R^T,\ \mbox{where}\  R= \TwoByTwo{Q}{0}{0}{Q}, Q = \TwoByTwo{0}{\cC}{\cC}{0}
\end{equation}
\begin{equation}\label{eq:contspec}
  \sigma_{ess}(K_m^{(n)}) = [\min(1,\lambda), \infty),
\end{equation}
\begin{equation}\label{eq:mon}
 \mbox{for}\ |n|=1\ \mbox{and}\  m \geq 2,\ \quad K_m^{(n)} - K_1^{(n)} \geq 0\ \quad \mbox{with no zero-eigenvalue,}
\end{equation}
\begin{equation}\label{eq:mon}
K_0^{(n)} \geq c > 0\ \quad \mbox{for all}\ \kappa,
\end{equation}
\begin{equation}\begin{split}\label{L1}
K_1^{(\pm1)} &\geq 0 \  \mbox{with non-degenerate zero-mode given by}\\ T &:= (f'-\frac{n(1-a)}{r} f, f' +\frac{n(1-a)}{r} f, 2 n\frac{a'}{r}, 0).
\end{split}\end{equation}
\end{enumerate}
\end{thm}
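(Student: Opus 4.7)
\medskip

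\noindent\textbf{Proof proposal.} The whole theorem is driven by the rotational symmetry already noted in the text, namely that $K_\#^{(n)}$ commutes with the unitary representation $\rho_n$ of $U(1)$. The plan is to use this to perform a Peter--Weyl style decomposition, and then read off the structural properties (symmetry, essential spectrum, zero modes) by explicit computation, with the positivity/monotonicity statements reduced to the corresponding vortex stability results of \cite{gs1}.

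For part (a), I would first verify, by polar coordinates and direct inspection, that $\rho_n(e^{i\theta})v$ takes a vector of $\theta$-frequencies $(e^{i(m+n)\theta}, e^{i(m-n)\theta}, -ie^{i(m-1)\theta}, ie^{i(m+1)\theta})\otimes (\text{radial})$ to $e^{im\theta}$ times the same radial data; this is the content of the map $J_m$. Combined with the classical Fourier series identity $\sum_m J_m^{-1}\int_0^{2\pi}\chi_m^{-1}(\theta)\rho_n(e^{i\theta})v\,d\theta/(2\pi) = v$ (Peter--Weyl for $U(1)$), one obtains both $U^*U=\1$ and $UU^*=\1$ on the dense subspace of smooth compactly supported vectors; unitarity then extends by continuity.

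For part (b), the key is the commutation $K_\#^{(n)}\rho_n(g)=\rho_n(g)K_\#^{(n)}$ for all $g\in U(1)$, stated as easily checked in the text; it follows directly from $\Psi^{(n)}(R_\alpha x)=e^{in\alpha}\Psi^{(n)}(x)$, $R_{-\alpha}A^{(n)}(R_\alpha x)=A^{(n)}(x)$, and the explicit form~\eqref{Ksharp}. Because $U$ is the direct sum of the projections onto the isotypic components of $\rho_n$, this commutation gives $UK_\#^{(n)}U^{-1}=\bigoplus_m K_m^{(n)}$ with $K_m^{(n)}=J_m^{-1}K_\#^{(n)}J_m$, as claimed.

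For part (c) I would handle the four items separately. The symmetry $K_m^{(n)}=RK_{-m}^{(n)}R^T$ is obtained by composing the flip $m\mapsto -m$ in $\theta$-frequencies with the componentwise complex conjugation $\cC$ that the explicit form~\eqref{Ksharp} of $K_\#^{(n)}$ manifestly commutes with (modulo swapping the first two and last two components), yielding the matrix $R$. For the essential spectrum, I would use the exponential convergence $|\Psi^{(n)}|^2\to 1$, $(\partial_{A^{(n)}}\Psi^{(n)}),(\partial_{A^{(n)}}^*\Psi^{(n)})\to 0$ from \eqref{vortdecay}, which reduces $K_\#^{(n)}$ at infinity to a constant-coefficient diagonal operator with eigenvalues $\lambda=\kappa^2$ (in the $\psi$-block, after diagonalizing with $R$) and $1$ (in the $\alpha$-block); after passing to the fiber, each radial operator $K_m^{(n)}$ acquires an added centrifugal term that is relatively compact, so by Weyl's theorem $\sigma_{ess}(K_m^{(n)})=[\min(1,\lambda),\infty)$. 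The positivity and monotonicity statements (\eqref{eq:mon} and \eqref{L1}) are precisely the vortex stability results proved in \cite{gs1}, and I would cite them; the identification of the zero-mode $T$ in the $m=1$ fiber is a direct calculation: apply $U$ to the real translation mode $T_1^{(n)}=((\nabla_{A^{(n)}})_1\Psi^{(n)},B^{(n)}Je_1)$ and read off, using $\Psi^{(n)}=f^{(n)}(r)e^{in\theta}$ and $A^{(n)}=a^{(n)}(r)n\nabla\theta$, that it lands exactly in $\cH_1$ with the four radial components shown. The main obstacle, as usual in such fiber decompositions, is strictly bookkeeping: getting the normalization of $J_m$ and the choice of basis in each fiber aligned so that the symmetry relation \eqref{eq:flip} and the explicit zero-mode formula come out with the signs and conjugations stated; the analytic content (stability, essential spectrum) is either standard or cited.
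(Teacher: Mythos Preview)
Your proposal is correct and, for parts (a) and (b), essentially identical to the paper's argument: the paper also verifies that $\hat v_m=\int_0^{2\pi}\chi_m^{-1}(\theta)\rho_n(e^{i\theta})v\,d\theta/(2\pi)$ lies in $\operatorname{Ran}J_m$, proves isometry via the Fourier identity $\sum_m e^{im(\theta-\phi)}$, and exhibits the adjoint explicitly, while (b) is obtained by pulling $K_\#^{(n)}$ through the integral using the commutation with $\rho_n$. For part (c) the paper gives no argument at all and simply cites \cite{gs1} for all of \eqref{eq:flip}--\eqref{L1}; your sketch of the symmetry relation via componentwise conjugation, of the essential spectrum via the asymptotics \eqref{vortdecay} and Weyl's theorem, and of the zero-mode by applying $U$ to $T_1^{(n)}$ is sound and in fact more informative than what appears here, though the hard positivity statements $K_0^{(n)}\ge c>0$ and $K_1^{(\pm1)}\ge 0$ with nondegenerate kernel do require the analysis in \cite{gs1}, as you correctly acknowledge.
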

\begin{proof} We prove (a) and (b). The properties \eqref{eq:flip} - \eqref{L1} in (c) were proven in \cite{gs1} (the latter paper did not articulate the construction in (a) and (b) explicitely).

A straightforward calculation shows that for $\hat{v}_m =  \int_0^{2\pi} \chi_m^{-1}(\theta) \rho_n(e^{i\theta}) v(x) \frac{d\theta}{2\pi}$, $\rho_n(e^{i\theta})\hat{v}_m = \chi_m(\theta)\hat{v}_m$, from which it follows that $\hat{v}_m$ lies in the range of $J_m$. Therefore $U$ is well-defined. We now calculate that for smooth compactly supported $v$,
\begin{align*}
	&\sum_{m\in\Z} \int_0^\infty \left\|  \int_0^{2\pi} \chi_m^{-1}(\theta) \rho_n(e^{i\theta}) v(x) \frac{d\theta}{2\pi} \right\|^2 rdr \\
	&= \int_0^\infty  \int_0^{2\pi}  \int_0^{2\pi} \left( \sum_{m\in\Z} e^{im(\theta - \phi)} \right)  \overline{\rho_n(e^{i\theta}) v(x)} \rho_n(e^{i\phi}) v(x) rdr d\theta d\phi \\
	&= \int_0^\infty  \int_0^{2\pi}  |\rho_n(e^{i\theta}) v(x)|^2 rdr d\theta = \|v\|^2
\end{align*}
It then follows that $U$ extends to all of $\L^2(\R^2)^4$ with norm $\|U\| = 1$. To show that $U$ is in fact a unitary map, we consider the map $U^* : \mathscr{H} \to \L^2(\R^2)^4$ given by
\begin{align*}
	U^* g = \sum_{m\in\Z} J_m g_m.
\end{align*}
Similar calculations as above show that $U^*$ is indeed the adjoint of $U$ and also has norm $1$. This proves (a).

To prove (b), the essential fact is that $K_\#^{(n)}$ commutes with the $\rho_n$. We have for any $g = Uv \in \mathscr{H}$
\begin{align*}
	(UK_\#^{(n)}U^{-1}g)_m
	&= J_m^{-1} \int_0^{2\pi} \chi_m^{-1}(\theta) \rho_n(e^{i\theta}) K_\#^{(n)}  v(x) d\theta \\
	&= J_m^{-1}  K_\#^{(n)} \int_0^{2\pi} \chi_m^{-1}(\theta) \rho_n(e^{i\theta}) v(x) d\theta \\
	&= (J_m^{-1}  K_\#^{(n)} J_m) g_m.
\end{align*}
This then completes the proof of (b).
\end{proof}

Since, by ~(\ref{eq:contspec}) and \eqref{L1},
$K_1^{(\pm1)}|_{T^{\perp}} \geq \tilde{c} > 0$ and,
 by~(\ref{eq:mon}) and \eqref{L1}, $K_m^{(\pm1)} \geq c' > 0$
for $|m| \geq 2$, this theorem implies that
$K_\#^{(\pm1)} \geq c > 0$ on the subspace of $\cH^c(\R^2)$ orthogonal
to the translational zero-modes.





\begin{thebibliography}{9999}


\bibitem[Abr]{abr}
    A.~A.~Abrikosov,
    On the magnetic properties of superconductors of the second group,
    J. Exp. Theor. Phys. (USSR) \textbf{32} (1957), 1147--1182.

\bibitem[Ahlf]{Ahlfors}
    L.~V.~Ahlfors,
    \emph{Complex Analysis},
    McGraw-Hill, New York, 1979.

\bibitem[ABS1]{abs1}  S. Alama,  L. Bronsard and E. Sandier,
On the shape of interlayer vortices in the Lawrence--Doniach
model.  Trans. AMS, vol. 360 (2008), no. 1, pp. 1--34.
%
\bibitem[ABS2]{abs2} S. Alama, L. Bronsard, and E. Sandier, Periodic Minimizers of the  Anisotropic Ginzburg--Landau Model.  Calc. Var. Partial Differential Equations  vol. 36  (2009),  no. 3, 399--417.


\bibitem[ABS3]{abs3}  S. Alama,  L. Bronsard and E. Sandier, Minimizers of the Lawrence--Doniach functional with oblique magnetic fields, \emph{Comm. Math. Phys.} (to appear).


\bibitem[Al]{Al}
    Y.~Almog,
    On the bifurcation and stability of periodic solutions of the Ginzburg-Landau equations in the plane,
    \emph{SIAM J. Appl. Math.} 61 (2000), 149--171.

\bibitem[AS]{as}   H. Aydi, E. Sandier,
Vortex analysis of the periodic Ginzburg-Landau model, \emph{Ann. Inst. H. Poincar\'e Anal. Non Lin\'eaire}  26  (2009),  no. 4, 1223--1236.


\bibitem[BGT]{bgt}
    E.~Barany, M.~Golubitsky, and J.~Turski,
    Bifurcations with local gauge symmetries in the Ginzburg-Landau equations,
     \emph{Phys. D}  56  (1992), 36--56.

\bibitem[BMS]{bms} M. Braverman, O. Milatovic, and M. Shubin, Essential self-adjointness of Schr\"odinger-type operators on manifolds, Russian Math. Surveys 57:4 (2002), 641 -- 692.

\bibitem[CFKS]{cfks}
H. Cycon, R. Froese, W. Kirsch, B. Simon,
{\it Schr\"odinger Operators with Applications
to Quantum Mechanics and Global Geometry.}
Springer-Verlag (1987).

\bibitem[DFN]{dfn}
B. A. Dubrovin, A.T. Fomenko, S. P. Novikov, \emph{ Modern Geometry}, Springer.


\bibitem[Dut]{dut}
	M.~Dutour,
	Phase diagram for Abrikosov lattice,
	J. Math. Phys.  42  (2001), 4915--4926.

\bibitem[GS1]{gs1} S. Gustafson, I.M. Sigal, The stability of magnetic vortices, Comm. Math. Phys. {\bf 212} (2000) 257-275.


\bibitem[GS2]{gs2}
S. Gustafson, I.M. Sigal,
Effective dynamics of magnetic vortices,
Adv. in Math. {\bf 199} (2006) 448-498.




\bibitem[GST]{gst}
S.~J.~Gustafson, I.~M.~Sigal and T. Tzaneteas,
    Statics and dynamics of magnetic vortices and of Nielsen-Olesen (Nambu) strings,
    J. Math. Phys. 51, 015217 (2010).


\bibitem[Lash]{lash}
    G.~Lasher,
    Series solution of the Ginzburg-Landau equations for the Abrikosov mixed state,
	Phys. Rev. 140 (1965), A523--A528.

\bibitem[Odeh]{odeh}
    F.~Odeh,
    Existence and bifurcation theorems for the Ginzburg-Landau equations,
	J. Math. Phys. 8 (1967), 2351--2356.

\bibitem[OS]{os} Yu. Ovchinnikov and I.M. Sigal, Symmetry Breaking Solutions
to the Ginzburg-Landau Equation, {\it JETP} (2004),  1090--1108.

\bibitem[RSII]{rs2}
M.~Reed and B. Simon, \newblock {\em Methods of Modern Mathematical
Physics, II, Fourier Analysis, Self-Adjointness}. Academic Press, 1975.


\bibitem[SS]{ss}  E. Sandier and S. Serfaty, \textit{Vortices in the Magnetic Ginzburg--Landau Model.} {\sl Progress in Nonlinear Differential Equations and Their Applications, vol. 70.}  Birkh\"auser, Boston, 2007.

\bibitem[Stein]{stein} E.~M.~Stein \textit{Singular Integrals and Differentiability Properties of Functions}. Princeton, NJ: Princeton University Press, 1970.

\bibitem[SW]{steinweiss} E.~M.~Stein and G.~Weiss, \textit{Introduction to Fourier analysis on Euclidean spaces}. Princeton, NJ: Princeton University Press, 1971.




\bibitem[TS]{ts}
 T. Tzaneteas, I.M. Sigal,
 Abrikosov lattice solutions of the Ginzburg-Landau equations, {\it Spectral Theory and Geometric Analysis}, Contemporary Mathematics 535 (2011), 195-213, AMS,  arXiv.




\end{thebibliography}
\end{document}